\documentclass[a4paper,numberwithinsect,cleveref,autoref,thm-restate]{lipics-v2021}
\usepackage{amsthm,amsmath,amsfonts,amssymb,color}
\usepackage{stmaryrd}%
\usepackage{tikz-cd}
\usepackage{tikz}
\usetikzlibrary{arrows,positioning,shapes,decorations,automata,backgrounds,petri,fit,calc}
\usepackage{mathtools}
\usepackage{multicol}
\usepackage{hyperref,cleveref}

\usepackage{todonotes}

\newcommand{\sema}[1]{{\llbracket}#1{\rrbracket}}
\newcommand{\class}[1]{\mathbb{#1}}
\newcommand{\alg}[1]{\textbf{\textsl{#1}}}

\newcommand{\LTL}{\textsf{LTL}}

\newcommand{\LTLfin}{\textsf{LTL}^{\text{\upshape fin}}}

\newcommand{\pLTL}{\ensuremath{\textrm{pLTL}}}

\newcommand{\temporalopsfont}[1]{\texttt{#1}}
\newcommand{\ap}{\temporalopsfont{AP}}
\newcommand{\ltlX}{\temporalopsfont{X}}
\newcommand{\ltlG}{\temporalopsfont{G}}
\newcommand{\ltlGS}{\hat{\temporalopsfont{G}}}
\newcommand{\ltlF}{\temporalopsfont{F}}
\newcommand{\ltlFS}{\hat{\temporalopsfont{F}}}
\newcommand{\ltlU}{\temporalopsfont{U}}
\newcommand{\true}{\temporalopsfont{true}}

\newcommand{\set}[1]{\{#1\}}

\newcommand{\Excl}{\mathrm{Excl}}
\newcommand{\ExclWord}{\mathrm{ExclWord}}

\newcommand{\Ecan}{\mathrm{E_{can}}}
\newcommand{\Dual}{\mathrm{Dual}}
\newcommand{\mergeop}{\mathbin{\bowtie}}

\newcommand{\numOfFS}{\#_{\ltlFS}}
\newcommand{\numOfleaf}{\#_{\text{leaf}}}

\newcommand{\semafin}[1]{\sema{#1}^{\textup{fin}}}

\newcommand{\dom}{\text{dom}}

\newcommand{\words}{\mathcal{W}}
\newcommand{\finwords}{\mathcal{W}^{\textsf{fin}}}
\newcommand{\allwords}{\mathcal{W}^{\textsf{all}}}

\renewcommand{\phi}{\varphi}

\title{Characterizing LTL Formulas by Examples \\ (full version)}

\keywords{Linear Temporal Logic, Examples, Transfinite Words}

\ccsdesc[500]{Theory of computation~Logic and verification}

\relatedversion{A full version with detailed proofs is available at~\cite{CateFOS26arxiv}.}

\author{Balder ten Cate}{ILLC, University of Amsterdam, The Netherlands}{b.d.tencate@uva.nl}{https://orcid.org/0000-0002-2538-5846}{}

\author{Dana Fisman}{Institute for the Theory of Computing,  Stein Faculty of Computer and Information Science, \\ Ben Gurion University, Israel}{dana@bgu.ac.il}{https://orcid.org/0000-0002-6015-4170}{}

\author{Roi Ohayon}{Institute for the Theory of Computing, Stein Faculty of Computer and Information Science, \\ Ben Gurion University, Israel}{roioha@post.bgu.ac.il}{https://orcid.org/0009-0009-2439-4466}{Supported by ISF grant 250721}

\author{Patrik Sestic}{ }{patrik.sestic@posteo.com}{https://orcid.org/0009-0001-1212-6896}{}

\authorrunning{B.~ten Cate and D.~Fisman and R.~Ohayon and P. Sestic} 
\Copyright{B.~ten Cate and D. Fisman and R.~Ohayon and P. Sestic} 

\acknowledgements{We thank Frank Wolter and Raoul Koudijs for their input on early versions.}

\EventEditors{Michal Kouck\'{y} and Daniela Petri\c{s}an}
\EventNoEds{2}
\EventLongTitle{51st International Symposium on Mathematical Foundations of Computer Science (MFCS 2026)}
\EventShortTitle{MFCS 2026}
\EventAcronym{MFCS}
\EventYear{2026}
\EventDate{August 24--28, 2026}
\EventLocation{Paris, France}
\EventLogo{}
\SeriesVolume{386}
\ArticleNo{13}

\begin{document}
\nolinenumbers
\maketitle

\begin{abstract}
We investigate the extent to which Linear Temporal Logic (LTL) formulas can be uniquely characterized by a finite set of labeled examples. 
We consider different types of examples, ranging from
finite words to transfinite words, as well as schematic examples.
In the finite-word setting, we provide a complete classification of basis-restricted LTL fragments that admit such unique characterizations. Next, we show that allowing transfinite words as examples enables finite unique characterizations for large monotone fragments of LTL. Finally, we introduce schematic examples, i.e., patterns that compactly represent a family of finite words, and we show that these enable unique characterization results in the finite setting that were not possible with ordinary finite examples alone. Overall, the work provides a foundational account of the descriptive power of different example types for example-driven specification, debugging, and learning of temporal properties.
\end{abstract}

\section{Introduction}

Linear Temporal Logic (LTL) \cite{Pnueli77} is widely used in formal verification, as a language for specifying requirements on the intended behavior of 
systems. In practice, writing LTL
formulas is an error-prone process. Tools often rely on examples, i.e., concrete words (traces) that
either satisfy or violate the specification, because they are easy to inspect and communicate.
Examples arise naturally in automata-theoretic model checking:
when a finite-state system violates an LTL property, one can extract a finite or an ultimately periodic
counterexample~\cite{Vardi95}. Considerable work studies how to produce
useful counterexamples, including short counterexamples \cite{SchuppanBiere05}.
Examples are also used as an interface for understanding and debugging temporal
specifications \cite{AmmonsEtAl03}. Complementarily, there is a growing literature on
learning temporal specifications from labeled examples~\cite{NeiderGavran18,CamachoMcIlraith19},
and on interactive workflows that refine a specification by querying the user with candidate
behaviors \cite{GavranDarulovaMajumdar20}.

In this paper, we study a foundational question that underlies the above work: how much information about an LTL-formula can be conveyed by a finite set of labeled examples? More specifically, 
\emph{when does there exist a finite set of labeled examples that
uniquely identifies a formula (up to semantic equivalence)?}
We  adopt a learner-independent notion of identifiability: a set of labeled examples
\emph{uniquely characterizes} a target formula within a formula class if no
other formula in the class fits it. This is the classical notion of a
\emph{teaching set} from computational learning theory~\cite{GoldmanKearns95}.
We study the existence of such uniquely characterizing sets of examples, 
for different types of examples, including \emph{finite words}, \emph{$\omega$-words}, \emph{transfinite words}, and \emph{schematic examples}.
These different types of examples turn out to differ in their descriptive power when it comes to distinguishing LTL formulas.

\begin{example}
\label{ex:intro}
The LTL-formula $\ltlF p$ 
is distinguished from $\bigvee_{i\leq n} \ltlX^i p$ by
the example $\emptyset^{n+2} \{p\}$. The same formula
cannot, however, be distinguished from all 
formulas of the form 
$\bigvee_{i\leq n} \ltlX^i p$
using only finitely many labeled examples, if the examples are finite
words, or even $\omega$-words. On the other hand,
the transfinite word $\emptyset^\omega \{p\}$, of length $\omega+1$, satisfies 
$\ltlF p$ while falsifying
$\bigvee_{i\leq n} \ltlX^i p$
for all $n$. Similarly, the 
schematic example
$\emptyset^*\{p\}$ is a positive example for
$\ltlF p$, in the sense that all instances of the schema are words satisfying $\ltlF p$, and it is not a positive example in the same sense for 
$\bigvee_{i\leq n} \ltlX^i p$
for any $n$.
\end{example}

We start from the finite-words setting. Building on prior work \cite{FortinEtAlKR2022UniqueTemporalInstanceQueries}, we show that, in this setting, unique characterizability is the exception rather than the rule: for most fragments of LTL, over finite words, no finite set of positive/negative examples can uniquely characterize a formula up to equivalence. Moving from finite words to $\omega$-words improves the picture, but the improvement is limited. A key shift comes when we go beyond $\omega$-words and consider transfinite words as examples. Transfinite ordinal-indexed words allow us to separate formulas in a more robust way, yielding stronger positive results and, in particular, existence of finite uniquely characterizing sets for meaningful fragments of LTL. Finally, we use these transfinite examples as a stepping stone towards new results in  the finite setting: we introduce \emph{schematic examples}, i.e.,  finite patterns that stand for a family of concrete examples. We show how the additional  power available in the transfinite setting can be compiled into such patterns. This yields unique characterization results that are unattainable with concrete finite examples alone, while still keeping the examples close to the finite descriptions %
familiar to practitioners.

More concretely, our main contributions are:
\begin{enumerate}
    \item In the finite-word setting, we provide a complete classification of 
     fragments $\LTL_O$ with 
    $O\subseteq \{\ltlU,\ltlF,\ltlFS,\ltlX,\wedge,\vee,\neg,\top,\bot\}$, with respect to whether the fragment in question admits finite unique characterizations when interpreted over finite words. There are six maximal such fragments that do admit finite unique characterizations, but they are rather limited in expressive power. In particular, as was observed already in~\cite{FortinEtAlKR2022UniqueTemporalInstanceQueries}, even $\LTL_{\ltlF,\land}$ already does not admit finite characterizations, and neither does $\LTL_{\ltlX,\lor}$.
    \item Moving on to $\omega$-words, we show that, although $\LTL_{\ltlX,\land,\lor,\top,\bot}$ admits finite characterizations consisting of $\omega$-words, other fragments such as 
    $\LTL_{\ltlF,\land}$ and $\LTL_{\ltlG,\lor}$ still do not.
    \item We show that $\LTL_{\ltlFS,\land,\lor,\top,\bot}$ admits finite characterizations consisting of (finitely present\-able) transfinite words. A similar result holds for the dual fragment $\LTL_{\ltlGS,\land,\lor,\top,\bot}$.
    \item Returning to the finite-word setting, we introduce the concept of \emph{schematic examples} and we show that $\LTL_{\ltlFS,\land,\lor,\top,\bot}$ admits finite characterizations by schematic examples.
\end{enumerate}

Due to space constraints, most proofs are deferred to the full version~\cite{CateFOS26arxiv}.
A python library and an interactive demonstration of the concepts and algorithms developed in this paper can be found at 
\url{https://ltl-workbench.onrender.com/}.

\paragraph*{Related work}
The problem of inferring temporal specifications from examples has attracted
considerable attention in recent years. SAT-based and decision-tree-based
approaches have been proposed for learning LTL formulas from positive and
negative examples, aiming to find a small formula consistent with the given
sample~\cite{NeiderGavran18}. This task is often described as separating
positive from negative examples; in our terminology, this means finding a
formula that \emph{fits} the sample. Learning LTL over finite traces (LTLf) is
considered in~\cite{RahaRFN26}. An approach combining passive learning with
syntactic constraints on the space of inferred formulas is pursued
in~\cite{Changjian25}. A different direction studies whether neural networks
can learn semantic aspects of temporal logic, such as producing satisfying
traces for LTL formulas~\cite{HahnEtAl21}.

Our work is related to passive learning from examples, but asks a
different question. Passive learning asks whether, given a sample, there exists
a formula in a given class that fits it, often seeking a small %
such 
formula. In contrast, we study when a given LTL formula can be
\emph{uniquely characterized} by a finite set of labeled examples, relative to a class of formulas. Thus, our
focus is not on synthesizing one consistent explanation, but on understanding
which fragments admit finite samples that rule out all
inequivalent alternatives.

A more foundational line of work studies the computational complexity of
passive learning. It is shown that passive learning of LTL formulas is
NP-complete for almost all standard fragments~\cite{MascleFL23}, and this is
extended to LTL, CTL, and ATL, where the problem remains NP-complete with
unbounded binary operators, while bounded-binary fragments exhibit more
fine-grained differences~\cite{BordaisN025}. These works identify fundamental
algorithmic limits of example-based specification: given a sample, decide
whether some formula in the class fits it. In contrast, our focus is semantic:
given a fragment, we determine whether every formula admits a finite sample
that uniquely identifies it up to equivalence.

A related perspective is that of machine teaching. Adaptive teaching of
parametric LTL (\pLTL) formulas to learners with preferences is studied
in~\cite{XuChenTopcu20}.\footnote{In \pLTL, the $\ltlF$ and $\ltlG$ operators are associated with a parameter bounding their number of time steps.}
This line of work is algorithmic and learner-dependent: examples are chosen with
respect to a given learner model and preference structure. In contrast, our
notion of characterizability is learner-independent: a characterizing set must
identify the target formula semantically among all formulas in the class,
independently of any learning algorithm or preference relation. Thus, our
results isolate structural limits of example-based teaching for LTL.

Another related thread aims to explain LTL formulas using words or abstractions
thereof. For example,~\cite{LiEtAlVISSOFT23LTLTimelines} presents visualizations
of an LTL formula’s behavior in a form close to the ``timeline sketches'' used
by practitioners; see also~\cite{WangGamboaRozierSCP2026WEST}. While such work
focuses on making a given formula more understandable, our goal is to understand
when examples themselves determine the formula.

In the knowledge representation literature, motivated by
exact learnability questions in active-learning settings with membership queries, the question has been studied whether a description logic concept is uniquely characterized by a finite set of positive and negative examples \cite{BalderCQ,DBLP:conf/ijcai/FunkJL22}. This study was further extended to temporalized description logic concepts in~\cite{FortinEtAlKR2022UniqueTemporalInstanceQueries,JungEtAlKR2024UniqueTemporalQueriesOntology}. In this context,~\cite{FortinEtAlKR2022UniqueTemporalInstanceQueries} also studied unique characterizations for fragments of LTL over finite words. We will build on some of the results from~\cite{FortinEtAlKR2022UniqueTemporalInstanceQueries}.

LTL over transfinite words was proposed as a means to represent systems with Zeno-like behaviors and multi-phase executions in~\cite{DemriNowak2006,DemriRabinovich2010}, where it was shown  that the usual automata-theoretic techniques for LTL can be extended suitably to transfinite words, yielding satisfiability and model-checking procedures with similar complexity as over $\omega$-words.

\section{Preliminaries}\label{sec:prelim}

\paragraph*{Transfinite words}

We briefly recall standard notions on ordinals needed in the sequel. 
An ordinal is a well-ordered set, where every non-empty subset admits a least element, and ordinals are identified up to order isomorphism. 
Every ordinal is either $0$, a successor ordinal of the form $\beta+1$, or a limit ordinal. 
Ordinal arithmetic (addition, multiplication, and exponentiation) is defined inductively and is, in general, non-commutative. 
In particular, for any ordinals $\alpha < \beta$ there exists a unique ordinal $\gamma$ such that $\beta = \alpha + \gamma$, and for any $\alpha$ and $\beta > 0$ there exist unique ordinals $\gamma$ and $\delta < \beta$ such that $\alpha = \beta \cdot \gamma + \delta$. 
We refer to standard texts (e.g., Jech (1997)) for further details. We will only be concerned with 
\emph{countable} ordinals, i.e., ordinals $\alpha$ such that $\{\beta\mid \beta<\alpha\}$ is a countable set.

\begin{definition}[$\alpha$-words]\label{def:alpha_words}
Fix a finite set of atomic propositions $\ap$.
    If $\alpha$ is a countable ordinal, then an \emph{$\alpha$-word} is a function 
$w : \alpha \to \mathcal{P}(\ap)$.
\end{definition}

We think of an $\alpha$-word $w$ as a (possibly transfinite) sequence of
valuations $w[0]\,w[1]\,w[2]\,\ldots$, indexed by the ordinals below
$\alpha$, where $w[\beta]$ is the valuation at position $\beta$. If
$\alpha>\omega$, the sequence extends beyond the finite positions.
We refer to $\alpha$-words as \emph{ordinal words}. If $\alpha\ge\omega$, we
also refer to them as \emph{transfinite words}.

We also 
write $|w|$ to denote the ordinal $\alpha$.
For $\beta<|w|$, we write $w[\beta..]$ to denote
the suffix of $w$ starting at position $\beta$. Formally,  $w[\beta..]$ is the $(|w|-\beta)$-word $w'$ such that
$w'[\gamma]=w[\beta+\gamma]$.

For a countable ordinal $\alpha$ we
denote by $\words^{\leq \alpha}$ the set of all $\beta$-words $w$ with $\beta\leq \alpha$. Similarly,
we denote by $\words^{<\alpha}$ and $\words^{=\alpha}$
the set of all ordinal words of length less than $\alpha$, respectively, of length exactly $\alpha$.
We will also write $\finwords$ for $\words^{<\omega}$ and $\allwords$ for the set of all countable ordinal words.

Although $\alpha$-words do not necessarily follow a repeating pattern, we will be particularly interested in $\alpha$-words that do, since, despite their possibly infinite length, they can be represented by finite expressions. We introduce the following word-expression language for their representation.

\begin{definition}[Word-expressions]\label{def:word_expr}
A \emph{word-expression} over $\ap$ is any expression generated inductively by the following rules:
\begin{itemize}
    \item Each $\sigma \subseteq \ap$ constitutes  a $1$-word-expression. It denotes the $1$-word $w_\sigma$ where $w_\sigma[0]=\sigma$.
    \item If $A$ is an $\alpha$-word-expression and $B$ is a $\beta$-word-expression, then 
    $A \cdot B$
    is an $(\alpha + \beta)$-word-expression. It denotes the $(\alpha + \beta)$-word
    $w_{A \cdot B}$
    defined by
    \[
    w_{A \cdot B}[\gamma] =
    \begin{cases}
        w_A[\gamma] & \text{if $\gamma \in \alpha$}\\[4pt]
        w_B[\gamma - \alpha], & \text{otherwise}.
    \end{cases}
    \]
    We sometimes omit the concatenation symbol $\cdot$ when no ambiguity arises.

    \item If $A$ is an $\alpha$-word-expression, then 
    $(A)^\omega$
    is an $(\alpha\cdot\omega)$-word-expression. It denotes the 
    $(\alpha\cdot\omega)$-word
    $w_{(A)^\omega}$
    defined, for all $\gamma < \alpha \cdot \omega$, by
    $w_{(A)^\omega}[\gamma] = w_A[\gamma \bmod \alpha]$.
\end{itemize}
\end{definition}

We say that an $\alpha$-word is \emph{finitely presentable} (or, \emph{regular}) if 
there is an $\alpha$-word expression that denotes it.

\begin{example} 
\label{ex:ordinal-word}
    The word-expression
    $\{q\}^\omega\cdot \{p\} \cdot \{q\}^\omega$ denotes a transfinite word of length $\omega+\omega$ that can be depicted as follows: $\{q\}\longrightarrow\{q\}\longrightarrow\ldots ~~ \{p\}\longrightarrow\{q\}\longrightarrow\{q\}\longrightarrow\ldots$.
\end{example}
We note that, although the definitions above apply to arbitrary countable
ordinals, all transfinite words considered in this paper have length below
$\omega^2$. In particular, they contain only finitely many infinite stretches.

\paragraph*{Linear Temporal Logic}

Let $\ap$ be a set of atomic propositions. 
An \emph{$\LTL[\ap]$ formula} is any expression generated
by the following grammar:
\[
\varphi ::= \top 
  \mid p 
  \mid \neg \varphi
  \mid \varphi_1 \land \varphi_2 
  \mid \ltlX \varphi
  \mid \ltlF \varphi
  \mid \varphi_1\, \ltlU\, \varphi_2
\]

We can define some other logical connectives and temporal operators as abbreviations:
\[\begin{array}{lll}
\bot &:=& \neg\top\\
\end{array} \qquad
\begin{array}{lll}
\varphi_1 \lor \varphi_2 &:=& \neg(\neg\varphi_1 \land \neg\varphi_2) \\
\psi_1 \rightarrow \psi_2 &:=& \neg \psi_1 \lor \psi_2 \\
\end{array} \qquad
\begin{array}{lll}
\ltlFS \psi &:=& \ltlX\ltlF \psi \\
\ltlG \psi &:=& \neg\ltlF\neg \psi
\end{array}\]
We will also use $\ltlX^n\phi$ as shorthand for $\ltlX\cdots \ltlX\phi$ with $n$ occurrences of $\ltlX$, and similarly for 
$\ltlFS^n\phi$.

We will be interested in syntactic fragments of LTL.
For $O\subseteq \{\ltlU,\ltlF,\ltlFS,\ltlX,\land,\lor,\neg,\top,\bot\}$,
let $\LTL_O$ denote the fragment of LTL consisting of formulas built from
atomic propositions using only the connectives in $O$.
We write $\LTL_O[\ap]$ for the restriction of $\LTL_O$ to formulas over
the set of propositions $\ap$.

The semantics of LTL, over arbitrary ordinal words $w$, is given in Table~\ref{tab:ltl-semantics}.

\begin{table}[t]
\[
\begin{array}{@{}l@{\ }l@{\qquad}l@{\ }l@{}}
w \models \top & &

w\models \ltlX\varphi 
  & \text{iff  } |w|>1 \text{ and } w[1..] \models \varphi \\

w \models p 
  & \text{iff  } p \in w[0] &

w\models \ltlF\varphi 
  & \text{iff  } \exists \alpha<|w| \text{ s.t. } w[\alpha..] \models \varphi \\  

w\models \lnot \varphi 
  & \text{iff  } w\not\models \varphi &

w\models \varphi\, \ltlU\, \psi 
  & \text{iff  } \exists\gamma \text{ with } 0\le\gamma<|w| \text{ s.t. } w[\gamma..] \models \psi  \text{ and}\\  

w\models \varphi \land \psi 
  & \text{iff  } w\models \varphi 
    \text{ and } w\models \psi  & &
     \phantom{iff  } \forall \beta \text{ with } 0 \le \beta < \gamma,\ w[\beta..] \models \varphi \\
\end{array}
\]
\caption{Semantics of LTL over ordinal words}
\label{tab:ltl-semantics}
\end{table}

We will denote by $\sema{\phi}$ the set of all  ordinal words $w$ such that $w\models\phi$. We will also write $\semafin{\phi}$ for $\sema{\phi}\cap\finwords$, and, for all
ordinals $\alpha$, $\sema{\phi}^{\leq \alpha}$, 
$\sema{\phi}^{< \alpha}$ and $\sema{\phi}^{=\alpha}$ are defined similarly.
Note that, whether two LTL-formulas are equivalent depends on the class of words over which they are interpreted. This is illustrated by the following examples.

\begin{example}
\label{ex:XFvsFX}
 $\sema{\ltlF\ltlX p}^{\leq \omega}=\sema{\ltlX\ltlF p}^{\leq \omega}$
  but 
  $\sema{\ltlF\ltlX p}\neq \sema{\ltlX\ltlF p}$, as witnessed by the $\omega+\omega$-length word in Example~\ref{ex:ordinal-word}, which satisfies $\ltlX\ltlF p$ and not $\ltlF\ltlX p$.
\end{example}

\begin{example}
     Recall $\ltlG\phi$ is shorthand for 
     $\neg\ltlF\neg\phi$.
     $\semafin{\ltlG\ltlX p}=\semafin{\bot}$ but
     $\sema{\ltlG\ltlX p}^{=\omega}\neq\sema{\bot}^{=\omega}$.
\end{example}

On the other hand, for the fragment $\LTL_{\ltlFS,\land,\lor,\top,\bot}$, there is no difference between finite words and arbitrary ordinal words when it comes to equivalence:

\begin{restatable}{theorem}{thmFiniteInfiniteEquivalence}
\label{thm:finite-infinite-equivalence}
Let $\phi,\psi$ be
$\LTL_{\ltlFS,\land,\lor,\top,\bot}$-formulas. Then
$\sema{\phi}=\sema{\psi}$
iff
$\semafin{\phi}=\semafin{\psi}$.
The same holds for $\LTL_{\ltlX,\land,\neg}$.
\end{restatable}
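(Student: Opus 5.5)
The direction from $\sema{\phi}=\sema{\psi}$ to $\semafin{\phi}=\semafin{\psi}$ is immediate, since $\semafin{\cdot}$ is $\sema{\cdot}$ intersected with $\finwords$. For the converse I argue contrapositively: if some ordinal word $w$ separates $\phi$ and $\psi$, I build a finite word separating them.

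For $\LTL_{\ltlFS,\land,\lor,\top,\bot}$ the plan is a \emph{subsequence} argument. The main step is a lemma about positive formulas $\chi$ in this fragment. Call $v$ a subsequence of $w$ if $v=w\circ f$ for a strictly increasing $f:|v|\to|w|$ with $f(0)=0$. Then:
\begin{enumerate}
\item[(a)] if $w\models\chi$, there is a finite subsequence $v$ of $w$, with $|v|\le$ (number of $\ltlFS$ in $\chi$)$+1$, such that $v\models\chi$;
\item[(b)] if $v$ is a subsequence of $w$ and $v\models\chi$, then $w\models\chi$.
\end{enumerate}
Both are proved by induction on $\chi$.

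For (a), I show more generally that every finite set $S$ of positions of $w$ containing $0$ carries a witness. The induced subsequence uses exactly the positions of $S$. The induction cases are:
\begin{itemize}
\item For atoms, $\top$ and $\bot$, take $S=\{0\}$.
\item For $\land$ and $\lor$, take the union of the witness sets.
\item For $\ltlFS\chi'$, pick a position $\beta>0$ with $w[\beta..]\models\chi'$, take a witness set $S'$ for the suffix, and set $S=\{0\}\cup(\beta+S')$.
\end{itemize}
The point to check is that satisfaction of $\chi$ depends only on the order type of positions reachable strictly later, and $\ltlFS$ only requires ``some strictly later position''. So the induced finite word still satisfies $\chi$; this is proved in the same induction, which is (b) specialised to a union of witness sets.

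Claim (b) is the monotonicity of $\ltlFS$ under inserting positions: a later position in $v$ maps to a later position in $w$, and atoms at position $0$ are preserved because $f(0)=0$.

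Now suppose $w\models\phi$ and $w\not\models\psi$. By (a) there is a finite subsequence $v$ of $w$ with $v\models\phi$. If $v\models\psi$, then (b) gives $w\models\psi$, a contradiction. Hence $v$ is a finite separating word.

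For $\LTL_{\ltlX,\land,\neg}$ I use a \emph{truncation} argument. A formula of $\ltlX$-depth $d$ only inspects positions $0,\dots,d$, and it also detects whether the word is long enough: $\ltlX\top$ fails exactly at length $1$. By induction on formulas I show that $w\models\chi$ iff $w{\restriction}\min(|w|,d+1)\models\chi$, where $w{\restriction}\gamma$ is the prefix of length $\gamma$. The induction step for $\ltlX$ has two cases:
\begin{itemize}
\item If $|w|>1$, the suffix $w[1..]$ has the same prefix of length $d$, so the claim follows from the induction hypothesis.
\item If $|w|=1$, both sides are false.
\end{itemize}
The key observation is that for transfinite $w$, both $|w|$ and $d+1$ exceed the depth, so the truncation $w{\restriction}(d+1)$ is a finite word agreeing with $w$ on every formula of depth at most $d$. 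Taking $d$ to be the maximum depth of $\phi$ and $\psi$, this truncation separates them.

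The main obstacle I expect is in (a): making the induced-subsequence bookkeeping precise for $\lor$. Under a disjunction only one disjunct's witness set is used, and the other disjunct must not be falsely needed. Since the fragment is negation-free, adding positions never destroys satisfaction by (b), so taking unions is safe. I will state (b) first and derive (a) using it.
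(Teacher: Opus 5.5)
Your proposal is correct and follows the paper's proof: your (a) and (b) together are the paper's finite-subword lemma (a finite $u \leq_{\mathrm{hom}} w$ with $u\models\phi$, combined with $\leq_{\mathrm{hom}}$-monotonicity), and for $\LTL_{\ltlX,\land,\neg}$ you use the same prefix-truncation argument, with a prefix longer than the $\ltlX$-depth. Your inductive witness-set construction is more careful than the paper's sketch, which only collects ``minimal witness positions'' of the $\ltlFS$-subformulas; note, though, that the sentence ``every finite set $S$ \ldots carries a witness'' should read ``every finite superset of a witness set is again a witness set'', which is exactly what (b) gives you.
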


For LTL over transfinite words, one can 
 restrict attention to finitely presentable words:

\begin{restatable}{theorem}{thmRegularExamples}
\label{thm:regular-examples}
Let $\phi,\psi$ be
LTL-formulas. If the set 
$\sema{\phi}\setminus\sema{\psi}$ is non-empty, 
it contains a finitely presentable word.
\end{restatable}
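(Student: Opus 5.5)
The plan is to reduce to the existence of a small model via a syntactic/semantic "pumping" argument: starting from an arbitrary countable ordinal word $w\in\sema{\phi}\setminus\sema{\psi}$, we construct a finitely presentable word $w'$ satisfying exactly the same formulas from the finite set $\Sigma$ of subformulas of $\phi\wedge\neg\psi$. Then $w'\in\sema{\phi}\setminus\sema{\psi}$ as well. The relevant invariant is the \emph{type} of a suffix, $t(\beta)=\{\chi\in\Sigma \mid w[\beta..]\models\chi\}$. There are only finitely many types, and the truth of every $\chi\in\Sigma$ at position $\beta$ is determined by the sequence of types from $\beta$ onward. Note that the semantics of $\ltlF$ and $\ltlU$ is purely "future-looking" over an ordinal index set.

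First we reduce the length. By Cantor normal form, $|w|=\omega^{\gamma_1}\cdot n_1+\dots$. The idea is a Ramsey/Büchi-style compression: the type at $\beta$ is determined by $w[\beta]$ together with the types that occur cofinally, and those that occur at all, in the remainder. We make this precise with an automaton-free argument, by induction on $|w|$ written as $\alpha+\delta$, as follows.

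\begin{itemize}
\item Take the last point $\beta_0$ after which the set $S$ of types occurring in $w[\beta..]$ stabilizes. Since the sets decrease as $\beta$ grows and $\Sigma$ is finite, such a point exists, though the tail may be empty or of limit length.
\item If the tail has successor length, peel off its last letter and recurse.
\item If the tail has limit length, every type in $S$ recurs cofinally. Choose representative positions $\beta_\tau$ for each $\tau\in S$, together with the finite word segments between them, and replace the tail by $(u_1\cdots u_k)^\omega$, where each $u_i$ is a shorter word realizing type $\tau_i$ at its start, obtained from the induction hypothesis.
\end{itemize}

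Formally, it is cleaner to show that the map $\beta\mapsto t(\beta)$ on the new word agrees with an appropriate map on the old one. This is done by simultaneous induction on formulas in $\Sigma$, checking the $\ltlX$, $\ltlF$ and $\ltlU$ clauses. For $\ltlU$ we use that a witness $\gamma$ exists in the tail iff the type containing $\chi_2$ appears and all types before it contain $\chi_1$. This holds because in the periodic tail every type in $S$ appears in every period.

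The main obstacle is handling $\ltlU$ and $\ltlX$ correctly across limit positions. A position $\lambda$ with $\lambda$ a limit has no predecessor. Moreover, $w[\lambda..]$ being a suffix of length $|w|-\lambda$ means that $\ltlX$ at the position just before a limit is not defined (there is none), while $\ltlX$ at $\lambda$ refers to $\lambda+1$. The replacement must therefore preserve the types at limit points, which are "first positions" of each $\omega$-block. A more robust alternative route, which I would fall back on, is to invoke the known automata theory for LTL over countable ordinal words \cite{DemriNowak2006,DemriRabinovich2010}. There, $\sema{\phi\wedge\neg\psi}$ is recognized by a Büchi/Choueka ordinal automaton, and nonemptiness of such automata yields an accepting run on a word built by finitely many concatenations and $\omega$-iterations, which is exactly a word-expression as in Definition~\ref{def:word_expr}. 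In either route, the final step is verifying that the constructed object is denoted by a word-expression. This follows since it is built from finitely many letters by $\cdot$ and $(\cdot)^\omega$, with nesting depth bounded by the number of types.
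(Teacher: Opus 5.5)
Your fallback is exactly the paper's proof, but the direct ``pumping'' argument you present as the main route has a genuine gap. The invariant you use, the type $t(\beta)$, is a property of the entire suffix $w[\beta..]$, not of a segment. So it does not compose, yet each step of your recursion treats it as if it did.

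When you peel off the last letter of a successor-length tail and recurse on $w'$ (where $w=w'\sigma$), the types of $w'$ on its own differ from its types inside $w'\sigma$. Every $\ltlX$-subformula is false at the last position of $w'$ but may hold there in $w'\sigma$, and $\ltlF$/$\ltlU$-formulas witnessed at $\sigma$ change too. So the induction hypothesis for $w'$ says nothing about $w$.

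Likewise, ``a shorter word $u_i$ realizing type $\tau_i$ at its start'' has no meaning until you fix what follows $u_i$. In $(u_1\cdots u_k)^\omega$ the type at the start of $u_i$ is determined by $u_i$ together with the infinite remainder $u_{i+1}u_{i+2}\cdots$. That remainder includes the wrap-around from $u_k$ to $u_1$, a boundary context that never occurred in $w$ and that $\ltlX$ and $\ltlU$ can detect. Hence your claim that every type of $S$ appears in every period presupposes the very type assignment you are trying to establish.

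The sets of (cofinally) occurring types are too coarse to break this circle. Your preliminary assertion that $t(\beta)$ is determined by $w[\beta]$ and these sets already fails for $\ltlX p$, which depends on $t(\beta+1)$ itself. Also, the segments between the representatives $\beta_\tau$ are in general not finite (consider words of length $\omega^2$).

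A direct proof needs a finite invariant of \emph{segments} that is compatible with concatenation and with $\omega$-products, combined with a Ramsey factorization. For example, translate $\phi\wedge\neg\psi$ into first-order logic over labelled ordinals and use quantifier-rank-$k$ types. By the composition theorem for ordered sums (Feferman--Vaught, or directly via Ehrenfeucht--Fra\"{\i}ss\'e games), the type of $w'\sigma$ depends only on the type of $w'$ and on $\sigma$. Similarly, the type of $v_0v_1v_2\cdots$ depends only on the types of the $v_i$.

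Transfinite induction on $|w|$ then goes through. At a limit length, take a cofinal $\omega$-sequence and apply Ramsey's theorem to write $w=v_0v_1v_2\cdots$ with all $v_i$ ($i\geq 1$) of one type. Replace $v_0$ and $v_1$ by finitely presentable words $v_0'$, $v'$ of the same types, and conclude $w\equiv_k v_0'\cdot(v')^\omega$.

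This is essentially the argument behind B\"uchi's theorem, which is why the paper does not redo it. The paper translates $\phi\wedge\neg\psi$ into an ordinal automaton via \cite[Lemmas~2.2 and~2.3]{DemriRabinovich2010}. It then invokes the result of B\"uchi and Siefkes \cite{BuchiSiefkes1973} that a nonempty recognizable language of countable-ordinal words contains a rational word, built by concatenation and $\omega$-iteration. Your fallback is precisely this argument and is complete, provided you attribute the rational-witness step to \cite{BuchiSiefkes1973} rather than to \cite{DemriNowak2006,DemriRabinovich2010}.
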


\section{Characterizing Formulas by Examples}

By a \emph{labeled example}
we will mean a pair $(w,lab)$
where $w$ is an ordinal word
and $lab\in\{+,-\}$. An
LTL-formula $\phi$ \emph{fits}
a labeled example $(w,lab)$ if
either $w\models\phi$ and $lab=+$, or $w\not\models\phi$
and $lab=-$. A set of labeled
examples $E$ \emph{uniquely characterizes} an LTL-formula $\phi$ relative to $\LTL_O[\ap]$ (for some set of operators $O$ and some set of atomic propositions $\ap$) if 
(i) $\phi$ fits every labeled example in $E$, and (ii) for every
$\LTL_O[\ap]$-formula $\psi$ that fits all labeled examples in $E$, we have that
$\sema{\psi}=\sema{\phi}$. Uniquely characterizing sets of examples are also known as \emph{teaching sets} in the literature. 
We say that
$\LTL_O[\ap]$ 
\emph{admits finite characterizations} if every 
$\phi\in\LTL_O[\ap]$ is uniquely
characterized, relative to 
$\LTL_O[\ap]$, by a finite set of labeled examples.

\begin{example}
Consider the formula $\phi=p\land \ltlX p$. This formula fits the
following  labeled examples: $(\{p\}\{p\},+)$,
 $(\emptyset\{p\},-)$, and 
 $(\{p\}\emptyset,-)$.
Moreover, $\phi$ is uniquely characterized,
relative to $\LTL_{\ltlX,\land}[\{p,q\}]$, by these labeled examples. On the other hand,
$\phi$ is \emph{not} uniquely characterized by these examples
relative to $\LTL_{\ltlX,\land,\lor}[\{p,q\}]$, since the formula
$\psi = \phi\lor q$ also fits and
$\sema{\phi}\neq\sema{\psi}$.
\end{example}

Observe that positive results are preserved under restricting the class of
formulas, whereas negative results are preserved under enlarging it. Thus,
every positive result for $\LTL_O[\ap]$ also holds for
$\LTL_{O'}[\ap']$ whenever $O'\subseteq O$ and $|\ap'|\le|\ap|$.
Conversely, every negative result for $\LTL_O[\ap]$ also holds for
$\LTL_{O'}[\ap']$ whenever $O\subseteq O'$ and $|\ap|\le|\ap'|$.

These notions can be further refined by restricting attention to words of certain lengths (e.g., finite words, or 
words of length at most $\omega$).  We say that
$\LTL_O[\ap]$ admits finite characterizations \emph{over $\words$} if for every 
$\phi\in\LTL_O[\ap]$ there is a
finite set of labeled examples $E$ consisting of  words in $\words$, such that  (i) $\phi$ fits every labeled example in $E$, and (ii) for every
$\LTL_O[\ap]$-formula $\psi$ that fits all labeled examples in $E$, we have that
$\sema{\psi}\cap\words=\sema{\phi}\cap\words$.

If an LTL-formula $\phi$ admits a uniquely characterizing set of  labeled examples $E$ with respect to a fragment, it means that \emph{every} consistent learner for the given fragment is guaranteed to output $\phi$ (or an equivalent formula) by including the examples in $E$ in the input. A weaker, learner-dependent notion of \emph{characteristic samples} will be discussed in Section~\ref{sec:characteristic-samples}.

\section{Results for LTL over finite words}
\label{sec:finite}

Fortin et al.~\cite{FortinEtAlKR2022UniqueTemporalInstanceQueries} previously studied the existence of finite unique characterizations for fragments of LTL over finite words. In particular,
they showed that $\LTL_{\ltlX,\ltlFS,\land,\top}[\ap]$ admits finite characterizations over $\finwords$. On the other hand, they showed that $\LTL_{\ltlF,\land}[\{p,q\}]$ and $\LTL_{\ltlU,\ltlX}[\{p\}]$ do not admit finite characterizations over $\finwords$. 
\footnote{A stronger positive result is stated in \cite{FortinEtAlKR2022UniqueTemporalInstanceQueries} pertaining to $\LTL_{\ltlFS,\ltlX,\land,\top,\bot}[\ap]$, but the proof is flawed and applies only to the fragment without $\bot$. Also, note that \cite{FortinEtAlKR2022UniqueTemporalInstanceQueries} uses a different variant of the Until operator. } 
They also consider subfragments consisting of \emph{path-shaped} formulas,
in which conjunctions of temporal statements (such as $\ltlF p\land \ltlF q$) are disallowed. For instance,
they showed that
even the path-shaped subfragment of  $\LTL_{\ltlF,\land}[\{p,q\}]$ does not admit finite characterizations.

\begin{example}[\cite{FortinEtAlKR2022UniqueTemporalInstanceQueries}]\label{ex:Fpandq}
    $\LTL_{\ltlF,\land}[\{p,q\}]$ does not admit finite characterizations over $\finwords$, and this holds already even just for path-shaped formulas. Specifically, the formula $\ltlF(p\land q)$ cannot be distinguished
    from all formulas of the form
    $\ltlF(p\land \ltlF(q\land \ltlF(p\land \ltlF(q\land \cdots))))$
    using only finitely many examples,
    where examples are finite words.
\end{example}

In~\cite{FortinEtAlKR2022UniqueTemporalInstanceQueries}, the authors 
subsequently perform an in-depth investigation into 
the existence of finite characterizations for 
various
path-shaped  subfragments of 
$\LTL_{\{\ltlF,\ltlFS,\ltlX,\land,\top,\bot\}}[\ap]$.
We will not consider such path-shaped fragments here, since they are quite restrictive, and we refer
the reader to~\cite{FortinEtAlKR2022UniqueTemporalInstanceQueries} for details.

The aforementioned results raise the question whether there may be other natural syntactic fragments of LTL that admit finite characterizations over $\finwords$. The following 
theorem provides an exhaustive answer to this question,
for fragments obtained by restricting the set of
allowed Boolean and temporal connectives. 
The proof is by an exhaustive case analysis.\looseness=-1

\begin{restatable}{theorem}{thmMainFinite}\label{thm:ltlfin}
Let $\ap$ be any finite set of atomic propositions with $|\ap|\geq 2$, and let $O\subseteq \allowbreak \{\ltlU,\ltlF,\ltlFS,\ltlX,\land,\lor,\neg,\top,\bot\}$ containing at least one temporal operator. Then  $\LTL_
        O[\ap]$ admits finite characterizations over $\finwords$ if and only if $O$ is a subset of one of the following sets:
            \begin{multicols}{3}
\begin{enumerate}
    \item $\{\ltlFS,\ltlX,\land,\top\}$ 
    \item $\{\ltlF,\lor,\top,\bot\}$  
    \item $\{\ltlF,\neg,\top,\bot\}$   
    \item $\{\ltlF,\ltlFS,\ltlX,\top\}$   
    \item $\{\ltlX,\neg\}$   
    \item $\{\ltlFS,\neg\}$   
\end{enumerate}
\end{multicols}

\end{restatable}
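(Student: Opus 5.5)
The proof splits into an \emph{if} direction and an \emph{only if} direction. Both rely on the monotonicity observation above: positive results go down to subsets of $O$ and fewer propositions, and negative results go up.

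\textbf{If direction.} It suffices to give a finite characterization for each of the six maximal sets, for arbitrary finite $\ap$.

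For set~1 we would invoke the result of Fortin et al.\ that $\LTL_{\ltlX,\ltlFS,\land,\top}[\ap]$ admits finite characterizations over $\finwords$.

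For sets~5 and~6 the fragments are essentially unary. A formula in $\LTL_{\ltlX,\neg}$ is equivalent over finite words to $\ltlX^n \ell$, where $\ell$ is a literal, or to the negation of such a formula. Note that $\neg\ltlX$ is not $\ltlX\neg$ on finite words, so the normal form keeps track of where negations sit relative to the length condition. There are only finitely many such semantic shapes per depth $n$. One example word of length about $n+2$, with the right valuation at position $n$, together with a few short words, then pins down $n$, the proposition and the negation pattern. The same argument works for $\ltlFS^n$, using words of the form $\emptyset^k\{p\}\emptyset^m$.

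For set~3, formulas reduce modulo equivalence to finitely many forms: $\ltlF\ltlF=\ltlF$, and $\ltlF\neg\ltlF\phi$ collapses on finite words because the last position is available. Hence each $\LTL_{\ltlF,\neg,\top,\bot}[\ap]$ has only finitely many equivalence classes over $\finwords$, and finitely many distinguishing examples suffice.

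For set~2, a formula is equivalent to $\bot$, to $\top$, or to a disjunction of $p$'s and $\ltlF p$'s. Again there are finitely many classes, so characterizations exist.

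For set~4, we would show that $\ltlF$, $\ltlFS$ and $\ltlX$ compose into a normal form $\ltlX^{a}\ltlFS^{b}p$, possibly with a leading $\ltlF$, or $\top$. The parameters $a$ and $b$ would be read off by words of the form $\emptyset^{k}\{p\}\emptyset^{k'}\{p\}\cdots$, using one positive example and a bounded number of negative ones that rule out every other $(a',b')$. This needs a short argument that a single chain formula is separated from all others by finitely many examples, using monotonicity in $a$ and $b$.

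\textbf{Only if direction.} We must show that every $O$ not contained in one of the six sets yields a negative result. We would enumerate the minimal such $O$, that is, the minimal sets with a temporal operator not contained in any of 1--6, and give a non-characterizability witness for each with $\ap=\{p,q\}$. Known witnesses cover some of these cases: $\{\ltlF,\land\}$ from Example~\ref{ex:Fpandq}, and $\{\ltlU\}$ via the $\{\ltlU,\ltlX\}$ result, adapted if needed. Others we would handle by analogous infinite families:
\begin{itemize}
\item $\{\ltlX,\lor\}$: $\ltlX p$ against $\ltlX p\lor\ltlX^n q$-style formulas.
\item $\{\ltlFS,\lor\}$, $\{\ltlX,\bot\}$, $\{\ltlFS,\bot\}$, $\{\ltlF,\ltlX,\land\}$ and similar combinations, each with its own family.
\item Combinations of $\neg$ with $\land$ or $\lor$, which give full Boolean closure. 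With $\ltlX$ this allows length tests such as $\neg\ltlX\top$, so $\top$ and finite-length disjunctions become problematic.
\end{itemize}
In each case, the pattern is to pick a target formula and a sequence $\psi_n$ that agrees with it on all words of length less than $n$, or on all words with fewer than $n$ alternations, yet is inequivalent to it. Any finite example set is then fooled by some $\psi_n$.

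\textbf{Main obstacle.} The main difficulty is making the case enumeration exhaustive and correct. We need to determine the exact antichain of minimal bad sets, checking for instance $\{\ltlF,\ltlX,\bot\}$ and $\{\ltlF,\ltlFS,\neg\}$, and to construct a tailored fooling family for each. The borderline cases with $\bot$ or $\neg$ are the delicate ones, since the footnote indicates that even published claims went wrong there. We would first compute the minimal bad sets mechanically from the six maximal good sets, then design the families one case at a time.
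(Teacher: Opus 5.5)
Your architecture is the same as the paper's: certify the six maximal sets, then cover every minimal bad set with a fooling family, moving results along inclusion. The genuine gap is that the only-if direction is left as a to-do exactly where ideas are needed.

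The hard minimal bad sets are the ones with $\neg$ but no binary connective: $\{\ltlFS,\ltlX,\neg\}$, $\{\ltlF,\ltlFS,\neg\}$, $\{\ltlF,\ltlX,\neg\}$, $\{\ltlX,\neg,\top\}$ and $\{\ltlFS,\neg,\top\}$. Your template of padding the target with ``$\lor\,\psi_n$'' is unavailable there. Two things are missing:
\begin{itemize}
\item a fooling family transfers to any fragment in which the target and the family are \emph{expressible} over finite words, not merely to supersets of operators;
\item the finite-word definitions $\bot\equiv\neg\top$, $\bot\equiv\neg\ltlF\neg\ltlFS p$ and $\ltlFS\equiv\ltlX\ltlF$.
\end{itemize}
Together these reduce the last four sets to $\bot$ versus $\ltlX^n p$ or $\ltlFS^n p$. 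For $\{\ltlFS,\ltlX,\neg\}$, where $\bot$ is not definable, the paper instead takes $\neg\ltlFS\neg\ltlFS p$ (true exactly on words of length at most $1$) against $\neg\ltlX\neg\ltlX^n p$. These agree on all words of length at most $n+1$.

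For $\{\ltlU\}$, ``adapting'' the $\{\ltlU,\ltlX\}$ result goes the wrong way. Negative results only lift to supersets, and $\ltlX$ is not definable from the non-strict $\ltlU$ used here, since the $\ltlU$-only fragment is stutter-invariant. So you need an independent family over two propositions. Be careful: the family in the paper's appendix, $\psi_{n+1}=p\ltlU(q\ltlU\psi_n)$ against $(p\ltlU q)\ltlU r$, does not work as written. The word $\{p\}\{r\}$ satisfies every $\psi_n$ with $n\geq 1$ but not the target, and the family uses three propositions.

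On the if direction, two of your normal forms are wrong, though both are repairable.

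\textbf{The fragment $\{\ltlX,\neg\}$.} Formulas are not equivalent to $\ltlX^n\ell$ or $\neg\ltlX^n\ell$. For instance, $\neg\ltlX\neg\ltlX p$ is true on every word of length $1$ and false on every word of length $2$, which no formula of that shape is. The correct normal form treats weak next $\neg\ltlX\neg$ as a second operator: $o_1\cdots o_n\alpha$ with $o_i\in\{\ltlX,\neg\ltlX\neg\}$.
\begin{itemize}
\item The words $\emptyset^k$ with $1\le k\le n$ fix $o_1,\ldots,o_n$.
\item A pair of words of length $n+1$ that differ only at position $n$ and get different labels (e.g.\ $\emptyset^n\{p\}$ and $\emptyset^{n+1}$) forces depth exactly $n$. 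A deeper formula sees only the length of such words, and a shallower one never reads position $n$.
\end{itemize}
A single word ``with the right valuation at position $n$'' does not achieve this.

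\textbf{The fragment $\{\ltlF,\ltlFS,\ltlX,\top\}$.} Your parameters $a,b$ are not semantic invariants: $\ltlX\ltlFS p\equiv\ltlFS\ltlFS p$ over finite words. You also omit $\ltlX^n\top$. The actual forms are $\ltlX^n\ltlF p$, $\ltlX^n p$ and $\ltlX^n\top$. The single positive example $(\ap)^{n+1}$ falsifies every formula of depth greater than $n$, which leaves only finitely many candidates to separate pairwise. That observation, not ``monotonicity in $a$ and $b$'', is what makes the characterization finite.
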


A few observations are worth making, based on the above theorem. First of all, any fragment of LTL
containing the $\ltlU$ operator fails to admit 
finite characterizations. Secondly, fragments of LTL that 
include all Boolean connectives (and at least one
temporal operator) do not admit finite characterizations. In fact, in order to admit
finite characterizations, a fragment of LTL must
omit conjunction or disjunction. However, omitting conjunction or disjunction is not enough: $\LTL_{\ltlF,\land}$ does not admit finite characterizations over $\finwords$ (as we already discussed) and the same holds for $\LTL_{\ltlFS,\lor}$ and 
 $\LTL_{\ltlX,\lor}$.

\begin{example}
    $\LTL_{\ltlX,\lor}[\{p\}]$ does not admit finite characterizations over $\finwords$. Indeed, the formula $p$ cannot be distinguished from all formulas of the form $p\lor\ltlX^n p$ ($n>0$) using only finitely many labeled examples, where the examples are finite words. A similar argument
    applies to $\LTL_{\ltlFS,\lor}[\{p\}]$.
\end{example}

As we will see in the next sections, 
omitting negation as a Boolean connective \emph{can} be enough
to obtain finite characterizability
once we go beyond finite words.

\section{Results for LTL over \texorpdfstring{$\omega$}{omega}-words}
\label{sec:omega}

We start by giving an example showing that 
infinite words can help when it comes to 
characterizing LTL-formulas. 

\begin{example}
We saw in \autoref{ex:Fpandq} that $\ltlF(p\land q)$ is
not finitely characterizable in
$\LTL_{\ltlF,\land}[\{p,q\}]$ over $\finwords$. This formula \emph{is}
finitely characterizable
in $\LTL_{\ltlF,\land,\lor,\top,\bot}[\{p,q\}]$ (over $\allwords$) by finitely presentable $\omega$-words, namely
 by the positive example $\emptyset~\{p,q\}~\emptyset^\omega$
 (or simply $\emptyset~\{p,q\}$)
and the negative example $(\{p\}~\{q\})^\omega$.
On the other hand, the formula
$\ltlF(p\land q\land \ltlF (r\land\ltlF(p\land q)))$ is
\emph{not} finitely characterizable by
$\omega$-words (cf.~the proof of Theorem~\ref{thm:omega-counterexample} below).
\end{example}

This may suggest that, by allowing $\omega$-words, we can circumvent the negative results from the previous section. This is partly true. For example, we have:

\begin{restatable}{theorem}{thmXpos}
Let $\ap$ be any finite set of atomic propositions.
$\LTL_{\ltlX,\land,\lor,\top,\bot}[\ap]$ admits finite characterizations over $\words^{\leq\omega}$ and over $\words^{=\omega}$. In fact, every
$\LTL_{\ltlX,\land,\lor,\top,\bot}[\ap]$-formula is uniquely characterized with respect to 
$\LTL_{\ltlX,\land,\lor,\top,\bot}[\ap]$ (over $\allwords$)
by a finite set of finitely presentable labeled examples of ordinal length at most $\omega$.
\end{restatable}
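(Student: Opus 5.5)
The plan is to use monotonicity. Every formula of $\LTL_{\ltlX,\land,\lor,\top,\bot}$ is preserved when propositions are added to positions and when the word is extended. A formula $\phi$ of $\ltlX$-depth $n$ only inspects positions $0,\dots,n$ and whether the word is long enough to have them. So $\phi$ can be characterized by finitely many positive and negative examples describing the boundary of its upward-closed set of models.

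\emph{Step 1 (reduction to length $\le\omega$).} Write $w|_\omega$ for the prefix of $w$ of length $\omega$ when $|w|\geq\omega$. By induction on formulas, $w\models\psi$ iff $w|_\omega\models\psi$ for every formula $\psi$ of the fragment. The reason is that $\ltlX$ only ever moves to a successor position, so a formula is evaluated only at finite positions. Hence equivalence over $\allwords$ coincides with equivalence over $\words^{\le\omega}$.

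\emph{Step 2 (monotonicity).} Define $w\preceq w'$ if $|w|\le|w'|$ and $w[i]\subseteq w'[i]$ for all $i<|w|$. By a straightforward induction, every formula $\psi$ of the fragment is upward closed under $\preceq$ on $\words^{\le\omega}$. The key clause is $\ltlX$: if $|w|>1$ and $w\preceq w'$, then $|w'|>1$ and $w[1..]\preceq w'[1..]$. Also, if $\phi$ has $\ltlX$-depth $n$, then whether $w\models\phi$ depends only on $\min(|w|,n+1)$ and on $w[0],\dots,w[\min(|w|,n+1)-1]$.

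\emph{Step 3 (the examples).} Let $n$ be the $\ltlX$-depth of $\phi$. The positive examples are all $\omega$-words $u\cdot\emptyset^\omega$ and all finite words $u$, with $|u|\le n+1$, that satisfy $\phi$; there are finitely many of these. The negative examples are all finite words $u$ with $|u|\le n$ that falsify $\phi$, together with all words $u\cdot(\ap)^\omega$ with $|u|=n+1$ that falsify $\phi$. All of these are finitely presentable, of length at most $\omega$, and $\phi$ fits them by construction. For the $\words^{=\omega}$ variant, use only the $\omega$-word examples.

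\emph{Step 4 (correctness).} Let $\psi$ fit all the examples, and let $w\in\words^{\le\omega}$.

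If $w\models\phi$, let $w_0$ be the restriction of $w$ to positions below $\min(|w|,n+1)$. If $|w|>n$, use $w_0\cdot\emptyset^\omega$ instead. By Step 2 this word still satisfies $\phi$, so it is a positive example. It is $\preceq w$, so $w\models\psi$ by upward closure of $\psi$.

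If $w\not\models\phi$ and $|w|>n$, let $e$ be $w[0..n]\cdot(\ap)^\omega$. Then $e$ still falsifies $\phi$ by Step 2, so it is a negative example. Since $w\preceq e$, if $w\models\psi$ then $e\models\psi$, a contradiction. The case $|w|\le n$ is identical, using the finite negative example $w$ itself.

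Hence $\sema{\psi}\cap\words^{\le\omega}=\sema{\phi}\cap\words^{\le\omega}$, and Step 1 lifts this to $\allwords$.

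The main obstacle is the negative direction: $\psi$ may have larger depth than $\phi$. This is handled by padding with the maximal letter $\ap$, which uses $\omega$-length essentially. A finite padding would fail because $\psi$ could test positions beyond the padding, as in the $p\lor\ltlX^m p$ counterexample over $\finwords$. The infinite tail of $\ap$ makes each negative example dominate every extension of its prefix.
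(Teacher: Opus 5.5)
Your argument is essentially the paper's: enumerate all prefixes up to the $\ltlX$-depth, pad them with $\emptyset^\omega$ (for positives) and $\ap^\omega$ (for negatives), and conclude by monotonicity under label enlargement and extension. Your Step 1 (only the first $\omega$ positions matter) and your use of prefixes of length $n+1$ rather than $n$ make precise two points the paper's sketch glosses over. With length-$n$ prefixes, $\ltlX p$ would not be separated from $\ltlX p\land\ltlX\ltlX p$.

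There is one slip in the positive case of Step 4. For a finite $w$ with $n<|w|<\omega$, the word $w_0\cdot\emptyset^\omega$ has length $\omega>|w|$, so it is not $\preceq w$. Use the finite positive example $w_0$ itself instead: it satisfies $\phi$ by Step 2 and $w_0\preceq w$. Keep the $\emptyset^\omega$-padded positives only for $|w|=\omega$; they are needed only for the $\words^{=\omega}$ variant.
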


The proof uses a monotonicity argument to show that, with finitely many labeled examples, we can force any fitting formula to depend only on a finite fixed-length prefix.

Unfortunately, other fragments of LTL still do not admit finite characterizations, even when $\omega$-words are allowed. In particular, 

\begin{restatable}{theorem}{thmomegacounterexample}
\label{thm:omega-counterexample}
$\LTL_{\ltlF,\land}[\{p,q,r\}]$ does not admit finite characterizations over $\words^{\leq \omega}$ or $\words^{=\omega}$. \end{restatable}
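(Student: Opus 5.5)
We will show that the formula $\phi = \ltlF(p\land q\land \ltlF(r\land\ltlF(p\land q)))$, already singled out in the example preceding the statement, has no finite uniquely characterizing set relative to $\LTL_{\ltlF,\land}[\{p,q,r\}]$ over words of length at most $\omega$. We handle $\words^{\leq\omega}$ and $\words^{=\omega}$ at the same time.

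As competitors we take the path-shaped formulas
\[
\psi_n = \ltlF\bigl(p\land q\land \ltlF(r\land \chi_n)\bigr), \qquad \chi_n = \ltlF(p\land \ltlF(q\land \ltlF(p\land \ltlF(q\land\cdots)))),
\]
where $\chi_n$ contains $n$ alternations of $p$ and $q$. The intuition, following \autoref{ex:Fpandq}, is that after the $r$ we replace ``eventually $p\land q$'' by ``$p$ and $q$ alternate $n$ times''. We check that each $\psi_n$ is strictly weaker than $\phi$. Any word satisfying $\phi$ has a later position with $p\land q$, and that position witnesses all alternations of $\chi_n$, so $\phi$ implies $\psi_n$. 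Conversely, the finite word $\{p,q\}\{r\}(\{p\}\{q\})^n$, padded with $\emptyset^\omega$ for the $=\omega$ case, satisfies $\psi_n$ but not $\phi$. Hence $\sema{\psi_n}\neq\sema{\phi}$ on both word classes.

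Now suppose a finite set $E$ of examples of length at most $\omega$ fits $\phi$. Every positive example satisfies $\phi$, hence satisfies every $\psi_n$. It remains to find, for a sufficiently large $n$, a $\psi_n$ that is also false on every negative example $w$. Fix such a $w\not\models\phi$ and suppose $w\models\psi_n$ for infinitely many $n$. Since the $\psi_n$ become stronger as $n$ grows, this means $w\models\psi_n$ for all $n$.

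The key step is to show that an $\omega$-word satisfying all $\psi_n$ must satisfy $\phi$. The argument splits into two cases. First, if some $p\land q$ position $i$ is followed by an $r$ position $j$ that is in turn followed by a later $p\land q$ position, then $w\models\phi$, a contradiction. Second, otherwise, consider any pair $i<j$ with $p\land q$ at $i$ and $r$ at $j$. Satisfying $\chi_n$ from $j$ for all $n$ requires infinitely many positions carrying $p$ and infinitely many carrying $q$ after $j$. With a bounded number of indices we would need a uniform witness, but the witnesses may differ with $n$, so we use the fact that there are only finitely many pairs up to the relevant choice. More precisely, we take the earliest $p\land q$ and the earliest subsequent $r$. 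These minimize the constraints, since any later choice only makes $\chi_n$ harder, so the same pair works for all $n$. The suffix after that $r$ then contains infinitely many $p$'s and $q$'s but no $p\land q$. Such a suffix is perfectly consistent, for instance $(\{p\}\{q\})^\omega$.

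This shows that the chosen family fails, so this is the main obstacle. An $\omega$-word such as $\{p,q\}\{r\}(\{p\}\{q\})^\omega$ satisfies every $\psi_n$ but not $\phi$, and would act as a finite distinguishing negative example. That is precisely the point where $\omega$-words help in the $\ltlF(p\land q)$ case. The remedy is to choose competitors that exploit the $r$ so that no infinite tail can help. We use
\[
\psi_n = \ltlF\bigl(p\land q\land \ltlF(r\land \ltlF(p\land \ltlF(q\land \ltlF(r\land \ltlF(p\land \ltlF(q\land\cdots)))))\bigr),
\]
in which the path after the first $p\land q$ cycles through the pattern $r,p,q$ $n$ times, ending with $p\land q$ and including $r$ along the way. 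Since $\phi$ mentions $p\land q$ twice, with $r$ between them, a word satisfying $\phi$ contains a single loop $p\land q \to r \to p\land q$. From this loop we verify that $\phi$ implies $\psi_n$ for every $n$. The intended negative example for each fixed $n$ is a finite word of the form $\{p,q\}(\{r\}\{p\}\{q\})^n\{p,q\}$, with no $r$ between two $p\land q$ positions. Thus $\sema{\psi_n}\neq\sema{\phi}$.

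The remaining task, which is the crux, is a compactness/pigeonhole argument on a fixed $\omega$-word $w\not\models\phi$ satisfying all $\psi_n$. Let $i$ be the earliest $p\land q$ position. Satisfying $\psi_n$ for all $n$ forces $r$ to occur after $i$ and then $p$ and $q$ to recur infinitely often after that $r$. We must derive $p\land q$ after some $r$ beyond $i$, which needs the final conjunct $p\land q$ at the tail of $\psi_n$ to sit after an $r$. Since the tail of $\psi_n$ is $r$-then-$p\land q$, this gives exactly $\phi$. So we define $\psi_n$ with the final step $\ltlF(r\land\ltlF(p\land q))$ replaced by a long $r,p,q$ alternation, and let $n$ exceed the lengths of all finite negatives. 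We then check carefully, via an ultimately periodic analysis of each regular negative $\omega$-word, justified by \autoref{thm:regular-examples}, that $\psi_n$ eventually fails on each negative.
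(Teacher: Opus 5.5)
Your diagnosis of the first family is right: with the alternation placed after the $r$, the $\omega$-word $\{p,q\}\{r\}(\{p\}\{q\})^\omega$ satisfies every $\psi_n$ but not $\phi$. The proposed remedy does not repair this, however.

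\textbf{The second family goes the wrong way.} A formula $\psi_n$ that starts with $p\land q$, passes through an $r$, and ends with $p\land q$ \emph{implies} $\phi$: use its first $p\land q$, its first $r$ and its final $p\land q$. So it is stronger than $\phi$. The converse fails for $n\geq 2$. The word $\{p,q\}\{r\}\{p,q\}\emptyset^\omega$ satisfies $\phi$. But $\psi_2$ needs positions carrying $r$, $p$, $q$, $r$ in weakly increasing order, and this word has a single $r$-position, which carries no $p$. So the word falsifies $\psi_2$, and one such positive example in $E$ eliminates the whole family.

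\textbf{Other problems.} Your intended separating word $\{p,q\}(\{r\}\{p\}\{q\})^n\{p,q\}$ actually satisfies $\phi$, witnessed by positions $0$, $1$ and the last position. The variant in your final paragraph, without the closing $p\land q$, hits the original obstacle again: $\{p,q\}(\{r\}\{p\}\{q\})^\omega$ satisfies all of those formulas but not $\phi$. Also, Theorem~\ref{thm:regular-examples} does not let you assume the negatives are ultimately periodic. The finite set $E$ is arbitrary, and its words may be arbitrary $\omega$-words.

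\textbf{The missing idea} is to weaken the \emph{first} $p\land q$ instead of the last one, as the paper does. Take $\phi_k=\ltlF(p\land\ltlF(q\land\ltlF(p\land\cdots\ltlF(r\land\ltlF(p\land q)))))$, with $k$ alternations of $p$ and $q$ before the $r$.
\begin{itemize}
\item $\phi$ implies $\phi_k$, since a single $p\land q$ position witnesses all the alternations.
\item The word $(\{p\}\{q\})^k\{r\}\{p,q\}\emptyset^\omega$ satisfies $\phi_k$ but not $\phi$, so they are not equivalent over either word class.
\item Let $w\not\models\phi$ with $|w|\leq\omega$. If $w$ has no $p\land q$ position, every $\phi_k$ fails on $w$.
\item Otherwise let $m<\omega$ be the first $p\land q$ position of $w$. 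In any witness for $\phi_k$, the $r$-position lies strictly below $m$, since otherwise $w\models\phi$. Positions below $m$ carry no $p\land q$, so consecutive alternations cannot share a position. Hence the $k$ alternations need $k$ distinct positions below $m$, which is impossible once $k>m$.
\end{itemize}
Choosing $k$ larger than these finite bounds over the finitely many negatives in $E$ yields a $\phi_k$ that fits $E$ but is not equivalent to $\phi$. The point is that the alternation is confined to the finite prefix before a fixed finite position, which no $\omega$-word can stretch. An alternation placed after the $r$, by contrast, can always be realized by an infinite tail.
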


Since $\ltlF\phi$ is definable
as $\phi\lor\ltlFS\phi$, the same holds for $\LTL_{\ltlFS,\land,\lor}[\{p,q,r\}]$.
We will refrain from  providing a complete classification in the style of Theorem~\ref{thm:ltlfin}. Instead, we will next
move on to the setting where examples can be transfinite words --- a setting where, as we will see, more positive results hold.

\section{Results for LTL over transfinite words}
\label{sec:transfinite}

Our main result, in this section, is:

\begin{restatable}{theorem}{thmMain}
\label{thm:main}
Let $\ap$ be any finite set of atomic propositions.
Then $\LTL_{\ltlFS,\land,\lor,\top,\bot}[\ap]$ admits finite characterizations (over $\allwords$).
In fact, every formula in this fragment is uniquely characterized by a finite
set of finitely presentable labeled examples of  length less than $\omega^2$.
\end{restatable}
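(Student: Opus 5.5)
We prove that each formula $\phi$ of $\LTL_{\ltlFS,\land,\lor,\top,\bot}[\ap]$ is uniquely characterized by finitely many finitely presentable labeled examples of length less than $\omega^2$.

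\emph{Normal form.} Every formula in the fragment is monotone and can be put in disjunctive normal form. We first distribute $\land$ and $\lor$, and push $\ltlFS$ through disjunctions using $\ltlFS(\alpha\lor\beta)\equiv\ltlFS\alpha\lor\ltlFS\beta$. This makes $\phi$ (unless it is $\bot$) a disjunction of \emph{tree-shaped conjunctive formulas}, i.e.\ formulas of $\LTL_{\ltlFS,\land,\top}$. Each such conjunct $\tau$ is a finite tree: nodes are labeled by sets of propositions, and edges stand for ``strictly later''.

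\emph{Positive examples.} For each disjunct $\tau$ we build a canonical transfinite word $w_\tau$ that satisfies $\tau$ ``minimally''. A node with label $\sigma$ becomes the letter $\sigma$. Children are placed strictly later. Different subtrees are separated by $\emptyset^\omega$-blocks: a node followed by the sequential concatenation of its children's words, each preceded by $\emptyset^\omega$.

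The $\emptyset^\omega$ padding is what defeats $\ltlX$-like approximations and forces the word to have length below $\omega^2$. Formulas of this monotone fragment cannot count finitely many steps precisely. But they are sensitive to order, so the subtree words must be arranged carefully. One option is to use several orderings of the children, or to interleave them as $(\ldots)^\omega$ so that every order pattern appears. That is, after $\sigma$ put $(\emptyset^\omega w_{\tau_1}\cdots\emptyset^\omega w_{\tau_k})^\omega$, which is still below $\omega^2$ only if nesting is handled. Here I expect a subtle issue: nested $\omega$-repetitions raise the length toward $\omega^d$. I would instead use $\omega$-repetitions only at the top level, or rely on Theorem~\ref{thm:finite-infinite-equivalence} to argue that finite order patterns suffice.

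\emph{Negative examples.} For each disjunct we add the standard frontier: the maximal weakenings of $w_\tau$ that falsify $\phi$. These are obtained by removing a single proposition occurrence, or by collapsing or swapping order. This is in the spirit of frontier and duality constructions for conjunctive queries. The frontier words are again finitely presentable, via Theorem~\ref{thm:regular-examples}.

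\emph{Correctness.} Suppose $\psi$ fits all the examples.

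For soundness, we show $\sema{\phi}\subseteq\sema{\psi}$. Each $w_\tau$ is positive, so each $w_\tau\models\psi$. By a homomorphism lemma, $w_\tau\models\psi$ implies that some disjunct of $\psi$ maps into $\tau$, i.e.\ $\tau\models$ that disjunct. Hence $\phi\models\psi$.

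For completeness, we show the converse: every disjunct $\rho$ of $\psi$ entails $\phi$. If it did not, then $w_\rho$ would be a positive model of $\psi$ that is not in $\sema{\phi}$. By monotonicity, $w_\rho$ homomorphically maps into, i.e.\ lies below, some frontier element, which then satisfies $\psi$ while being labeled negative. That is a contradiction.

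\emph{Main obstacle.} The key step is proving that the frontier is finite and that it covers all non-models, i.e.\ a duality theorem for tree-shaped $\ltlFS$-conjunctions over transfinite words. The $\omega$ padding is precisely what makes the ``strictly later'' homomorphism order behave like a well-quasi-order with finite duals.
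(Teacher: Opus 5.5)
Your overall skeleton matches the paper's: positives whose upward closure is $\sema{\phi}$, negatives that dominate every non-model, and Lemma~\ref{lem:monotonicity} to conclude. However, both ingredients fail as you define them.

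\emph{Positive side.} The $\emptyset^\omega$ padding is harmful. It makes positives larger in $\leq_{\mathrm{hom}}$, so strictly stronger formulas fit them. For $\phi=\ltlFS p$ your word is $w_\tau=\emptyset\cdot\emptyset^\omega\cdot\{p\}$, which satisfies $\ltlFS^n p$ for every $n\geq 1$. Hence $\psi=\ltlFS\ltlFS p$ fits it. Since $\psi$ entails $\phi$, it also fits \emph{every} negative example of $\phi$, whatever those are, although $\emptyset\{p\}$ separates the two. So your ``homomorphism lemma'' is false for padded words. The same phenomenon breaks your completeness step: if the target is $\ltlFS\ltlFS p$ and $\psi$ has the disjunct $\rho=\ltlFS p$, then $\rho$ does not entail the target, yet $w_\rho$ satisfies it.

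Independently, one word per disjunct cannot work even without padding. Every model of $\tau=\ltlFS p\land\ltlFS q$ satisfies one of $\ltlFS(p\land q)$, $\ltlFS(p\land\ltlFS q)$, $\ltlFS(q\land\ltlFS p)$, and each of these entails $\tau$ but is not entailed by it. Several orderings do not suffice either. For instance, $\ltlFS(p\land\ltlFS q)\lor\ltlFS(q\land\ltlFS p)$ is satisfied by $\emptyset\{p\}\{q\}$ and $\emptyset\{q\}\{p\}$ and entails $\tau$, but fails on $\emptyset\{p,q\}$. Your $(\cdots)^\omega$ interleaving only makes the positive weaker. The paper's positives are the finite canonical set $E^+_\varphi$, built with $E^+_{\varphi\land\psi}=E^+_\varphi\mergeop E^+_\psi$ (all merged interleavings). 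These are the $\leq_{\mathrm{hom}}$-minimal models, so $\sema{\varphi}$ is exactly their upward closure. No transfinite words are used on the positive side.

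\emph{Negative side.} The completeness step needs a finite \emph{dual}, not a frontier. That is, a finite set of non-models of $\phi$ such that every non-model lies $\leq_{\mathrm{hom}}$-below one of them; by Theorem~\ref{thm:finite-infinite-equivalence}, finite non-models suffice. Weakenings of $w_\tau$ cannot do this, since non-models need not be comparable with $w_\tau$. For $\phi=\ltlFS p$ over $\{p,q\}$, the formula $\ltlFS p\lor q$ fits your positive and is false on all weakenings of $w_\tau$, since no letter in them contains $q$. Yet $\{q\}$ separates it from $\phi$.

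Constructing this dual and proving it finite is the real content of the theorem (Lemma~\ref{lem:max-negative-examples}), and you leave it as the ``main obstacle''. In the paper it is built recursively:
\begin{itemize}
\item first a block $\ExclWord(\sigma_1,\ldots,\sigma_k)=(B_1\cdots B_t)^\omega$ of maximal letters avoiding all pending first letters of the canonical words;
\item then a letter $\sigma$ that consumes some of those first letters;
\item then recursion on the remaining suffixes.
\end{itemize}
The recursion depth is bounded by the total length of $E^+_\varphi$, which is where the finitely many $\omega$-segments and the bound $<\omega^2$ come from. So the transfinite material belongs in the negative examples, filled with large letters rather than $\emptyset$.

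Finally, Theorem~\ref{thm:regular-examples} cannot certify that a particular constructed set of words is finitely presentable. It only says that a nonempty $\sema{\phi}\setminus\sema{\psi}$ contains some regular word.
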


Note that, since $\ltlF\phi$ is definable
as $\phi\lor\ltlFS\phi$, the result extends
immediately to $\LTL_{\ltlF,\ltlFS,\land,\lor,\top,\bot}$.

\begin{example}\label{ex:transfinite-characterization}
Consider again the formula $\ltlF(p\land q\land \ltlF (r\land \ltlF(p\land q)))$. We saw earlier that this formula cannot be uniquely characterized
 w.r.t. $\LTL_{\ltlF,\land}[\{p,q,r\}]$ over $\words^{\leq \omega}$ by finitely many labeled
 examples (of length at most $\omega$). 
 The above theorem tells us that the same formula \emph{can} be 
 uniquely characterized using transfinite examples. Indeed, it is uniquely characterized by
$S = \bigl(E^+ \times \{+\},\; E^- \times \{-\}\bigr)$,
where

\smallskip 
\hspace{-6mm}
    \begin{tabular}{@{}l@{}l@{}}
        $\begin{aligned}
        E^+ = \{&
        \{p,q\}\{r\}\{p,q\},\;
        \{p,q\}\{p,q,r\},\;
        \{p,q,r\},\\
        &\emptyset\{p,q\}\{r\}\{p,q\},\;
        \emptyset\{p,q\}\{p,q,r\},\;
        \emptyset\{p,q,r\}
        \}, \\ \phantom{foo}
        \end{aligned}$
        &
        $\begin{aligned}
        E^- = \{&
        \{p,q\}^{\omega}\cdot(\{p,r\}\{q,r\})^{\omega},\\
        &(\{p,r\}\{q,r\})^{\omega}\cdot\{p,q\}^{\omega}\cdot(\{p,r\}\{q,r\})^\omega,\\
        &(\{q,r\}\{p,r\})^{\omega}\cdot\{p,q\}^{\omega}\cdot(\{p,r\}\{q,r\})^\omega
        \}.
        \end{aligned}$    
    \end{tabular}
\end{example}

The examples above illustrate why words of length below $\omega^2$ suffice.
Intuitively, each level of nesting of the $\ltlF$ operator requires at most one
additional $\omega$-segment in the negative examples. Since every formula has finite
nesting depth, finitely many $\omega$-segments always suffice.

Theorem~\ref{thm:main} is optimal, as finite characterizations are lost when negation is added to the fragment.

\begin{restatable}{theorem}{thmXFFSneg}
$\LTL_{\ltlX,\wedge,\neg}[\{p\}]$,
$\LTL_{\ltlF,\wedge,\neg}[\{p\}]$ and
$\LTL_{\ltlFS,\wedge,\neg}[\{p\}]$ 
do not admit finite characteri\-zations over $\allwords$.
\end{restatable}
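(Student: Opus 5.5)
The plan is a pigeonhole argument on the Boolean algebra of formulas. It uses only that each of the three fragments is closed under all Boolean connectives and contains infinitely many pairwise inequivalent formulas. Since $\neg$ and $\land$ are available, $\bot$ is expressible as $p\land\neg p$, and disjunction is expressible as $\phi\lor\psi:=\neg(\neg\phi\land\neg\psi)$. I will show that the target formula $\bot$ (equivalently $p\land\neg p$) has no finite uniquely characterizing set over $\allwords$.

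Fix a finite set $E$ of labeled examples fitting $p\land\neg p$. All of them must be negative, since $p\land\neg p$ has no models. Map each formula $\phi$ of the fragment to its truth pattern $\{w\in E \mid w\models\phi\}\subseteq E$. There are at most $2^{|E|}$ patterns. Hence, given infinitely many pairwise inequivalent formulas (over $\allwords$), two of them, say $\phi_1$ and $\phi_2$, have the same pattern. Now set
\[
\psi \;:=\; \neg\bigl(\neg(\phi_1\land\neg\phi_2)\land\neg(\neg\phi_1\land\phi_2)\bigr),
\]
the symmetric difference of $\phi_1$ and $\phi_2$, which lies in the fragment. Because $\phi_1$ and $\phi_2$ agree on every word of $E$, $\psi$ is false on all of $E$, so $\psi$ fits every example in $E$. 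Because $\sema{\phi_1}\neq\sema{\phi_2}$, $\psi$ is satisfiable, so $\sema{\psi}\neq\sema{p\land\neg p}$. Thus $E$ does not uniquely characterize $p\land\neg p$, and since $E$ was arbitrary, no finite characterization exists.

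It remains to exhibit, for each fragment over $\{p\}$, an infinite family of pairwise inequivalent formulas. It suffices to separate them on finite words, which are among the words of $\allwords$.
\begin{itemize}
\item For $\LTL_{\ltlX,\wedge,\neg}$, take $\ltlX^n p$. For $m<n$, the finite word $\emptyset^{m}\{p\}^{n-m+1}$ satisfies $\ltlX^n p$ but not $\ltlX^m p$.
\item For $\LTL_{\ltlFS,\wedge,\neg}$, take $\ltlFS^n\top$. Since $\ltlFS$ is strict, on finite words $\ltlFS^n\top$ holds iff the length is at least $n+1$. Words of different lengths therefore separate the family.
\item For $\LTL_{\ltlF,\wedge,\neg}$, the reflexive $\ltlF$ cannot count positions, so I would count alternations instead. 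Define $\chi_0:=\top$ and
\[
\chi_{k+1}:=\ltlF\bigl(p\land\ltlF(\neg p\land\chi_k)\bigr).
\]
Then $\chi_k$ says that there are positions $\beta_1\le\gamma_1\le\beta_2\le\gamma_2\le\cdots$ where $p$ holds at each $\beta_i$ and fails at each $\gamma_i$. Equality is impossible, since $p$ and $\neg p$ cannot hold at the same position, so the chain is strictly increasing. The finite word $(\{p\}\emptyset)^k$ satisfies $\chi_k$ but not $\chi_{k+1}$, because it has only $k$ changes from $p$ to $\neg p$.
\end{itemize}

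The expected main obstacle is the $\ltlF$ case: verifying carefully that $(\{p\}\emptyset)^k\not\models\chi_{k+1}$. I would do this by a short induction on $k$ over suffixes, counting the $p\to\neg p$ alternations available in a suffix. The rest is routine. All three families, and hence all the formulas $\psi$ produced, lie in the respective fragments over the single proposition $p$, which gives the statement for $\LTL_{\ltlX,\wedge,\neg}[\{p\}]$, $\LTL_{\ltlF,\wedge,\neg}[\{p\}]$ and $\LTL_{\ltlFS,\wedge,\neg}[\{p\}]$.
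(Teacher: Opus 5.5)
Your argument is correct, after two small fixes noted below, but it takes a different route from the paper.

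The paper fixes the target $p$. For a given finite sample it builds an explicit formula $\varphi$ that diagonalizes against the negative examples:
\begin{itemize}
\item for $\ltlX$, a conjunction of $\ltlX^{i+1}p$ or $\ltlX^{i+1}\neg p$, chosen opposite to the letter of $w_i$ at position $i+1$;
\item for $\ltlF$, the formula $\phi_{=k}$ stating that $p$ changes truth value exactly $k$ times, with $k$ different from the alternation counts of all negative examples;
\item for $\ltlFS$, a reduction to the $\ltlF$ case via $\ltlF\phi\equiv\phi\lor\ltlFS\phi$.
\end{itemize}
Then $p\lor\varphi$ fits the sample but is not equivalent to $p$.

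You instead take the target $\bot$ and combine a pigeonhole argument on truth patterns over the sample with Boolean closure, via the symmetric difference of two inequivalent formulas that agree on the sample. This is more uniform and more general. The only fragment-specific input is an infinite family of pairwise inequivalent formulas. You never need to inspect the (possibly transfinite) negative examples themselves, such as their letters at particular positions or their alternation counts. The argument also works for any target $\theta$: use $\theta\oplus(\phi_1\oplus\phi_2)$ in place of $\phi_1\oplus\phi_2$. So no formula of a Boolean-closed fragment with infinitely many inequivalent formulas has a finite characterization by word examples. The paper's route, in return, is explicit: it reads the fitting non-equivalent formula directly off the sample and shows failure already for the simple target $p$.

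Fix the following:
\begin{itemize}
\item Your separating word in the $\ltlX$ case is wrong. In $\emptyset^{m}\{p\}^{n-m+1}$, position $m$ carries $\{p\}$, so the word also satisfies $\ltlX^m p$. Use $\emptyset^{n}\{p\}$ instead; it satisfies $\ltlX^n p$ and falsifies $\ltlX^m p$ for every $m<n$.
\item $\top$ is not a connective of these fragments, so $\ltlFS^n\top$ and $\chi_0=\top$ are not literally in them. Replace $\top$ by $\neg(p\land\neg p)$.
\end{itemize}
The step you flagged as the main obstacle needs no induction. $\chi_{k+1}$ requires $2k+2$ distinct positions $\beta_1<\gamma_1<\cdots<\beta_{k+1}<\gamma_{k+1}$, while $(\{p\}\emptyset)^k$ has only $2k$ positions.
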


\medskip
\noindent
Moreover, extending the fragment $\LTL_{\ltlF,\ltlFS,\wedge,\vee, \top, \bot}$ with the $\ltlX$ operator also destroys finite characterizability.

\begin{restatable}{theorem}{thmFXandOr}
\label{thm:FX-and-or-no-finite-char}
$\LTL_{\ltlF,\ltlX,\wedge,\vee}[\ap]$
does not admit finite characterizations over $\allwords$.
\end{restatable}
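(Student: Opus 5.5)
The plan is to exhibit one formula $\phi\in\LTL_{\ltlF,\ltlX,\wedge,\vee}[\ap]$ and an infinite family $(\psi_n)_{n\ge 1}$ of formulas in the same fragment with three properties:
\begin{itemize}
\item each $\psi_n$ is inequivalent to $\phi$ over $\allwords$;
\item every ordinal word $w$ classified differently by $\phi$ and $\psi_n$ must have a specific shape depending on $n$;
\item no single word has that shape for infinitely many $n$.
\end{itemize}
Given these, let $E$ be any finite set of labeled examples that $\phi$ fits. Each $w\in E$ distinguishes $\phi$ from only finitely many $\psi_n$, so some $\psi_n$ with $n$ large also fits $E$. Hence $E$ does not uniquely characterize $\phi$. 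By the monotonicity remark in Section~3, it suffices to do this for the smallest $\ap$ for which it works, ideally $\ap=\{p\}$ or $\{p,q\}$.

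To make the third property hold, I would take $\psi_n=\phi\vee\chi_n$ with $\chi_n\not\models\phi$. Then every positive example of $\phi$ is automatically a positive example of $\psi_n$, and only the negative examples of $\phi$ can exclude $\psi_n$. So I need $\chi_n$ such that every fixed word falsifying $\phi$ satisfies $\chi_n$ for only finitely many $n$. The mechanism I would use is that $\ltlX$-chains cannot cross limit ordinals. Every position of a countable word has the form $\lambda+k$ with $\lambda$ zero or a limit and $k<\omega$. The formula $\ltlF\ltlX^n\theta$ then says that $\theta$ holds at some position $\lambda+k$ with $k\ge n$, inside a block that is already $n$ steps past its last limit. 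Any fixed word has only finitely many relevant configurations once $\phi$ forbids the patterns that $\ltlF$ could otherwise repeat. The design goal is therefore a $\phi$ which, roughly, says ``some pattern $\theta$ occurs with enough room inside one $\omega$-block''. Then $\chi_n:=\ltlF(\ltlX^n\theta')$ is a weakened pattern that is harmless for small offsets but becomes strictly stronger than what $\phi$ excludes only at offset $n$.

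The steps, in order, are as follows.
\begin{enumerate}
\item Fix $\phi$ and $\chi_n$ concretely.
\item Show $\chi_n\not\models\phi$ by exhibiting a finitely presentable word, using Theorem~\ref{thm:regular-examples} only as a sanity check. This shows $\sema{\psi_n}\neq\sema{\phi}$.
\item Prove the key lemma: for every word $w$ with $w\not\models\phi$, the set $\{n : w\models\chi_n\}$ is finite.
\item Conclude by the pigeonhole argument of the first paragraph.
\end{enumerate}
For step 3, I would analyse each $\omega$-block $[\lambda,\lambda+\omega)$ of $w$ separately. The falsity of $\phi$ should force every block that contains the relevant pattern to be bounded in a uniform way. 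I would then argue that a word witnessing $\chi_n$ for infinitely many $n$ yields, by taking a large enough offset inside a single block, a witness for $\phi$. I would use Theorem~\ref{thm:main} as a guide: without $\ltlX$, transfinite negative examples suffice. The proof should therefore show precisely where $\ltlX$ breaks that argument, namely that $\ltlX^n$ distinguishes offset $n$ from all larger offsets within a block, which no $\ltlFS$-nesting can do.

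The main obstacle is step 1, the choice of $\phi$. Naive choices fail. For $\phi=\ltlF(p\wedge\ltlX p)$ and $\chi_n=\ltlF(p\wedge\ltlX^n p)$, the single negative word with $p$ exactly at positions $\{0\}\cup\{3,5,7,\dots\}$ satisfies every $\chi_n$ with $n\ge2$, so this family is ruled out by one example. The requirement is that falsity of $\phi$ must genuinely bound offsets within a block. I would therefore first try a $\phi$ that combines an $\ltlF$ with a length requirement, for instance $\phi=\ltlF(p\wedge\ltlX(\cdots))$ paired with $\chi_n$ demanding $p$ at offset at least $n$ after the preceding limit. If that fails, I would try the dual form $\psi_n=\phi\wedge\chi_n$, where $\chi_n$ is satisfied by every fixed positive example of $\phi$ for large $n$. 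In that form the difficulty is reversed: one must rule out positive examples such as $\emptyset^\omega\{p\}$ that defeat every $n$ simultaneously.
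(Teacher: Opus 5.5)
Your overall framework is sound and matches the paper's: with $\psi_n=\phi\vee\chi_n$ and $\chi_n\not\models\phi$, only negative examples of $\phi$ can rule out $\psi_n$. It therefore suffices that each word falsifying $\phi$ satisfies only finitely many $\chi_n$, and pigeonhole finishes. The gap is that the proposal stops exactly where the mathematical content begins. You never fix $\phi$ and $\chi_n$, so the key lemma of step~3 is neither stated concretely nor proved. What remains is a list of tentative directions, one of which (the conjunctive form $\phi\wedge\chi_n$) you already flag as problematic.

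Your design heuristic also points the wrong way. Requiring that $\phi$ ``forbid a pattern with enough room inside a block'' does not bound anything, as your own failed attempt with $\ltlF(p\wedge\ltlX p)$ shows. What is needed is that the falsity of $\phi$ makes the $\chi_n$ mutually incompatible, or bounds the number of occurrences of the pattern that $\chi_n$ places at offset $n$.

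The paper takes $\phi=\ltlF(p\wedge q)$ and $\chi_n=p\wedge\ltlX p\wedge\cdots\wedge\ltlX^n p\wedge\ltlX^{n+1}q$, anchored at position $0$ rather than under an $\ltlF$. The word $\{p\}^{n+1}\{q\}$ shows $\psi_n\not\equiv\phi$. The key lemma is one line: if $m<n$ and $w\models\chi_m\wedge\chi_n$, then position $m+1\le n$ carries $p$ (from $\chi_n$) and $q$ (from $\chi_m$), so $w\models\phi$. Hence each negative example of $\phi$ satisfies at most one $\chi_n$, and no reasoning about limit ordinals or $\omega$-blocks is needed at all.

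If you prefer to keep your offset idea, it can also be completed, e.g.\ over $\{p\}$ with $\phi=\ltlF(p\wedge\ltlX\ltlF p)$ and $\chi_n=\ltlF\ltlX^n p$. A word falsifying $\phi$ has at most one $p$-position, say $\lambda+k$ with $\lambda$ zero or a limit, and then satisfies $\chi_n$ exactly for $n\le k$. Meanwhile $\emptyset^n\{p\}$ witnesses $\psi_n\not\equiv\phi$.
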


In the remainder of this section, we 
describe the proof of Theorem~\ref{thm:main}.
It relies on viewing formulas through
their minimal witnesses and how these witnesses propagate across words.
To formalize this idea, we introduce a notion of \emph{structural embedding}
between words, captured by an order-preserving, label-monotone \emph{homomorphism}.

\begin{definition}[Homomorphism]
A \emph{homomorphism} from an $\alpha$-word $w$ to a $\beta$-word $w'$ is a function $h : \alpha \to \beta$ such that:
\begin{enumerate}
    \item \label{cond1} $h(0) = 0$;
    \item \label{cond2} $\forall i < \alpha$, $w[i] \subseteq w'[h(i)]$; and
    \item \label{cond3} $\forall i < j < \alpha$, $h(i) < h(j)$;
\end{enumerate}
\end{definition}
When such a homomorphism exists, we write $w \leq_{\mathrm{hom}} w'$. If condition~(\ref{cond1})  is dropped, we say that $h$ is a \emph{weak homomorphism} and
write $w \leq_{\mathrm{hom}^+} w'$.
For a set of words $E$, we denote by 
$E^\uparrow$ the \emph{upward closure} of $E$, i.e.,
$\set{\, w \mid \exists\, e \in E : e \leq_{\hom} w}$.

A central property of the fragment is that satisfaction is preserved
under such embeddings. Intuitively, enlarging labels and stretching
positions cannot destroy the truth of formulas built from
$\ltlF$, $\ltlFS$, and Boolean connectives.

\begin{restatable}[Monotonicity w.r.t.\ $\leq_{\hom}$]{lemma}{monotone}\label{lem:monotonicity}
For every $\varphi \in \LTL_{\ltlFS,\land,\lor,\top,\bot}[\ap]$
and ordinal words $w,w'$,
if $w \models \varphi$ and $w \leq_{\hom} w'$, then $w' \models \varphi$.
\end{restatable}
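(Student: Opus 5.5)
Our approach is structural induction on $\varphi$, proving the statement simultaneously for all pairs of ordinal words $w,w'$ and all homomorphisms $h$ from $w$ to $w'$. The key auxiliary observation concerns suffixes: a homomorphism restricted to a suffix is again a homomorphism, once shifted. Let $h$ be a homomorphism from $w$ to $w'$ and let $\beta<|w|$. Define $h_\beta(\gamma)$ as the unique ordinal $\delta$ with $h(\beta)+\delta = h(\beta+\gamma)$. This is well defined because $h(\beta+\gamma)\ge h(\beta)$ by strict monotonicity, and ordinal subtraction exists by the fact recalled in the preliminaries.

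We then check that $h_\beta$ is a homomorphism from $w[\beta..]$ to $w'[h(\beta)..]$:
\begin{itemize}
\item $h_\beta(0)=0$ holds trivially.
\item The label condition holds because $w[\beta..][\gamma]=w[\beta+\gamma]\subseteq w'[h(\beta+\gamma)]=w'[h(\beta)..][h_\beta(\gamma)]$.
\item Strict monotonicity of $h_\beta$ follows from that of $h$, together with left-cancellation and strict monotonicity of ordinal addition in the right argument: $h(\beta)+\delta<h(\beta)+\delta'$ iff $\delta<\delta'$.
\item The map lands in $|w'|-h(\beta)$, since $h(\beta+\gamma)<|w'|$.
\end{itemize}
This lemma is the one place where ordinal arithmetic needs care, and I expect it to be the main (though mild) obstacle.

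With the lemma in hand, the induction runs as follows.
\begin{itemize}
\item The cases $\top$ and $\bot$ are trivial.
\item For an atom $p$: $p\in w[0]\subseteq w'[h(0)]=w'[0]$.
\item Conjunction and disjunction follow immediately from the induction hypothesis applied with the same $h$. Negation is absent, which is exactly why monotonicity can hold.
\item For $\ltlFS\psi=\ltlX\ltlF\psi$, first unfold the semantics. $w\models\ltlFS\psi$ iff $|w|>1$ and there is $\alpha<|w|-1$ with $w[1..][\alpha..]\models\psi$, i.e.\ $w[1+\alpha..]\models\psi$. Thus $w\models\ltlFS\psi$ iff some $\beta$ with $1\le\beta<|w|$ satisfies $w[\beta..]\models\psi$. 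Given such a $\beta$, set $\beta'=h(\beta)$. Strict monotonicity gives $\beta'>h(0)=0$, so $\beta'\ge1$ and in particular $|w'|>1$. The suffix lemma yields a homomorphism from $w[\beta..]$ to $w'[\beta'..]$, so the induction hypothesis gives $w'[\beta'..]\models\psi$. Hence $w'\models\ltlFS\psi$.
\end{itemize}

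It is important that the induction hypothesis is quantified over all words and homomorphisms, since it is applied to suffixes. It is also worth noting that condition (\ref{cond1}) is used essentially in the atom case. Weak homomorphisms would break that case, which is consistent with $\ltlX$ being excluded from the fragment: shifting would be allowed but the first letter could move. The strictness of $h$ is what guarantees that the witness position for $\ltlFS$ stays strictly after position $0$.
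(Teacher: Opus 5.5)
Your proof is correct and follows the paper's argument essentially step for step. The paper also uses structural induction over all words and homomorphisms, uses $h(0)=0$ for atoms, and in the $\ltlFS$ case uses the shifted map $h_\beta(\gamma)=h(\beta+\gamma)-h(\beta)$ to witness $w[\beta..]\leq_{\hom} w'[h(\beta)..]$, with strictness giving $h(\beta)>0$. Your handling of the ordinal arithmetic and of the unfolding of $\ltlX\ltlF$ is more careful than the paper's, which appeals to a ``continuity'' of $h$ that is not part of the definition. Your side remark linking weak homomorphisms to the exclusion of $\ltlX$ is misdirected, since what breaks $\ltlX$ is inserted positions, but it plays no role in the proof.
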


Monotonicity suggests that formulas can be characterized by their
minimal satisfying words. We now show that each formula admits a finite
set of such minimal witnesses, from which all satisfying words can be
obtained via homomorphic extension.

\begin{restatable}[Canonical set]{lemma}{canonicalset}\label{lem:canonical-set}
For every
$\varphi \in \LTL_{\ltlFS,\land,\lor,\top,\bot}[\ap]$,
there exists a finite set \(E^+_\varphi \subseteq (2^{\ap})^{<\omega}\) such that
$\sema{\varphi}
=
\{w \;| \; \exists e\in E^+_\varphi, e \le_{hom} w\} $.
We call \(E^+_\varphi\) the \emph{canonical set} of \(\varphi\).
\end{restatable}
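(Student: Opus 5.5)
The plan is a structural induction on $\varphi$, building $E^+_\varphi$ explicitly from finite words and checking both inclusions at each step. The inclusion $\{w \mid \exists e\in E^+_\varphi,\, e\le_{\hom} w\}\subseteq\sema{\varphi}$ comes for free from Lemma~\ref{lem:monotonicity} once every $e\in E^+_\varphi$ satisfies $\varphi$. So the real work is the converse: every $w\models\varphi$ must admit a homomorphism from some $e\in E^+_\varphi$.

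\textbf{Base cases.} Set $E^+_\top=\{\emptyset\}$, the one-letter word with empty label; every nonempty word receives $0\mapsto 0$. Set $E^+_\bot=\emptyset$. Set $E^+_p=\{\{p\}\}$.

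\textbf{Disjunction.} Set $E^+_{\varphi\lor\psi}=E^+_\varphi\cup E^+_\psi$, which is immediate.

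\textbf{Next-future.} For $\ltlFS\varphi$, take $E^+_{\ltlFS\varphi}=\{\emptyset\cdot e\mid e\in E^+_\varphi\}$. If $w\models\ltlFS\varphi$, then by the semantics ($\ltlX\ltlF$) we have $|w|>1$ and some $\alpha\ge 1$ with $w[\alpha..]\models\varphi$. By induction there is a homomorphism $h$ from some $e$ into $w[\alpha..]$. Define $h'(0)=0$ and $h'(i+1)=\alpha+h(i)$. This is strictly increasing because $\alpha\ge1>0$ and ordinal addition on the left is strictly monotone in its right argument. Labels are preserved since $w[\alpha..][\gamma]=w[\alpha+\gamma]$.

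\textbf{Conjunction (the main obstacle).} I would define $E^+_{\varphi\land\psi}$ as the set of all finite words $e''$ that are ``merges'' of some $e\in E^+_\varphi$ and $e'\in E^+_\psi$. Such a merge is given by order-preserving maps $f\colon|e|\to|e''|$ and $g\colon|e'|\to|e''|$ satisfying:
\begin{itemize}
\item $f(0)=g(0)=0$;
\item the images of $f$ and $g$ jointly cover $|e''|$;
\item each label $e''[k]$ is the union of the labels mapped onto $k$.
\end{itemize}
Since $|e''|\le|e|+|e'|$, there are finitely many merges, so the set is finite. Each merge satisfies both conjuncts by monotonicity, because $f$ and $g$ are homomorphisms.

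For the converse, given $w\models\varphi\land\psi$, induction supplies homomorphisms $h\colon e\to w$ and $h'\colon e'\to w$. Let $I=h(|e|)\cup h'(|e'|)$. This is a finite set of ordinals containing $0$, so it has order type some $n<\omega$; enumerate it increasingly. Define $e''[k]$ as the union of the $e$- and $e'$-labels mapped onto the $k$-th element of $I$. The inclusion $I\hookrightarrow|w|$ is then a homomorphism $e''\to w$: it maps $0\mapsto 0$, it is strictly increasing, and the union of labels is contained in $w$'s label because each part is. Composing the enumeration with $h$ and $h'$ shows that $e''$ is indeed a merge. The only subtle point is that all constructions use only finitely many positions, so no limit behaviour in transfinite $w$ interferes.

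Finally, all canonical words are finite, so $E^+_\varphi\subseteq(2^{\ap})^{<\omega}$, as required.
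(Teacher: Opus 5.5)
Your proof is correct and follows essentially the same route as the paper: the same inductive canonical set (an empty-letter prefix for $\ltlFS$, union for $\lor$, merged interleavings for $\land$), the inclusion $\{w \mid \exists e,\ e\le_{\hom} w\}\subseteq\sema{\varphi}$ via Lemma~\ref{lem:monotonicity}, and the converse for conjunction obtained by merging the images of the two homomorphisms into $w$. Your description of a merge as a pair of strictly increasing maps with $f(0)=g(0)=0$ whose images jointly cover $e''$ is a cleaner packaging of the paper's merged-interleaving definition, and it makes soundness immediate, since those maps are themselves homomorphisms.
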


The proof of the above lemma is based on an inductive 
construction. The interesting case is for conjunctions, where $E^+_{\phi\land\psi}$
must consist of homomorphically minimal examples satisfying both $\phi$ and $\psi$. Such a set can be constructed by taking, for each pair of words $w_1\in E^+_\phi$ and $w_2\in E^+_\psi$, all possible interleaving of $w_1$ and $w_2$. 

While the canonical set captures all positive behavior of a formula,
it does not by itself yield a characterization: one must also ensure
that words not satisfying the formula are excluded. To this end, we
construct a finite family of maximal negative examples that block all
possible embeddings of positive witnesses.

\begin{restatable}[Existence of maximal negative examples]{lemma}{maxnegexamples}\label{lem:max-negative-examples}
For every
$\varphi \in \LTL_{\ltlFS,\land, \lor,\top, \bot}[\ap]$,
there exists a \emph{finite} set \(E^-_\varphi \subseteq (2^{\ap})^{<\omega^2}\) such that
\[
\forall w \in (2^{\ap})^{\leq \omega}\;
\bigl( w \not\models \varphi \Rightarrow \exists f \in E^-_\varphi : w \leq_{\hom} f \bigr),
\qquad\text{and}\qquad
\forall f \in E^-_\varphi,\; f \not\models \varphi.
\]
We call \(E^-_\varphi\) a set of \emph{maximal negative examples} for \(\varphi\).
\end{restatable}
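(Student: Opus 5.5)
The plan is to work entirely through the canonical set of Lemma~\ref{lem:canonical-set}. By that lemma, $w\not\models\varphi$ holds iff no $e\in E^+_\varphi$ satisfies $e\le_{\hom} w$. So it suffices to handle the following problem: given a finite set $E=\{e^1,\dots,e^m\}$ of finite words, find finitely many words $f$ of length $<\omega^2$ such that (a) no $e\in E$ embeds into any $f$, and (b) every word of length $\le\omega$ into which no $e\in E$ embeds is itself $\le_{\hom}$ some $f$. The trivial cases $E=\emptyset$ (i.e.\ $\varphi\equiv\bot$, where we take $f=(\ap)^\omega$) and $\epsilon\in E$ are handled separately.

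\textbf{Greedy matching.} Fix $e=e_0\cdots e_{n-1}$ and a word $u$ of ordinal length. Define the greedy match:
\begin{itemize}
\item if $e_0\not\subseteq u[0]$, the match dies at index $0$;
\item otherwise $e_0$ is matched at position $0$, and each later $e_{i}$ is matched at the least position after the previous match whose label contains $e_i$.
\end{itemize}
Since the positions form a well-order, least positions exist. A standard exchange argument shows that $e\le_{\hom}u$ iff the greedy match succeeds: any homomorphism can be pushed leftwards to the greedy one without breaking the inclusion or order conditions. For $u$ of length $\le\omega$ this is the usual subsequence argument; for the transfinite words $f$ I construct, the same argument works verbatim because only well-foundedness is used.

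\textbf{The progress vector.} Reading a word left to right, I track the vector $\vec k=(k_1,\dots,k_m)$ of greedy progress, one entry per pattern. This vector only increases, and it can increase at most $\sum_j|e^j|$ times. So any word $w$ avoiding all of $E$ decomposes as an initial letter, followed by a finite sequence of stages. In stage $s$ the vector stays at $\vec k^{(s)}$ for a finite (or, in the last stage, possibly $\omega$) number of letters. Between consecutive stages there is a single transition letter that moves the vector from $\vec k^{(s)}$ to $\vec k^{(s+1)}$.

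\textbf{Construction of $E^-_\varphi$.} For each state $\vec k$, the letters $\sigma$ that leave $\vec k$ unchanged are exactly those with $e^j_{k_j}\not\subseteq\sigma$ for all $j$. This set is downward closed, so let $a_1,\dots,a_r$ be its $\subseteq$-maximal elements. For each possible jump $\vec k\to\vec k'$, let $b_1,\dots,b_t$ be the $\subseteq$-maximal letters that cause exactly that jump. The letter that causes a jump depends only on which next-letters it contains, so these maximal letters also form a finite set. For every finite run consisting of:
\begin{itemize}
\item a choice of maximal initial letter, inducing an initial vector,
\item a sequence of states $\vec k^{(0)},\dots,\vec k^{(\ell)}$ none of which is complete,
\item a choice of maximal transition letter for each jump,
\end{itemize}
I put into $E^-_\varphi$ the word
\[
c\cdot(a^{(0)}_1\cdots a^{(0)}_{r_0})^\omega\cdot b^{(1)}\cdot(a^{(1)}_1\cdots)^\omega\cdots b^{(\ell)}\cdot(a^{(\ell)}_1\cdots)^\omega .
\]
There are finitely many such runs, and each word has length $\omega\cdot\ell'+\text{finite}<\omega^2$.

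\textbf{Verification.} For (a), run greedy matching along $f$. Inside each $\omega$-block no index advances, including across limit positions, and each $b$ advances exactly as prescribed. So every pattern finishes incomplete and $f\not\models\varphi$. For (b), given an avoiding $w$ of length $\le\omega$, take its stage decomposition and the corresponding run. Map each letter of stage $s$ into the block $(a^{(s)}_1\cdots)^\omega$: each such letter is below some $a^{(s)}_i$, and the infinitely many copies give room for increasing placement, including an infinite final stage. Map each transition letter to its $b$ and the first letter to $c$ at position $0$. This gives $w\le_{\hom}f$.

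\textbf{Main obstacle.} I expect the delicate step to be the bookkeeping of simultaneous progress across several patterns. A single letter may advance several patterns at once, so transition letters have to be indexed by the exact jump, not one pattern at a time. Alongside this, the greedy-optimality claim must be checked on the transfinite $f$'s, where limit positions appear. I would first prove the greedy lemma for arbitrary countable ordinal words and then check that a letter of $w$ causing jump $\vec k\to\vec k'$ is dominated by a maximal letter causing the same jump.
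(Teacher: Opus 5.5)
Your proof is correct. It is essentially the paper's construction, written as a run rather than a recursion.

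The paper takes $E^-_\varphi=\Dual(\Ecan(\varphi))$. There, $\Dual^+$ alternates two things: an $\omega$-block $\ExclWord(\sigma_1,\ldots,\sigma_k)$ of letters containing none of the pending letters, which is your $(a_1\cdots a_r)^\omega$; and a transition letter $\sigma$ that removes the first letter of every pattern whose first letter it contains, which is your jump $\vec k\to\vec k'$. It recurses on total pattern length. Non-maximal letters are also allowed there, which is harmless. Your greedy-matching lemma replaces the inductive soundness argument of Lemma~\ref{lem:dualplus-no-hom}.

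The substantive difference is at position~$0$, and there your version is the right one. The paper's $\Dual$ keeps a pattern $w_i$ with $w_i[0]\not\subseteq\sigma$ whole, and forbids it as a weak embedding in the entire remainder. That is over-restrictive. For $\varphi=p$ it yields $\{\sigma\cdot(\ap\setminus\{p\})^\omega\mid p\notin\sigma\}$. The negative word $\emptyset\{p\}$ embeds into none of these words, so the resulting sample does not even exclude $p\lor\ltlFS p$. The failing step is the claim $u[1..]\leq_{\hom}u$ in Claim~2 of the proof of Lemma~\ref{lem:dual-characterization}.

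Your rule that such a pattern dies at index~$0$ avoids this; for $\varphi=p$ it gives $(\ap\setminus\{p\})\cdot\ap^\omega$, as it should. Do make explicit that your stay and jump conditions (``$e^j_{k_j}\not\subseteq\sigma$ for all $j$'') range only over live, incomplete patterns. Carrying a dead pattern along with $k_j=0$ would reintroduce exactly the paper's error.
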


Lemma~\ref{lem:max-negative-examples} is proved by dualizing the canonical set $E^+_\phi$.
More precisely, we build a finite set of words that systematically block all possible
homomorphic embeddings of examples in $E^+_\phi$.
A key point is that blocking an embedding requires accounting for all
possible positions at which the letters of a positive witness may be
matched. For instance, preventing an embedding of
$e = \{p\}\{q\}\{p\}$ requires considering all cases in which $\{q\}$
is matched after an arbitrary finite prefix following $\{p\}$, and then
ensuring that $\{p\}$ does not appear thereafter. Since the length of the
prefix is unbounded, the construction must range over all finite lengths,
and then enforce an additional infinite constraint. This can be done by a procedure that yields words of
length up to $\omega^2$. \looseness=-1

Combining the canonical positive examples with the maximal negative
examples yields a finite labeled sample that uniquely characterizes the formula, thereby proving
Theorem~\ref{thm:main}.

By a further dualization argument we also obtain an analogous result for 
$\LTL_{\ltlGS,\land,\lor,\top,\bot}$:

\begin{restatable}{corollary}{corDual}
    Let $\ap$ be any finite set of proposition letters. $\LTL_{\ltlGS,\land,\lor,\top,\bot}[\ap]$ admits finite characterizations.
\end{restatable}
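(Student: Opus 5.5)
The plan is to reduce to Theorem~\ref{thm:main} through a duality that flips formulas and complements words. Here $\ltlGS\psi$ is read as the strict globally operator $\neg\ltlFS\neg\psi$, that is, $w\models\ltlGS\psi$ iff $w[\beta..]\models\psi$ for all $\beta$ with $1\le\beta<|w|$.

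\textbf{Duality maps.} On words, for an ordinal word $w$ let $\overline{w}$ be the word of the same length with $\overline{w}[\beta]=\ap\setminus w[\beta]$. This is an involution on $\allwords$, it commutes with taking suffixes ($\overline{w[\beta..]}=\overline{w}[\beta..]$), and it maps finitely presentable words to finitely presentable words of the same length (complement every letter in the word-expression). On formulas, define $\phi\mapsto\phi^d$ from $\LTL_{\ltlGS,\land,\lor,\top,\bot}[\ap]$ to $\LTL_{\ltlFS,\land,\lor,\top,\bot}[\ap]$ by
\[ p^d=p,\quad \top^d=\bot,\quad \bot^d=\top,\quad (\phi\land\psi)^d=\phi^d\lor\psi^d,\quad (\phi\lor\psi)^d=\phi^d\land\psi^d,\quad (\ltlGS\phi)^d=\ltlFS\phi^d. \]
The inverse map is defined in the same way, so this is a bijection between the two fragments.

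\textbf{Key lemma.} For every formula $\phi$ of the fragment and every ordinal word $w$,
\[ w\models\phi \iff \overline{w}\not\models\phi^d. \]
The proof is by induction on $\phi$. For an atom, $p\in w[0]$ iff $p\notin\overline{w}[0]$. The cases $\top,\bot,\land,\lor$ are De Morgan. For $\ltlGS\phi$: $w\models\ltlGS\phi$ iff for all $1\le\beta<|w|$ we have $w[\beta..]\models\phi$. By the induction hypothesis and the fact that complementation commutes with suffixes, this holds iff for all such $\beta$ we have $\overline{w}[\beta..]\not\models\phi^d$, which is exactly $\overline{w}\not\models\ltlFS\phi^d$. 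Consequently $\sema{\phi}=\sema{\psi}$ iff $\sema{\phi^d}=\sema{\psi^d}$, since $\sema{\phi}$ is the image of the complement of $\sema{\phi^d}$ under the bijection $w\mapsto\overline{w}$.

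\textbf{Transfer of characterizations.} Given $\phi$, apply Theorem~\ref{thm:main} to $\phi^d$. This gives a finite set $E$ of finitely presentable labeled examples of length below $\omega^2$ that uniquely characterizes $\phi^d$. Let
\[ E'=\{(\overline{w},\overline{lab})\mid (w,lab)\in E\}, \]
where $\overline{+}=-$ and $\overline{-}=+$. By the key lemma, a formula $\psi$ fits $(\overline{w},\overline{lab})$ iff $\psi^d$ fits $(w,lab)$. Hence $\phi$ fits $E'$. Moreover, if $\psi$ fits $E'$, then $\psi^d$ fits $E$, so $\sema{\psi^d}=\sema{\phi^d}$, and therefore $\sema{\psi}=\sema{\phi}$.

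I expect no real obstacle. The only points needing care are checking that the $\ltlGS$ case of the induction handles words of length $1$ correctly (both sides are then vacuous, so it goes through), and confirming that the duality is surjective onto the $\ltlFS$ fragment, so that every competitor $\psi^d$ lies in the class covered by Theorem~\ref{thm:main}.
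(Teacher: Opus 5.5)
Your proposal is correct and is essentially the paper's proof: the paper also complements every letter, swaps the labels, and applies Theorem~\ref{thm:main} to a formula $\chi\equiv\neg\phi^\neg$ of the $\ltlFS$ fragment, and your $\phi^d$ is exactly the negation normal form of that $\chi$, with your key lemma spelling out the step the paper leaves as an observation. One small point: surjectivity of the duality is not needed, only the fact (immediate from your definition) that every $\psi^d$ lies in $\LTL_{\ltlFS,\land,\lor,\top,\bot}[\ap]$.
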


\begin{remark}\label{rem:bound}
We can bound the size of the finite characterizations produced by our constructions.
Let \(m := |\varphi|\) denote the size of the syntax tree of \(\varphi\), and let
\(h := |\ap|\).
The construction of Lemmas~\ref{lem:canonical-set} and~\ref{lem:max-negative-examples}
yields a finite characterization
$S = (E^+_\varphi \times \set{+},\, E^-_\varphi \times \set{-})$
such that
$|E^+_\varphi| \le 2^{O(m)}$ and $|E^-_\varphi| \le 2^{O(m\,2^m)}$.
Furthermore, each word is given by a word expression of size at most $O(mh2^{m+h})$. 
Thus, the overall size of the characterization is doubly exponential in $|\varphi|$.
Proof details appear in the appendix.
These bounds should be compared with the lower-bound results of
Fortin et al.~\cite{FortinEtAlKR2022UniqueTemporalInstanceQueries}, who showed that
\(\LTL_{\land,\ltlFS}\) does not admit polynomial characterizations over finite words.
\end{remark}

\section{Back to the Finite: Schematic Examples}\label{sec:schematic-examples}

We turn to define \emph{schematic examples}, a concise way to represent (possibly) infinitely many finite examples.
Consider a set of atomic propositions $\ap$. 
We use $\mathbb{B}(\ap)$ to denote the Boolean expressions over $\ap$. Note that a Boolean expression $\texttt{b}\in\mathbb{B}(\ap)$ defines a set of letters over $\Sigma=2^{\ap}$, the alphabet on which models of LTL are defined.

A \emph{schematic example} is specified by a union-free regular expression $r$ over the alphabet $\mathbb{B}(\ap)$ of star-height at most 1 (that is, there are no nested occurrences of the Kleene star).%
We say that 
$\phi$ \emph{fits} $(r,+)$ 
if $L(r)\subseteq\sema{\phi}$,
and that $\phi$ fits $(r,-)$
if $L(r)\cap\sema{\phi}=\emptyset$.
A schematic example can be viewed as an underspecified representation of a trace. This underspecification arises at two levels. First, each letter is a Boolean formula over the atomic propositions and therefore represents a set of possible valuations rather than a single, fully specified state. Second, Kleene star leaves the number of repetitions of a finite pattern unspecified, thereby abstracting from the exact duration of a phase or the number of times it occurs. We exclude union and restrict the star height to one, ensuring that every repeated subexpression is itself a fixed finite pattern containing no further unbounded repetition. Consequently, a schematic example can be read as a linear succession of finitely described phases, some of variable duration.

\begin{example}
    Fix $\ap=\{p,q\}$. 
    The LTL-formula $\ltlF (p\land q)$ is uniquely characterized w.r.t. $\LTL[\{p,q\}]$ over $\finwords$ by the set of labeled schematic examples 
    \[E=\{(\true^* \cdot (p \wedge q) \cdot\true^*),+),\ \ ((\neg p \vee \neg q)^*,-)\} \qquad\qquad \text{  where  } 
    \true = p \vee \neg p. 
    \]
\end{example}

Building on the results about transfinite examples in the previous section, we can  show:

\begin{restatable}{theorem}{thmFinitecharacterizationSimpleSchematic}\label{thm:finite-char-simple-schematic}
$\LTL_{\ltlFS,\land,\lor,\top,\bot}[\ap]$ admits finite characterizations over $\finwords$ by schematic examples.
\end{restatable}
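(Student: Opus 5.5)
Our plan is to compile the transfinite characterization of Theorem~\ref{thm:main} into schematic examples. Fix $\varphi\in\LTL_{\ltlFS,\land,\lor,\top,\bot}[\ap]$ and take the canonical set $E^+_\varphi$ of Lemma~\ref{lem:canonical-set} and the maximal negative examples $E^-_\varphi$ of Lemma~\ref{lem:max-negative-examples}. Each $e\in E^+_\varphi$ is a finite word, so we keep it as a plain positive schematic example (a star-free expression whose letters are the full valuations $\sigma$, written as conjunctions of literals). By Lemma~\ref{lem:canonical-set}, $e\models\varphi$, so $\varphi$ fits these. For the negative side, each $f\in E^-_\varphi$ is given by a word expression of length below $\omega^2$. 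By inspecting the construction, we will arrange that it has the flat form $u_0\,(v_1)^\omega u_1\,(v_2)^\omega\cdots(v_k)^\omega u_k$ with $u_i,v_i$ finite. We translate it into the schematic example $r_f = u_0\,(v_1)^* u_1\cdots (v_k)^* u_k$, which has star height $1$ and no union. Every instance of $r_f$ maps into $f$ by a homomorphism, sending each finite block of $v_i$-repetitions to an initial segment of the $i$-th $\omega$-segment. Hence, by Lemma~\ref{lem:monotonicity} (contrapositively) and $f\not\models\varphi$, we get $L(r_f)\cap\sema{\varphi}=\emptyset$, so $\varphi$ fits $(r_f,-)$.

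Next, let $\psi$ in the fragment fit all these schematic examples. We will show that $\psi$ fits the transfinite examples of Theorem~\ref{thm:main}, which gives $\sema{\psi}=\sema{\varphi}$ over all words and in particular over $\finwords$. Positive examples are literally shared. For the negative ones we need to show that $L(r_f)\cap\sema{\psi}=\emptyset$ implies $f\not\models\psi$. We argue contrapositively. Suppose $f\models\psi$. By Lemma~\ref{lem:canonical-set} for $\psi$, some finite $e\in E^+_\psi$ has a homomorphism $h:e\to f$. Since $e$ is finite, $h$ hits only finitely many positions inside each $\omega$-segment $(v_i)^\omega$. We can therefore truncate each segment to a finite number $n_i$ of copies of $v_i$ that still contains the image, and continue with $u_i$. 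The image positions after the segment lie in later parts, which keep their relative order. This yields an instance $w=u_0v_1^{n_1}u_1\cdots v_k^{n_k}u_k\in L(r_f)$ with $e\le_{\hom} w$, so $w\models\psi$ by Lemma~\ref{lem:monotonicity}, a contradiction.

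The main obstacle is this truncation step, together with the form of $E^-_\varphi$. The truncation relies on $f$ having the flat shape above, with no nested $\omega$ and each $\omega$-segment a genuine periodic block that can be cut at a period boundary. The image of a finite $e$ in a word of length below $\omega^2$ meets each $\omega$-segment in a finite initial-bounded set, so cutting preserves order and labels. The delicate point is ensuring that the expressions produced in the proof of Lemma~\ref{lem:max-negative-examples} can be normalized to this flat form. If some expression has the form $(A)^\omega$ with $A$ containing an $\omega$, the length would reach at least $\omega^2$, which is excluded. So every $\omega$ is applied to a finite word, and concatenation associativity gives the flat normal form. We will verify this normalization first.
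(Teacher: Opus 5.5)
Your proof is correct and follows the paper's overall reduction: keep the finite positive examples of Theorem~\ref{thm:main} as star-free expressions, turn each negative $f$ into $r_f$ by replacing $\omega$ with ${}^*$, and get $\varphi$ fitting $(r_f,-)$ from the embedding of instances into $f$ plus Lemma~\ref{lem:monotonicity}. The difference is in the transfer step (if $f\models\psi$, then some instance of $r_f$ satisfies $\psi$).

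\begin{itemize}
\item \textbf{The paper's route.} It proves this as Lemma~\ref{lem:omega-to-star} by structural induction on the formula. The $\land$ case uses a common, larger unfolding. The $\ltlFS$ case uses the fact that suffixes are again regular, and relates $L(r(w[\beta..]))$ to $L(r(w))$.
\item \textbf{Your route.} You apply Lemma~\ref{lem:canonical-set} to $\psi$ to obtain a finite witness $e\leq_{\hom} f$. You then truncate each $\omega$-block just beyond the finitely many image positions.
\end{itemize}

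Your route is shorter and sidesteps the somewhat delicate inductive cases. You also explicitly justify the star-height-$1$ claim, via the length bound $<\omega^2$ ruling out nested $\omega$-powers, which the paper leaves implicit.

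One point you should state explicitly. The claim that every instance of $r_f$ embeds into $f$ with $h(0)=0$ requires $u_0\neq\varepsilon$. For example, with $f=(\emptyset)^\omega\{p\}$, the instance $\{p\}$ of $(\neg p)^*\cdot p$ satisfies $p$ while $f$ does not. The claim holds in your setting because every word in $\Dual(A)$ begins with a letter $\sigma$ (Definition~\ref{def:dual}).
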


\begin{proof}[Proof sketch]
The construction proceeds by reducing finite characterizations consisting of regular (i.e.,~finitely presented) transfinite words
to characterizations consisting of schematic examples.

We already showed that every formula \(\varphi\) in the fragment admits a finite
characterization by a sample \(S\) consisting of positively labeled finite
words and negatively labeled regular transfinite words.
We transform this sample into a  schematic sample
\(S' := Schematic(S)\) by replacing each word \(w\) with its schematic
representation \(r(w)\).

The key observation is that, for this fragment, satisfaction is preserved
under replacing \(\omega\)-iterations by finite repetitions.
Thus each schematic expression \(r(w)\) faithfully captures the behavior
of the original word \(w\) with respect to \(\varphi\).
Consequently, \(S'\) fits \(\varphi\), and moreover, any formula that fits
\(S'\) must also fit \(S\), and hence is equivalent to \(\varphi\).

Finally, by construction, every schematic expression in \(S'\) is obtained
by replacing \(\omega\)-powers with Kleene-star, and therefore has
star-height \(1\).
\end{proof}

\begin{example}[Schematic examples]\label{ex:schematic-examples}
Recall the formula
$\varphi = \ltlF\bigl(p\land q \land \ltlF(r\land \ltlF(p\land q))\bigr)$
from Example~\ref{ex:transfinite-characterization}. We now give the corresponding schematic examples.  %

The positive examples can be grouped into two schematic examples:
\[
R^+ = \{ \true^* \cdot (p {\land} q {\land} r),\quad
          \true^* \cdot (p {\land} q) \cdot r \cdot (p {\land} q) \}.
\]
The negative examples can be grouped into the single schematic example
\[
R^-=\{ (\neg(p{\land} q))^*\cdot (\neg r)^* \cdot (\neg (p{\land} q))^* \}.\]
\end{example}

Using $\mathbb{B}(\ap)$ as the basis of regular expressions is common in practice; in particular, such expressions are part of the IEEE standards for temporal logics PSL~\cite{EisnerF06,EisnerF18} and SVA~\cite{CernyDHK2014}. Furthermore, using the fusion operator of PSL, that concatenates two regular expressions with one letter overlap, the two positive schematic examples above can be merged into the single expression
$\true^*\circ (p{\land} q)\circ\true^*\circ(r)\circ\true^*\circ(p{\land} q)$.

It is well known that, for every LTL formula $\phi$, $\semafin{\phi}$ and $\semafin{\neg\phi}$  are regular languages.
A classic result in~\cite{Nagy2004NormalFormRegex} states that every regular language is a union of finitely many languages definable by union-free regular expressions. The union-free regular expressions in question may, however, have unbounded star
height. Hence, this observation alone does not yield finite unique characterizations consisting of schematic examples. 

The results in this section are restricted to LTL interpreted over finite words. For instance, $\ltlF p$ and $\ltlF (p\land\neg \ltlFS p)$ are non-equivalent in general, but cannot be distinguished by finite words, %
and therefore also not by schematic examples. 
In this sense, our results that use transfinite examples still offer benefits
beyond those of schematic examples.
Even 
over $\finwords$, we believe that the full language of LTL does not admit finite characterizations  by  schematic examples, but this remains to be proved. %

\section{Another notion of ``good'' examples:
characteristic samples}
\label{sec:characteristic-samples}

The main focus of this paper is on the existence of \emph{unique characterizations} for fragments of LTL. 
In this section, we briefly compare this notion with another notion of a ``good'' set of examples, namely \emph{characteristic samples}. Let 
a \emph{concept class} be a triple $\class{C}=(C,\mathcal{X},\sema{\cdot})$, where $C$ is a set (the \emph{concepts}), $\mathcal{X}$ is a set (the \emph{examples}) and where, for each $c\in C$, $\sema{c}\subseteq\mathcal{X}$.

\newcommand{\wpfin}{\wp_{\text{finite}}}

\begin{definition}[Learner and teacher]\quad
Fix a concept class $\class{C}=(C,\mathcal{X},\sema{\cdot})$.
\begin{itemize}
    \item 
A \emph{learner} for $\class{C}$ is a function $\alg{L}:\wpfin(\mathcal{X}\times\{0,1\})\to\Psi$ with the property that $\alg{L}(S)$ fits $S$, assuming $S$ fits at least one element of $\class{C}$.\footnote{If $S$ fits no element of $\class{C}$ then there is no requirement on $\alg{L}(S)$.}
\item
A \emph{teacher} for $\class{C}$ is a function $\alg{T}:\Psi\to \wpfin(\mathcal{X}\times\{0,1\})$ such that $\alg{T}(\psi)$ fits $\psi$.
\end{itemize}
Note that learners and teachers need not be computable.
\end{definition}

\begin{definition}[Characteristic Samples]\quad
\begin{itemize}
    \item 

    A sample $S$ is a \emph{characteristic sample} for $\psi$ and a learner $\alg{L}$
if $S$ fits $\psi$ and for every
sample $S'\supseteq S$ that fits $\psi$ we have
$\sema{\alg{L}(S')} = \sema{\psi}$.
The intuition is that additional information consistent with $\psi$ beyond $S$ will not cause the learner to change its mind about the correct concept.

\item
A concept class $\class{C}$ \emph{has characteristic samples for a learner} $\alg{L}$ if there exists a teacher $\alg{T}$ such that for every $\psi$ in $\class{C}$, $\alg{T}(\psi)$ is a characteristic sample for $\psi$ and $\alg{L}$.
A concept class $\class{C}$ \emph{has characteristic samples} if it has characteristic samples for some learner.
\end{itemize}
\end{definition}

The following is a known property of characteristic samples.

\begin{lemma}
    Let $\class{C}=(\Psi,\mathcal{X},\sema{\cdot})$ be a concept class. Assume $\class{C}$ has characteristic samples with respect to learner $\alg{L}$ and teacher $\alg{T}$. Let $\psi,\psi'\in\Psi$.
    If $\sema{\psi}\neq\sema{\psi'}$ then there exists $(x,b)\in\alg{T}(\psi)\cup \alg{T}(\psi')$ such that $x\in\sema{\psi}\oplus\sema{\psi'}$ where $\oplus$ denotes the symmetric difference.
\end{lemma}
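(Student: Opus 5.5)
The plan is a proof by contradiction that considers the sample obtained by combining the two teachers' outputs. Suppose $\sema{\psi}\neq\sema{\psi'}$, and suppose that no example $(x,b)\in\alg{T}(\psi)\cup\alg{T}(\psi')$ has $x\in\sema{\psi}\oplus\sema{\psi'}$. Then every $x$ occurring in $S:=\alg{T}(\psi)\cup\alg{T}(\psi')$ lies either in $\sema{\psi}\cap\sema{\psi'}$ or outside $\sema{\psi}\cup\sema{\psi'}$. Hence $\psi$ and $\psi'$ classify every such $x$ in the same way.

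The first step is to show that $S$ fits both $\psi$ and $\psi'$. By the definition of a teacher, $\alg{T}(\psi)$ fits $\psi$, and $\alg{T}(\psi')$ fits $\psi'$. Since $\psi$ and $\psi'$ agree on all examples occurring in $S$, an example in $\alg{T}(\psi)$ labeled correctly for $\psi$ is also labeled correctly for $\psi'$, and symmetrically. Therefore $S$ fits both $\psi$ and $\psi'$. For this step to go through, we need $S$ to be well-defined as a sample: no $x$ may appear in $S$ with two different labels. That is guaranteed because both formulas assign the same label to each such $x$.

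The second step applies the characteristic-sample property twice. We have $S\supseteq\alg{T}(\psi)$, and $S$ fits $\psi$. Since $\alg{T}(\psi)$ is a characteristic sample for $\psi$ and $\alg{L}$, this gives $\sema{\alg{L}(S)}=\sema{\psi}$. Likewise, $S\supseteq\alg{T}(\psi')$ and $S$ fits $\psi'$, so $\sema{\alg{L}(S)}=\sema{\psi'}$. Combining the two equalities yields $\sema{\psi}=\sema{\psi'}$, which contradicts the assumption. Hence some $(x,b)$ in the union must have $x$ in the symmetric difference.

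There is no real obstacle here. The only point needing care is the bookkeeping about what ``fits'' means for the union. Because $\alg{L}$ is a function, it produces a single output $\alg{L}(S)$, and the argument depends on this. I will also check that $S$ is finite, which holds because it is a union of two finite samples; this is needed for $S$ to lie in the domain $\wpfin(\mathcal{X}\times\{0,1\})$ of $\alg{L}$.
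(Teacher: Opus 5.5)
Your proposal is correct and follows essentially the same argument as the paper: take $S=\alg{T}(\psi)\cup\alg{T}(\psi')$, note that $S$ fits both concepts, and apply the characteristic-sample property twice to obtain $\sema{\psi}=\sema{\alg{L}(S)}=\sema{\psi'}$, a contradiction. You also make explicit two points the paper leaves implicit: that $S$ fits both concepts because they agree on every $x$ outside $\sema{\psi}\oplus\sema{\psi'}$, and that $S$ is finite.
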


\begin{proof}
Assume this is not the case. Let $S$ be a sample subsuming $\alg{T}(\psi)\cup \alg{T}(\psi')$. Let $\psi''=\alg{L}(S)$.
Since $S$ fits $\psi$ and subsumes $\alg{T}(\psi)$, by the definition of characteristic samples, $\psi''$ should satisfy $\sema{\psi''}=\sema{\psi}$. The same reasoning entails that $\sema{\psi''}=\sema{\psi'}$, contradicting $\sema{\psi}\neq\sema{\psi'}$.
\end{proof}

The notion of unique characterization is stronger than that of characteristic samples, as formally stated in the following theorem.

\begin{restatable}{theorem}{thmUniqueImpliesCharacteristic}
    If a concept class $\class{C}$ admits a unique characterization, then it has characteris\-tic samples.
\end{restatable}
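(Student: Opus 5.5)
The plan is to build the teacher directly from the uniquely characterizing sets, and to let the learner output any concept that fits its input, breaking ties canonically. Suppose every $\psi\in C$ has a finite sample $U(\psi)$ that uniquely characterizes it relative to $C$. Set $\alg{T}(\psi) := U(\psi)$. This is a legitimate teacher, since by clause (i) of unique characterization $U(\psi)$ fits $\psi$.

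For the learner, fix (by the axiom of choice, since learners need not be computable) a choice function on nonempty subsets of $C$. Given a finite sample $S$, let $\alg{L}(S)$ be the chosen element of $\{\psi'\in C \mid \psi' \text{ fits } S\}$ when that set is nonempty, and an arbitrary fixed concept otherwise. This is a learner, because whenever $S$ fits some element of $\class{C}$, the output fits $S$. We could instead use a well-order of $C$ and take the least fitting concept; either choice works.

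The key step is to check that $\alg{T}(\psi)$ is a characteristic sample for $\psi$ and $\alg{L}$. First, $\alg{T}(\psi)$ fits $\psi$. Now let $S'\supseteq \alg{T}(\psi)$ be any finite sample that fits $\psi$. Then the set of concepts fitting $S'$ is nonempty, since it contains $\psi$. So $\psi'' := \alg{L}(S')$ fits $S'$, and in particular fits every labeled example in the subset $U(\psi)$. By clause (ii) of unique characterization, $\sema{\psi''}=\sema{\psi}$, which is exactly what is required.

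There is no serious obstacle. The only points needing care are making the learner total, which is why we need the default output when no concept fits, and observing that fitting is inherited by subsets of a sample.
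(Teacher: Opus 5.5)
Your proof is correct and is essentially the paper's argument. The teacher outputs the uniquely characterizing set, and the learner returns a canonically chosen concept fitting its input; you also spell out the verification that any $S'\supseteq\alg{T}(\psi)$ fitting $\psi$ forces $\sema{\alg{L}(S')}=\sema{\psi}$, which the paper leaves implicit. The paper picks the least fitting formula in a fixed enumeration, whereas your choice function or well-order also covers uncountable concept classes.
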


It follows that all fragments for which we have seen positive results regarding unique characterizations also have characteristic samples. The converse is not true, since as the following theorem shows the concept class consisting of all LTL formulas has characteristic samples, while we have provided several negative results regarding unique characterization for fragments of LTL.

\begin{restatable}{theorem}{thmLTLcharsamples}
\label{thm:LTL-char-samples}
 Let $\ap$ be any finite set of atomic propositions. 
  $\LTL[\ap]$ admits characteristic samples over $\finwords$. 
\end{restatable}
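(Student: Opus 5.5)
We need to show that $\LTL[\ap]$ admits characteristic samples over $\finwords$: we must construct a learner $\alg{L}$ and a teacher $\alg{T}$ such that, for every formula, $\alg{T}(\psi)$ is a characteristic sample for $\psi$ and $\alg{L}$. The plan is the standard enumeration-based construction, made possible by two facts. First, $\LTL[\ap]$ formulas are countable and effectively enumerable. Second, semantic inequivalence over $\finwords$ is always witnessed by a single finite word. Learners and teachers need not be computable, but the construction is in fact computable, since equivalence over finite words is decidable via automata.

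\textbf{Enumeration.} Fix an enumeration $\psi_0,\psi_1,\psi_2,\ldots$ of all $\LTL[\ap]$ formulas. From it, choose a sequence of representatives $\chi_0,\chi_1,\ldots$ of the equivalence classes under $\semafin{\cdot}$, taking each $\chi_i$ to be the first formula in the enumeration of its class.

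\textbf{Learner.} Given a finite sample $S$, let $\alg{L}(S)$ be $\chi_i$ for the least $i$ such that $\chi_i$ fits $S$. If no such $i$ exists, let $\alg{L}(S)=\top$. Whenever $S$ fits some formula, the least such $i$ exists, so $\alg{L}$ is a legitimate learner.

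\textbf{Teacher.} Given $\psi$, let $k$ be the index with $\semafin{\chi_k}=\semafin{\psi}$. For each $j<k$, we have $\semafin{\chi_j}\neq\semafin{\psi}$, so we pick a finite word $w_j\in\semafin{\chi_j}\oplus\semafin{\psi}$. We label $w_j$ according to whether $w_j\models\psi$. Set $\alg{T}(\psi)=\{(w_j,lab_j)\mid j<k\}$. This sample is finite and fits $\psi$ by construction.

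\textbf{Characteristic sample property.} Let $S'\supseteq\alg{T}(\psi)$ be any sample that fits $\psi$. Then $\chi_k$ fits $S'$, because it is equivalent to $\psi$ on finite words. Every $\chi_j$ with $j<k$ fails to fit $S'$, since $\chi_j$ mislabels $w_j$. Hence $\alg{L}(S')=\chi_k$, and therefore $\semafin{\alg{L}(S')}=\semafin{\psi}$.

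There is no real obstacle. The only points needing care are the following:
\begin{itemize}
\item Equivalence must be understood over $\finwords$, which is the relevant concept class $(\LTL[\ap],\finwords,\semafin{\cdot})$.
\item Witnesses must be finite words, which is immediate since the symmetric difference is taken within $\finwords$.
\item The learner must output a formula of the class, which holds because each $\chi_i$ is an $\LTL[\ap]$ formula.
\end{itemize}
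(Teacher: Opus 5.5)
Your proof is correct and is essentially the paper's argument, namely Gold's enumeration technique: the teacher supplies, for each formula earlier in a fixed enumeration, a finite word in the symmetric difference, and the learner returns the first formula in the enumeration that fits the sample. Your restriction to the first representative $\chi_i$ of each $\semafin{\cdot}$-class is a small but real tidying of the paper's version. The paper compares $\psi_j$ with all earlier $\psi_i$ and glosses over earlier formulas equivalent to $\psi_j$, for which the symmetric difference is empty.
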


The theorem can be proved using Gold's enumeration technique~\cite{Gold67}.

This stands in sharp contrast with the results in Section~\ref{sec:finite}, which showed that finite unique characterizations over $\finwords$ rarely exist.

In a sense, as evident from the respective definitions and proofs, unique characterization can be viewed as a strengthening of the notion of characteristic samples, requiring the property to hold with respect to any learner rather than a particular one.

Another aspect in which the two notions can be compared is the size of the example set. We measure the size of a set as the sum of the lengths of the examples it contains. Gold's method, used in the proof of \autoref{thm:LTL-char-samples}, yields characteristic samples of exponential cardinality (that is, the number of examples is exponential, regardless of their size). The result of Barzdin and Freivalds~\cite{BarzdinF72} shows that characteristic samples of polynomial cardinality are possible. This result, however, also does not take into account the size of the examples.

Regarding the size of the examples themselves, we can obtain a doubly exponential upper bound by translating the formula into an automaton and taking a characteristic sample for it. An LTL formula of size $n$ can be translated into a B\"uchi automaton $B_\psi$ of size $2^{O(n)}$. The language $L$ of a B\"uchi automaton can be represented as the following language of finite words
$(L)_{\$}=\{ u \$ v \mid u(v)^\omega \in L\}$
since  two $\omega$-regular languages are equivalent iff they agree on the set of ultimately periodic. Calbrix et al.~\cite{CalbrixNP93} showed that $(L)_{\$}$ is a regular language of finite words.  Kuperberg et al.~\cite{KuperbergPP19} showed that a non-deterministic B\"uchi automaton with $n$ states can be converted to a DFA with at most $2^n+3^{n^2}$. Thus, given an LTL formula $\psi$ of size $n$, we can obtain a DFA for $(L(B_\psi))_{\$}$ whose size is doubly exponential in $n$. Since a characteristic sample for a DFA has size polynomial in the number of its states, the overall doubly exponential upper bound follows.
We complement this by a matching lower bound.

\begin{restatable}{proposition}{propDFAlowerbound}
    There exists a family of languages $\{L_n\}_{n\in\mathbb{N}}$ that can be represented by LTL formulas of size linear in $n$ whereas the DFA for $(L_n)_\$$  %
    requires at least $2^{2^n}$ states.
\end{restatable}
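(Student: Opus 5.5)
We take $\ap_n=\{p_1,\dots,p_n,\#,e\}$. Each letter carries an $n$-bit vector $\vec{b}(\sigma)=(p_1\in\sigma,\dots,p_n\in\sigma)$, and the constraints are expressed directly on single letters; this keeps the formula linear in size. Define
\[
\varphi_n \;=\; \ltlF\Bigl(\# \land \bigwedge_{i=1}^{n}\bigl(p_i \leftrightarrow \ltlF(e\land p_i)\bigr)\Bigr)\;\land\;\ltlF e \;\land\; \ltlG\bigl(e\rightarrow \ltlX\ltlG\neg e\bigr),
\]
and let $L_n=\sema{\varphi_n}^{=\omega}$. Here $\psi\leftrightarrow\chi$ abbreviates $(\psi\to\chi)\land(\chi\to\psi)$. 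The formula has size $O(n)$. On $\omega$-words it says three things: $e$ occurs exactly once, some $\#$-position occurs, and the vector at that $\#$-position matches the vector at the $e$-position. The matching holds because $e$ occurs only once, so $\ltlF(e\land p_i)$ is true iff $p_i$ holds at the unique $e$-position. One subtlety must be checked: $\ltlF$ is evaluated from the $\#$-position, so the $e$-position must lie after it. The words used below place $e$ after all $\#$-letters, so this does not matter.

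The lower bound will come from a Myhill--Nerode argument on $(L_n)_\$$. For each set $S\subseteq\{0,1\}^n$, let $u_S$ be a finite word that lists, in some fixed order, one letter $\{\#\}\cup\{p_i\mid b_i=1\}$ for each $\vec b\in S$. This gives $2^{2^n}$ prefixes. Now take $S\neq S'$ and pick $\vec b\in S\mathbin{\triangle} S'$, say $\vec b\in S\setminus S'$. Let $\sigma_{\vec b}=\{e\}\cup\{p_i\mid b_i=1\}$ and consider the continuation $z=\sigma_{\vec b}\,\$\,\emptyset$. Then $u_S z\in(L_n)_\$$, since $u_S\sigma_{\vec b}\emptyset^\omega\models\varphi_n$. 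On the other hand $u_{S'}z\notin(L_n)_\$$: no $\#$-letter of $u_{S'}$ carries $\vec b$, and the suffix $\sigma_{\vec b}\emptyset^\omega$ contains no $\#$ at all.

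There is one further check. Membership in $(L_n)_\$$ is defined through the pairs $u\$v$ with $u v^\omega\in L_n$, and the continuation is read literally as a finite word. Our continuation has the single form $\sigma_{\vec b}\$\emptyset$, so the word $u\$v$ is uniquely determined and no ambiguity can arise. Hence the $2^{2^n}$ prefixes $u_S$ are pairwise Nerode-inequivalent, and any DFA for $(L_n)_\$$ needs at least $2^{2^n}$ states.

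The main obstacle is keeping the formula linear while still forcing an exact equality of whole $n$-bit vectors between two distant positions. The naive encoding of vectors as blocks of $n$ letters needs $\ltlX^i$ and gives quadratic size. The fix used above is to pack each vector into a single letter and to make $e$ unique, so that bitwise comparison decomposes into $n$ independent conjuncts $p_i\leftrightarrow\ltlF(e\land p_i)$. What still has to be verified is that uniqueness of $e$ really makes these conjuncts compare against one fixed position.
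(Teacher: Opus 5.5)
Your proof is correct, and it takes a genuinely different route from the paper's sketch.

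The paper uses the classical Chandra--Stockmeyer language over a fixed alphabet, $L_n=\{x\sharp w\sharp y\$w\sharp^\omega \mid w\in\{0,1\}^n\}$. It compares the two $n$-letter blocks position by position with $\ltlX^i$, imports the $2^{2^n}$ lower bound from Chandra and Stockmeyer, and argues informally that the periodic tail plays no role in $(L_n)_\$$. You instead pack each $n$-bit vector into a single letter over $n+2$ propositions and make the marker $e$ unique. The comparison then splits into $n$ constant-size conjuncts $p_i\leftrightarrow\ltlF(e\land p_i)$. You prove the lower bound directly with an explicit fooling set $\{u_S\}$, separated by the continuations $\sigma_{\vec b}\,\$\,\emptyset$.

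Your route buys several things:
\begin{itemize}
\item a self-contained argument with an explicit membership check under the $u\$v$ encoding;
\item a formula whose syntax tree is honestly $O(n)$, whereas the naive $\ltlX^i$-comparison costs $\sum_{i\le n}O(i)=O(n^2)$;
\item no clash between a letter of $L_n$ and the separator, unlike the paper's double use of $\$$;
\item the same fooling set with $\omega$-continuations $\sigma_{\vec b}\emptyset^\omega$ also gives the paper's later remark about deterministic $\omega$-automata.
\end{itemize}
The paper's route, in exchange, keeps the alphabet fixed, whereas yours has $2^{n+2}$ letters. Since the statement does not fix $\ap$ and the bound concerns the number of states, this is acceptable.

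The verification you list as still outstanding is already done by your argument. In both $u_Sz$ and $u_{S'}z$ the unique $e$ lies strictly after every $\#$, so at each $\#$-position $\ltlF(e\land p_i)$ holds iff $b_i=1$. Your informal reading of $\varphi_n$ on arbitrary words is slightly off: a $\#$-position after the $e$ carrying the all-zero vector also satisfies the first conjunct. This never affects the test words you use.
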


This opens the question whether working with some other representation for $\omega$-regular languages may yield a smaller size automaton. We note that the answer is negative for any deterministic $\omega$-automaton (resp. family of DFAs used to represent $\omega$-regular languages~\cite{MalerS93,AngluinF16,FismanGZ24}), since the same proof as above goes through (resp. for the leading automaton).

Thus, the doubly-exponential bound matches the complexity of obtaining a characteristic sample for LTL via a corresponding deterministic automaton. It remains possible, however, that more efficient constructions of characteristic samples for LTL formulas exist that do not proceed through such automata representations.

\section{Conclusion}

We studied when LTL-formulas can be uniquely characterized by finite sets of labeled examples. We gave a complete classification, over finite words, of the operator fragments that admit such finite unique characterizations, showing that this property is quite limited in this setting. We then showed that moving to richer example domains changes the picture substantially: while $\omega$-words already help for some fragments, transfinite words enable finite unique characterizations for large monotone fragments of LTL. Finally, returning to the finite setting, we introduced schematic examples as compact representations of families of finite words, enabling unique characterization results for large monotone fragments of LTL. 

Together, these results provide a systematic account of how the descriptive power of examples for LTL-formulas depends on the type of examples that is allowed. They also suggest that schematic examples are a promising middle ground between concrete traces and more abstract symbolic descriptions: they remain close to the kinds of finite behaviors practitioners use, while being quite expressive in terms of describing the behavior of an LTL-formula. \looseness=-1

Several directions remain open. On the technical side, it would be natural to extend the present analysis to larger fragments involving both $\ltlF$ and $\ltlG$. Furthermore, restricting the set of allowed operators is a natural first step, but one could also systematically study other kinds of syntactic fragments (such as obtained by forbidding specific operator nesting patterns, as is done in some of the results in~\cite{FortinEtAlKR2022UniqueTemporalInstanceQueries}). 
On the conceptual side, it would be interesting to revisit related notions such as characteristic samples, active learning, as well as interfaces for debugging or explanation, through the lens of richer example formats, and to better understand when compact, human-interpretable examples can serve as faithful descriptions of temporal specifications.
\bibliographystyle{plainurl}
\bibliography{bib.bib}

@article{CateFOS26arxiv,
  author       = {Balder ten Cate and
                  Dana Fisman and
                  Roi Ohayon and
                  Patrik Sestic},
  title        = {Characterizing {LTL} Formulas by Examples},
  journal      = {CoRR},
  volume       = {abs/2604.22097},
  year         = {2026},
  url          = {https://doi.org/10.48550/arXiv.2604.22097},
  doi          = {10.48550/ARXIV.2604.22097},
  eprinttype   = {arXiv},
  eprint       = {2604.22097},
  bibsource    = {dblp computer science bibliography, https://dblp.org}
}

@article{BalderCQ,
author = {ten Cate, Balder and Dalmau, V\'{i}ctor},
title = {Conjunctive Queries: Unique Characterizations and Exact Learnability},
year = {2022},
issue_date = {December 2022},
publisher = {Association for Computing Machinery},
address = {New York, NY, USA},
volume = {47},
number = {4},
issn = {0362-5915},
url = {https://doi.org/10.1145/3559756},
doi = {10.1145/3559756},
journal = {ACM Trans. Database Syst.},
month = {nov},
articleno = {14},
numpages = {41}
}

@inproceedings{DBLP:conf/ijcai/FunkJL22,
  author       = {Maurice Funk and
                  Jean Christoph Jung and
                  Carsten Lutz},
  title        = {Frontiers and Exact Learning of {ELI} Queries under {DL}-{L}ite Ontologies},
  booktitle    = {{IJCAI}},
  pages        = {2627--2633},
  publisher    = {ijcai.org},
  year         = {2022}
}

@inproceedings{Nagy2004NormalFormRegex,
  author    = {Benedek Nagy},
  title     = {A Normal Form for Regular Expressions},
  booktitle = {Supplemental Papers for DLT'04},
  editor    = {Calude, C. S. and Calude, E. and Dinnen, M. J.},
  publisher = {CDMTCS},
  address   = {Auckland, New Zealand},
  year      = {2004}
}

@inproceedings{CalbrixNP93,
  author       = {Hugues Calbrix and
                  Maurice Nivat and
                  Andreas Podelski},
  editor       = {Stephen D. Brookes and
                  Michael G. Main and
                  Austin Melton and
                  Michael W. Mislove and
                  David A. Schmidt},
  title        = {Ultimately Periodic Words of Rational \emph{w}-Languages},
  booktitle    = {Mathematical Foundations of Programming Semantics, 9th International
                  Conference, New Orleans, LA, USA, April 7-10, 1993, Proceedings},
  series       = {Lecture Notes in Computer Science},
  volume       = {802},
  pages        = {554--566},
  publisher    = {Springer},
  year         = {1993},
  url          = {https://doi.org/10.1007/3-540-58027-1\_27},
  doi          = {10.1007/3-540-58027-1\_27},
  bibsource    = {dblp computer science bibliography, https://dblp.org}
}

@inproceedings{ChandraS76,
  author       = {Ashok K. Chandra and
                  Larry J. Stockmeyer},
  title        = {Alternation},
  booktitle    = {17th Annual Symposium on Foundations of Computer Science, Houston,
                  Texas, USA, 25-27 October 1976},
  pages        = {98--108},
  publisher    = {{IEEE} Computer Society},
  year         = {1976},
  url          = {https://doi.org/10.1109/SFCS.1976.4},
  doi          = {10.1109/SFCS.1976.4},
  bibsource    = {dblp computer science bibliography, https://dblp.org}
}

@inproceedings{MalerS93,
  author       = {Oded Maler and
                  Ludwig Staiger},
  editor       = {Patrice Enjalbert and
                  Alain Finkel and
                  Klaus W. Wagner},
  title        = {On Syntactic Congruences for Omega-Languages},
  booktitle    = {{STACS} 93, 10th Annual Symposium on Theoretical Aspects of Computer
                  Science, W{\"{u}}rzburg, Germany, February 25-27, 1993, Proceedings},
  series       = {Lecture Notes in Computer Science},
  volume       = {665},
  pages        = {586--594},
  publisher    = {Springer},
  year         = {1993},
  url          = {https://doi.org/10.1007/3-540-56503-5\_58},
  doi          = {10.1007/3-540-56503-5\_58},
  bibsource    = {dblp computer science bibliography, https://dblp.org}
}

@article{AngluinF16,
  author       = {Dana Angluin and
                  Dana Fisman},
  title        = {Learning regular omega languages},
  journal      = {Theor. Comput. Sci.},
  volume       = {650},
  pages        = {57--72},
  year         = {2016},
  url          = {https://doi.org/10.1016/j.tcs.2016.07.031},
  doi          = {10.1016/J.TCS.2016.07.031},
  bibsource    = {dblp computer science bibliography, https://dblp.org}
}

@inproceedings{FismanGZ24,
  author       = {Dana Fisman and
                  Emmanuel Goldberg and
                  Oded Zimerman},
  editor       = {Rastislav Kr{\'{a}}lovic and
                  Anton{\'{\i}}n Kucera},
  title        = {A Robust Measure on FDFAs Following Duo-Normalized Acceptance},
  booktitle    = {49th International Symposium on Mathematical Foundations of Computer
                  Science, {MFCS} 2024, Bratislava, Slovakia, August 26-30, 2024},
  series       = {LIPIcs},
  volume       = {306},
  pages        = {53:1--53:17},
  publisher    = {Schloss Dagstuhl - Leibniz-Zentrum f{\"{u}}r Informatik},
  year         = {2024},
  url          = {https://doi.org/10.4230/LIPIcs.MFCS.2024.53},
  doi          = {10.4230/LIPICS.MFCS.2024.53},
  bibsource    = {dblp computer science bibliography, https://dblp.org}
}

@article{Gold67,
	author    = {E.M. Gold},
	title     = {Language Identification in the Limit},
	journal   = {Information and Control},
	volume    = {10},
	number    = {5},
	pages     = {447--474},
	year      = {1967},
}

@article{BarzdinF72,
author={Barzdin, J. M. and Freivalds, R. V.},
title={On the prediction of total recursive functions},
journal={Doklady Akademii Nauk SSSR},
volume={206},
number={3},
year={1972},
pages={521--524},
note={(In Russian)}
}

@inproceedings{KuperbergPP19,
  author       = {Denis Kuperberg and
                  Laureline Pinault and
                  Damien Pous},
  editor       = {Piotrek Hofman and
                  Michal Skrzypczak},
  title        = {Coinductive Algorithms for {B}{\"{u}}chi Automata},
  booktitle    = {Developments in Language Theory - 23rd International Conference, {DLT}
                  2019, Warsaw, Poland, August 5-9, 2019, Proceedings},
  series       = {Lecture Notes in Computer Science},
  volume       = {11647},
  pages        = {206--220},
  publisher    = {Springer},
  year         = {2019},
  url          = {https://doi.org/10.1007/978-3-030-24886-4\_15},
  doi          = {10.1007/978-3-030-24886-4\_15},
  bibsource    = {dblp computer science bibliography, https://dblp.org}
}

@inproceedings{Vardi95,
  author       = {Moshe Y. Vardi},
  editor       = {Faron Moller and
                  Graham M. Birtwistle},
  title        = {An Automata-Theoretic Approach to Linear Temporal Logic},
  booktitle    = {Logics for Concurrency - Structure versus Automata (8th Banff Higher
                  Order Workshop, Banff, Canada, August 27 - September 3, 1995, Proceedings)},
  series       = {Lecture Notes in Computer Science},
  pages        = {238--266},
  publisher    = {Springer},
  year         = {1995},
  url          = {https://doi.org/10.1007/3-540-60915-6\_6},
  doi          = {10.1007/3-540-60915-6\_6},
  bibsource    = {dblp computer science bibliography, https://dblp.org}
}

@book{EisnerF06,
  author       = {Cindy Eisner and
                  Dana Fisman},
  title        = {A Practical Introduction to {PSL}},
  series       = {Series on Integrated Circuits and Systems},
  publisher    = {Springer},
  year         = {2006},
  url          = {https://doi.org/10.1007/978-0-387-36123-9},
  doi          = {10.1007/978-0-387-36123-9},
  isbn         = {978-0-387-35313-5},
  bibsource    = {dblp computer science bibliography, https://dblp.org}
}

@incollection{EisnerF18,
  author       = {Cindy Eisner and
                  Dana Fisman},
  editor       = {Edmund M. Clarke and
                  Thomas A. Henzinger and
                  Helmut Veith and
                  Roderick Bloem},
  title        = {Functional Specification of Hardware via Temporal Logic},
  booktitle    = {Handbook of Model Checking},
  pages        = {795--829},
  publisher    = {Springer},
  year         = {2018},
  url          = {https://doi.org/10.1007/978-3-319-10575-8\_24},
  doi          = {10.1007/978-3-319-10575-8\_24},
  bibsource    = {dblp computer science bibliography, https://dblp.org}
}

@book{CernyDHK2014,
  title     = {SVA: The Power of Assertions in SystemVerilog},
  author    = {Eduard Cerny and Surrendra Dudani and John Havlicek and Dmitry Korchemny},
  publisher = {Springer},
  year      = {2014},
  isbn      = {978-3-319-07138-1},
  date      = {2014-08-24},
  pages     = {590}
}

@article{DemriRabinovich2010,
    title      = {The complexity of linear-time temporal logic over the class of ordinals},
    author     = {Stephane Demri and Alexander Rabinovich},
    url        = {https://lmcs.episciences.org/1230},
    doi        = {10.2168/LMCS-6(4:9)2010},
    journal    = {Logical Methods in Computer Science},
    issn       = {1860-5974},
    volume     = {Volume 6, Issue 4},
    eid        = 9,
    year       = {2010},
    month      = {Dec},
}

@article{DemriNowak2006,
author = {Demri, St\'{e}phane and Nowak, David},
title = {Reasoning about Transfinite Sequences},
journal = {International Journal of Foundations of Computer Science},
volume = {18},
number = {01},
pages = {87-112},
year = {2007},
doi = {10.1142/S0129054107004589},
URL = { 
        https://doi.org/10.1142/S0129054107004589
},
eprint = { 
        https://doi.org/10.1142/S0129054107004589
}
}

@book{BuchiSiefkes1973,
  author    = {Julius Richard B{\"u}chi and Dirk Siefkes},
  title     = {The Monadic Second-Order Theory of All Countable Ordinals},
  series    = {Lecture Notes in Mathematics},
  volume    = {328},
  publisher = {Springer},
  year      = {1973}
}

@inproceedings{XuChenTopcu20,
  author       = {Zhe Xu and
                  Yuxin Chen and
                  Ufuk Topcu},
  title        = {Adaptive Teaching of Temporal Logic Formulas to Preference-based Learners},
  booktitle    = {Thirty-Fifth {AAAI} Conference on Artificial Intelligence, {AAAI}
                  2021, Thirty-Third Conference on Innovative Applications of Artificial
                  Intelligence, {IAAI} 2021, The Eleventh Symposium on Educational Advances
                  in Artificial Intelligence, {EAAI} 2021, Virtual Event, February 2-9,
                  2021},
  pages        = {5061--5068},
  publisher    = {{AAAI} Press},
  year         = {2021},
  url          = {https://doi.org/10.1609/aaai.v35i6.16640},
  doi          = {10.1609/AAAI.V35I6.16640},
  bibsource    = {dblp computer science bibliography, https://dblp.org}
}

@article{MascleFL23,
  author       = {Corto Mascle and
                  Nathana{\"{e}}l Fijalkow and
                  Guillaume Lagarde},
  title        = {Learning temporal formulas from examples is hard},
  journal      = {CoRR},
  volume       = {abs/2312.16336},
  year         = {2023},
  url          = {https://doi.org/10.48550/arXiv.2312.16336},
  doi          = {10.48550/ARXIV.2312.16336},
  eprinttype   = {arXiv},
  eprint       = {2312.16336},
  bibsource    = {dblp computer science bibliography, https://dblp.org}
}

@article{RahaRFN26,
  author       = {Ritam Raha and
                  Rajarshi Roy and
                  Nathana{\"{e}}l Fijalkow and
                  Daniel Neider},
  title        = {A scalable anytime algorithm for learning fragments of linear temporal
                  logic},
  journal      = {Formal Methods Syst. Des.},
  volume       = {68},
  number       = {1},
  pages        = {2},
  year         = {2026},
  url          = {https://doi.org/10.1007/s10703-025-00489-y},
  doi          = {10.1007/S10703-025-00489-Y},
  bibsource    = {dblp computer science bibliography, https://dblp.org}
}

@inproceedings{
HahnEtAl21,
title={Teaching Temporal Logics to Neural Networks},
author={Christopher Hahn and Frederik Schmitt and Jens U. Kreber and Markus Norman Rabe and Bernd Finkbeiner},
booktitle={International Conference on Learning Representations},
year={2021},
url={https://openreview.net/forum?id=dOcQK-f4byz}
}

@inproceedings{BordaisN025,
  author       = {Benjamin Bordais and
                  Daniel Neider and
                  Rajarshi Roy},
  editor       = {Olaf Beyersdorff and
                  Michal Pilipczuk and
                  Elaine Pimentel and
                  Kim Thang Nguyen},
  title        = {The Complexity of Learning {LTL}, {CTL} and {ATL} Formulas},
  booktitle    = {42nd International Symposium on Theoretical Aspects of Computer Science,
                  {STACS} 2025, Jena, Germany, March 4-7, 2025},
  series       = {LIPIcs},
  pages        = {19:1--19:20},
  publisher    = {Schloss Dagstuhl - Leibniz-Zentrum f{\"{u}}r Informatik},
  year         = {2025},
  url          = {https://doi.org/10.4230/LIPIcs.STACS.2025.19},
  doi          = {10.4230/LIPICS.STACS.2025.19},
  bibsource    = {dblp computer science bibliography, https://dblp.org}
}

@inproceedings{Changjian25,
author = {Zhang, Changjian and Kapoor, Parv and Dardik, Ian and Cui, Leyi and Meira-G\'{o}es, R\^{o}mulo and Garlan, David and Kang, Eunsuk},
title = {Constrained LTL Specification Learning from Examples},
year = {2025},
isbn = {9798331505691},
publisher = {IEEE Press},
url = {https://doi.org/10.1109/ICSE55347.2025.00160},
doi = {10.1109/ICSE55347.2025.00160},
booktitle = {Proceedings of the IEEE/ACM 47th International Conference on Software Engineering},
pages = {629–641},
numpages = {13},
location = {Ottawa, Ontario, Canada},
series = {ICSE '25}
}

@inproceedings{Pnueli77,
  author    = {Amir Pnueli},
  title     = {The Temporal Logic of Programs},
  booktitle = {Proceedings of the 18th Annual Symposium on Foundations of Computer Science (FOCS)},
  pages     = {46--57},
  year      = {1977},
  publisher = {IEEE},
  doi       = {10.1109/SFCS.1977.32}
}

@inproceedings{SchuppanBiere05,
  author    = {Viktor Schuppan and Armin Biere},
  title     = {Shortest Counterexamples for Symbolic Model Checking of {LTL} with Past},
  booktitle = {Tools and Algorithms for the Construction and Analysis of Systems (TACAS)},
  series    = {Lecture Notes in Computer Science},
  volume    = {3440},
  pages     = {493--509},
  year      = {2005},
  publisher = {Springer},
  doi       = {10.1007/978-3-540-31980-1_32}
}

@inproceedings{AmmonsEtAl03,
  author    = {Glenn Ammons and David Mandelin and Rastislav Bod{\'i}k and James R. Larus},
  title     = {Debugging Temporal Specifications with Concept Analysis},
  booktitle = {Proceedings of the ACM SIGPLAN Conference on Programming Language Design and Implementation (PLDI)},
  pages     = {182--195},
  year      = {2003},
  publisher = {ACM},
  doi       = {10.1145/781131.781152}
}

@inproceedings{NeiderGavran18,
  author    = {Daniel Neider and Ivan Gavran},
  title     = {Learning Linear Temporal Properties},
  booktitle = {Formal Methods in Computer-Aided Design (FMCAD)},
  pages     = {1--10},
  year      = {2018},
  publisher = {IEEE},
  doi       = {10.23919/FMCAD.2018.8603016}
}

@inproceedings{CamachoMcIlraith19,
  author    = {Andr{\'e}s Camacho and Sheila A. McIlraith},
  title     = {Learning Interpretable Models Expressed in Linear Temporal Logic},
  booktitle = {Proceedings of the International Conference on Automated Planning and Scheduling (ICAPS)},
  year      = {2019}
}

@article{GavranDarulovaMajumdar20,
  author  = {Ivan Gavran and Eva Darulova and Rupak Majumdar},
  title   = {Interactive Synthesis of Temporal Specifications from Examples and Natural Language},
  journal = {Proceedings of the ACM on Programming Languages},
  volume  = {4},
  number  = {OOPSLA},
  pages   = {201:1--201:26},
  year    = {2020},
  publisher = {ACM},
  doi     = {10.1145/3428269}
}

@article{GoldmanKearns95,
  author  = {Sally A. Goldman and Michael J. Kearns},
  title   = {On the Complexity of Teaching},
  journal = {Journal of Computer and System Sciences},
  volume  = {50},
  number  = {1},
  pages   = {20--31},
  year    = {1995},
  doi     = {10.1006/jcss.1995.1003}
}

@inproceedings{LiEtAlVISSOFT23LTLTimelines,
  author    = {Runming Li and Keerthana Gurushankar and Marijn J. H. Heule and Kristin Y. Rozier},
  title     = {What’s in a Name? {L}inear {T}emporal {L}ogic Literally Represents Time Lines},
  booktitle = {Proceedings of the {IEEE} Working Conference on Software Visualization (VISSOFT)},
  year      = {2023},
  pages     = {73--83},
  doi       = {10.1109/VISSOFT60811.2023.00018}
}

@article{WangGamboaRozierSCP2026WEST,
  author  = {Zili Wang and Laura P. Gamboa Guzman and Kristin Y. Rozier},
  title   = {{WEST}: Interactive Validation of Mission-time Linear Temporal Logic ({MLTL})},
  journal = {Science of Computer Programming},
  volume  = {248},
  pages   = {103365},
  year    = {2026},
  doi     = {10.1016/j.scico.2025.103365}
}

@inproceedings{FortinEtAlKR2022UniqueTemporalInstanceQueries,
  author    = {Marie Fortin and Boris Konev and Vladislav Ryzhikov and Yury Savateev and Frank Wolter and Michael Zakharyaschev},
  title     = {Unique Characterisability and Learnability of Temporal Instance Queries},
  booktitle = {Proceedings of the 19th International Conference on Principles of Knowledge Representation and Reasoning (KR 2022)},
  year      = {2022},
  pages     = {163--173},
  doi       = {10.24963/kr.2022/17},
  url       = {https://proceedings.kr.org/2022/17/}
}

@inproceedings{JungEtAlKR2024UniqueTemporalQueriesOntology,
  author    = {Jean Christoph Jung and Vladislav Ryzhikov and Frank Wolter and Michael Zakharyaschev},
  title     = {Unique Characterisability and Learnability of Temporal Queries Mediated by an Ontology},
  booktitle = {Proceedings of the 21st International Conference on Principles of Knowledge Representation and Reasoning (KR 2024)},
  year      = {2024},
  pages     = {487--497},
  doi       = {10.24963/kr.2024/46},
  url       = {https://proceedings.kr.org/2024/46/}
}

\appendix

\section{Proofs for Section~\ref{sec:prelim}}

\begin{lemma}\label{lem:finite-subword-characterization}
Let $w$ be an ordinal word and let
$\phi \in \LTL_{\ltlFS,\land,\lor,\top,\bot}[\ap]$.
Then
\[
w \models \phi
\iff
\exists u \text{ finite such that } u \models \phi \ \land\ u \le_{\mathrm{hom}} w .
\]
\end{lemma}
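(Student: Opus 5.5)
The two directions are of different difficulty. The right-to-left direction is an instance of monotonicity: if $u\models\phi$ and $u\le_{\mathrm{hom}} w$, then $w\models\phi$ by Lemma~\ref{lem:monotonicity}. So the work is in the left-to-right direction. Since this lemma sits in the appendix for Section~\ref{sec:prelim}, I will not use the canonical-set lemma, which may itself depend on it. Instead I prove that direction directly by structural induction on $\phi$, for arbitrary ordinal words $w$.

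For the induction I strengthen the statement slightly: if $w\models\phi$, then there is a finite $u\models\phi$ together with a homomorphism $h:u\to w$. I do not fix any particular image for the last position, since none is needed.

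The base and Boolean cases are as follows.
\begin{itemize}
\item For $\top$, take $u=\emptyset$, the one-letter word whose only letter is the empty valuation, with $h(0)=0$.
\item For $\bot$ the claim is vacuous.
\item For an atom $p$, take $u=\{p\}$; this embeds since $p\in w[0]$.
\item For a disjunction, apply the induction hypothesis to whichever disjunct holds.
\end{itemize}

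The conjunction case $\phi_1\land\phi_2$ is where the real combinatorics lies, and I expect it to be the main obstacle. The induction hypothesis gives finite $u_1,u_2$ with homomorphisms $h_1,h_2$ into $w$. I take the finite set $P=h_1(\dom u_1)\cup h_2(\dom u_2)$ of positions of $w$, which contains $0$. Let $u$ be the finite word whose positions are the elements of $P$ in increasing order, labelled by $w$ restricted to $P$. The inclusion $g:u\to w$ is a homomorphism. Each $h_i$ factors as $g\circ h_i'$, where $h_i'$ is a homomorphism from $u_i$ into $u$: it is strictly monotone, sends $0$ to $0$, and labels are preserved because $u_i[j]\subseteq w[h_i(j)]=u[h_i'(j)]$. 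By Lemma~\ref{lem:monotonicity}, $u\models\phi_1$ and $u\models\phi_2$, so $u\models\phi_1\land\phi_2$ and $u\le_{\mathrm{hom}} w$ via $g$. The only points to check carefully are that restricting to the finite set $P$ gives a genuine word, whose length is the order type $|P|<\omega$, and that monotonicity applies even though $u$ is shorter than $w$. The latter is immediate because Lemma~\ref{lem:monotonicity} is stated for arbitrary ordinal words.

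For $\ltlFS\psi$, there is some $\alpha$ with $0<\alpha<|w|$ and $w[\alpha..]\models\psi$. (I read $\ltlFS=\ltlX\ltlF$ over ordinals as requiring a position $\alpha\ge1$.) The induction hypothesis gives a finite $v\models\psi$ with a homomorphism $h:v\to w[\alpha..]$. I set $u=\emptyset\cdot v$ and define $h'(0)=0$ and $h'(1+j)=\alpha+h(j)$. Then $h'$ is strictly increasing because $\alpha\ge1$, so $\alpha+h(j)>0$, and it preserves labels. Finally $u\models\ltlFS\psi$ since $u[1..]=v$.

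This completes the induction. The remaining care is just the ordinal bookkeeping in the $\ltlFS$ case: checking that $\ltlFS$ indeed means $\exists\alpha\ge1$ under the given semantics, and that ordinal addition $\alpha+h(j)$ is strictly monotone in $h(j)$.
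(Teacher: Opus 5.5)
Your proof is correct, and the underlying idea matches the paper's (keep only finitely many witnessing positions of $w$), but you organize it differently. The paper argues globally. It picks witness positions $i_1<\dots<i_m$ for the $\ltlFS$-subformulas of $\phi$, sets $u:=w[0]\,w[i_1]\cdots w[i_m]$ with $h(j)=i_j$, and asserts $u\models\phi$ ``by construction'' because $\phi$ is negation-free.

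You instead run a structural induction. Restriction of $w$ to a finite set of positions is needed only at conjunctions, and there Lemma~\ref{lem:monotonicity} certifies that the restricted word satisfies both conjuncts. This supplies exactly the verification the paper leaves implicit. It also sidesteps two imprecisions in the paper's text:
\begin{itemize}
\item Witnesses need not lie at \emph{finite} positions; for example, $\emptyset^\omega\{p\}\models\ltlFS p$ holds only via position $\omega$. You handle this correctly with $\alpha\geq 1$ and the shift $\alpha+h(j)$.
\item ``The minimal witness positions for all $\ltlFS$-subformulas'' is not well defined for nested occurrences, whose witnesses are relative to where the enclosing subformula is evaluated, or for occurrences under a disjunct that does not hold. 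Your induction handles both automatically.
\end{itemize}

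Your argument is essentially the paper's proof of Lemma~\ref{lem:can-hom-min}. The difference is that you take labels from $w$ rather than from $e_1,e_2$, which lets monotonicity replace the check of membership in $e_1\mergeop e_2$.

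Your concern about circularity is unnecessary, though. Lemmas~\ref{lem:can-sound} and~\ref{lem:can-hom-min} do not use the present lemma, and $\Ecan(\phi)$ consists of finite words, so the left-to-right direction follows from them at once.

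What the paper's one-shot construction buys is a direct description of $u$ as a restriction of $w$ itself, with $w$'s own letters; your $u$ may carry smaller letters. You can recover this form with one more application of monotonicity, by restricting $w$ to the image of your homomorphism. What your route buys is a complete and checkable argument.
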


\begin{proof}
($\Leftarrow$)
Assume there exists a finite word $u$ such that $u \models \phi$ and
$u \le_{\mathrm{hom}} w$.
By $\le_{\mathrm{hom}}$-monotonicity of
$\LTL_{\ltlFS,\land,\lor,\top,\bot}$ formulas, it follows that
$w \models \phi$.

\smallskip
($\Rightarrow$)
Assume $w \models \phi$.
Since $\phi$ belongs to the fragment
$\LTL_{\ltlFS,\land,\lor,\top,\bot}$, all temporal requirements in $\phi$
are strict-eventualities.
Hence every subformula of the form $\ltlFS(\psi)$ occurring in $\phi$
is witnessed at some finite successor position of $w$.

Let $i_1 < \dots < i_m$ be the minimal witness positions in $w$ for all
$\ltlFS$-subformulas of $\phi$.
Define the finite word
\[
u := w[0] \cdot w[i_1] \cdot \ldots \cdot w[i_m].
\]

By construction, all strict-eventuality requirements of $\phi$ are satisfied in $u$,
and since $\phi$ contains no negation and only monotone Boolean connectives,
we obtain $u \models \phi$.

Define a homomorphism
\[
h(0)=0, \qquad h(j)=i_j \ \text{ for } j>0 .
\]
Then $h$ is order-preserving and label-preserving, hence
$u \le_{\mathrm{hom}} w$.
\end{proof}

\thmFiniteInfiniteEquivalence*

\begin{proof}
The left-to-right direction is immediate.
For the right-to-left direction,
assume that $\semafin{\phi}=\semafin{\psi}$.
Let $w$ be an arbitrary ordinal word and suppose that $w \models \phi$.
By \cref{lem:finite-subword-characterization}, there exists a finite word
$u$ such that $u \models \phi$ and $u \le_{\mathrm{hom}} w$.
It follows that $u \models \psi$.
Applying \cref{lem:finite-subword-characterization} again, it follows that
$w \models \psi$. Similarly, from $w \models \psi$ it follows that $w \models \phi$.
Hence $\sema{\phi}=\sema{\psi}$.

For $\LTL_{\ltlX,\land,\neg}$,  assume that $\semafin{\phi}=\semafin{\psi}$. Let $n$ be a number greater than the maximum nesting depth of temporal operators in $\phi$ and $\psi$. 
Let $w$ be an arbitrary infinite ordinal word and let
$w'$ be the length-$n$ prefix of $w$. 
Then $w\models\phi$ iff $w'\models\phi$ iff 
$w'\models\psi$ iff $w\models\psi$.
Therefore, $\sema{\phi}=\sema{\psi}$.

\end{proof}

\thmRegularExamples*
\begin{proof}
This follows from results in~\cite{BuchiSiefkes1973} and~\cite{DemriRabinovich2010}.
B{\"u}chi and Siefkes~\cite{BuchiSiefkes1973} developed a model of automata on countable ordinals  extending
B{\"u}chi automata for $\omega$-words, and they showed that recognizable languages have regular (or, ``rational'') representatives, i.e., words that have a finite presentation
using concatenation and $\omega$-iteration. As shown 
in \cite[Lemma~2.2 and~2.3]{DemriRabinovich2010}, every LTL-formula $\chi$ can be translated into an equivalent ordinal automaton $A_\chi$. 
It follows that if $\sema{\phi}\setminus\sema{\psi}$ is non-empty, then  $L(A_{\varphi\land\neg\psi})$ is non-empty and hence contains a regular transfinite word.
\end{proof}

\section{Proofs for Section~\ref{sec:finite}}

\thmMainFinite*

We prove the positive cases and the complementary negative cases separately. 
As illustrated in Figure~\ref{fig:finite-fragments-diagram}, every fragment covered by Theorem~\ref{thm:ltlfin} is either subsumed by one of the positive fragments of \autoref{thm:finite-char-various-fragments} or subsumes one of the negative fragments of \autoref{thm:no-finite-char-various-fragments}.

\begin{theorem}[Positive results]\label{thm:finite-char-various-fragments} 
Let $\ap$ be any finite non-empty set.   
\begin{enumerate}
        
    \item \label{B11} $\LTL_{\ltlFS,\ltlX,\land,\top}[\ap]$ admits finite characterizations over $\finwords$.
        
    \item \label{B12}$\LTL_{\ltlF,\lor,\top,\bot}[\ap]$  admits finite characterizations over $\finwords$.
        
    \item \label{B13} $\LTL_{\ltlF,\neg,\top,\bot}[\ap]$  admits finite characterizations over $\finwords$.

    \item \label{B14} $\LTL_{\ltlF,\ltlFS,\ltlX,\top}[\ap]$  admits finite characterizations over $\finwords$. 

    \item \label{B15} $\LTL_{\ltlX,\neg}[\ap]$  admits finite characterizations over $\finwords$. 

    \item \label{B16} $\LTL_{\ltlFS,\neg}[\ap]$  admits finite characterizations over $\finwords$. 
    \end{enumerate}
\end{theorem}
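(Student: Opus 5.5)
The six items split into two groups. Items~\ref{B12} and~\ref{B13} concern fragments with only finitely many formulas up to equivalence over $\finwords$. Items~\ref{B11}, \ref{B14}, \ref{B15} and~\ref{B16} concern infinite fragments whose formulas have a rigid, essentially path-shaped normal form.

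For the first group I will prove a general lemma. If $\LTL_O[\ap]$ has only finitely many classes modulo $\semafin{\cdot}$-equivalence, then it admits finite characterizations over $\finwords$. The proof picks, for each class inequivalent to the target $\phi$, one finite word in the symmetric difference, labeled according to $\phi$. For item~\ref{B12} the normal form comes from $\ltlF\ltlF\chi\equiv\ltlF\chi$ and $\ltlF(\chi_1\lor\chi_2)\equiv\ltlF\chi_1\lor\ltlF\chi_2$, together with absorption of $\top,\bot$. By induction, every formula is then equivalent to $\top$, to $\bot$, or to $\bigvee_{p\in A}p\lor\bigvee_{p\in B}\ltlF p$ with $A,B\subseteq\ap$. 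For item~\ref{B13} I will show by induction that every formula is equivalent to $\top$, to $\bot$, or to $\pm\chi$ with $\chi\in\{p,\ltlF p,\ltlF\neg p,\ltlF\neg\ltlF p\}$. The intended readings are: ``$p$ now'', ``$p$ somewhere'', ``$p$ fails somewhere'', and ``$p$ fails at the last position''. The needed identities over finite words are $\ltlF\ltlF\chi\equiv\ltlF\chi$, $\ltlF\neg\ltlF\neg\ltlF p\equiv\neg\ltlF\neg\ltlF p$ (``last letter has $p$''), and closure of this list under $\ltlF$ and $\neg$, which is a finite check.

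For item~\ref{B11} I simply invoke the result of Fortin et al.\ recalled at the start of Section~\ref{sec:finite}. It covers exactly $\LTL_{\ltlX,\ltlFS,\land,\top}$, without $\bot$, and the footnote confirms that their proof is sound for this fragment.

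For items~\ref{B14}, \ref{B15} and~\ref{B16} there are no binary connectives. Every formula is therefore a unary chain $\mathsf{O}_1\cdots\mathsf{O}_k\,a$, with $a\in\ap\cup\{\top\}$ and each $\mathsf{O}_i$ drawn from the allowed unary operators. In the negation cases I first push negations to normal form: $\neg\neg$ cancels, and $\neg\ltlX\neg$ becomes a ``weak next''. For $\ltlFS$ I proceed analogously, with $\neg\ltlFS\neg$ meaning ``at all later positions''. For item~\ref{B14} I simplify using $\ltlF\ltlF\equiv\ltlF$, the collapse of $\ltlF\ltlFS$ and $\ltlFS\ltlF$ to $\ltlFS$, and the fact that $\ltlX$ and $\ltlFS$ applied to $\top$ both express only ``length at least $j$''. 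Each normal-form formula is then described by a small finite amount of data: a count of strict steps and a terminal atom or literal.

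The characterization proceeds in three steps. First, a minimal positive example (or a minimal negative example, when $\phi$ is a negated form) bounds the strict-step count of any fitting $\psi$ from above. Second, a fixed finite batch of short words of length at most $k+2$ pins down the terminal atom and the exact pattern. Third, a word of the form $\emptyset^{j}\sigma$ at the critical length rules out candidates with fewer strict steps.

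The step I expect to be hardest is item~\ref{B16}, together with the analogous mixing of $\ltlFS$ and $\neg$ in item~\ref{B15}. Alternating $\ltlFS$ and $\neg$ yields formulas such as $\ltlFS\neg\ltlFS\neg p$ (``$p$ at the last position, and length at least~$2$''). These are not obviously captured by a bounded-depth parameter. I will first try to classify all $\LTL_{\ltlFS,\neg}[\ap]$ formulas, by induction, as Boolean combinations of ``length $\ge j$'', ``$p$ at the first position'', and ``$p$ at some, or at the last, position strictly after the first'', each with a single numeric parameter~$j$. With that classification in hand, the three-step scheme applies: the finitely many short words that distinguish $\phi$ from all candidates whose parameter lies below $j+2$ suffice, and long-parameter candidates are refuted by the single word witnessing $\phi$ at minimal length.
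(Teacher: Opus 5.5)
Items 1--3 are fine and follow the paper. Your four-predicate normal form for item 3 is a valid substitute for the paper's $\ltlF/\ltlG$ identities. Item 4 also goes through: the positive word $\ap^{n+1}$ refutes every candidate with more than $n$ strict steps, and $\ltlF\ltlF\equiv\ltlF$ leaves only finitely many candidates below that bound. (For a genuine normal form you would also need $\ltlF\ltlX\equiv\ltlX\ltlF$ over finite words, which your list omits.)

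The gaps are in items 5 and 6. First, two of your claims fail once negation is present: that a single minimal example bounds the depth from above, and that long-parameter candidates are refuted by the single shortest witness of $\phi$. The reason is that the dual operators are vacuously true near the end of a word. On a word of length $L$, a chain $o_1\cdots o_m\alpha$ of strong and weak nexts with $m\geq L$ is true iff $o_L$ is the weak next. In $\LTL_{\ltlFS,\neg}$, on a word of length $L$, every $\ltlFS^m\alpha$ with $m\geq L$ is false and every $\ltlGS^m\beta$ (where $\ltlGS=\neg\ltlFS\neg$) is true. So any single word, whatever its label, fits infinitely many inequivalent deep candidates. The depth has to be pinned by examples of \emph{both} labels. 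In item 5 the paper uses $\emptyset^n\{p\}$ and $\emptyset^{n+1}$ with opposite labels: a deeper chain is constant on words of length $n+1$, and a shallower one never inspects position $n$. In item 6, for $\phi=\ltlFS^n p$, the positive $\emptyset^n\{p\}$ kills the deep $\ltlFS^m\alpha$. The deep $\ltlGS^m\beta$, however, are true on all words of length at most $m$, so they must be killed by a negative example such as $\ap^n$; dually for $\phi=\ltlGS^n\beta$.

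Second, item 6 rests on a classification you have not established, and the one you sketch is wrong. Your reading of $\ltlFS\neg\ltlFS\neg p$ is incorrect. Here $\neg\ltlFS\neg p$ is a \emph{strict} always, vacuously true at the last position, so $\ltlFS\neg\ltlFS\neg p$ only says the word has length at least $2$ and constrains nothing about the last letter. You evaluated it as if it were $\ltlF\neg\ltlF\neg p$. This is exactly the missing idea: over finite words, $\ltlFS\ltlGS\chi\equiv\ltlFS\top$ and $\ltlGS\ltlFS\chi\equiv\ltlGS\bot$ (length $1$). Hence, after pushing negations inward, every alternation between $\ltlFS$ and $\ltlGS$ collapses. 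Each formula is then equivalent to $p$, $\neg p$, $\ltlFS^n\alpha$ with $\alpha\in\{p,\neg p,\top\}$, or $\ltlGS^n\beta$ with $\beta\in\{p,\neg p,\bot\}$. No ``last position'' predicates occur. With this normal form, a handful of short words of both labels pins down $n$ and the final literal, as in the paper. (Item 5 involves $\ltlX$, not $\ltlFS$, so nothing collapses there. That is why each operator $o_{k+1}$ has to be fixed separately by a short all-$\emptyset$ word.)
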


\begin{proof}
1.  This is due to \cite{FortinEtAlKR2022UniqueTemporalInstanceQueries}. For the sake of being self-contained, we include the argument.
    Let a $\LTL_{\ltlFS,\ltlX,\land,\top}$-formula $\phi$ be given, and 
    let $n$ be the maximal nesting depth of temporal operators in $\phi$.
    Consider the finite word $w = (\ap)^{n+1}$. It can be shown by a straightforward induction that, for all $\LTL_{\ltlFS,\ltlX,\land,\top}$-formulas $\psi$, $w\models\phi$ if and only if the maximal nesting depth of temporal operators in $\phi$ is at most $n$. Since there are only finitely many non-equivalent formulas with maximal temporal nesting depth $n$, we can 
    uniquely characterize $\phi$ by including $w$ as a positive example, 
    together with, for each formula $\psi$ of maximal temporal nesting depth $n$
    that is not equivalent to $\phi$, a distinguishing example.

2. There are only finitely many formulas up to equivalence in this fragment (in a given finite set $\ap$). This follows immediately from the fact that $\ltlF(\phi\lor\psi)\equiv \ltlF(\phi)\lor F(\psi)$ and $\ltlF\ltlF\phi\equiv \ltlF\phi$ and $\ltlF\top\equiv \top$ and $\ltlF\bot\equiv \bot$. 

3.  Recall that $\ltlG\phi$ is shorthand for $\neg\ltlF\neg\phi$.
    By rewriting to negation normal form and applying
    the equivalences $\ltlF\bot\equiv\bot$, $\ltlG\bot\equiv\bot$, $\ltlF\top\equiv\top$
    and $\ltlG\top\equiv\top$, every formula in this fragment can be
    written in the form
    \begin{enumerate}
        \item $o_1\ldots, o_n p$ or 
        \item $o_1\ldots, o_n \neg p$ or 
        \item $\top$ or 
        \item $\bot$
    \end{enumerate}
    with $o_i\in\{\ltlF,\ltlG\}$. 
    Furthermore, we have that 
    $\ltlF\ltlF p\equiv \ltlF p$ and $\ltlG\ltlG p\equiv \ltlG p$, and 
    $\ltlF\ltlG\ltlF p\equiv \ltlG \ltlF p$ and
    $\ltlG\ltlF\ltlG p\equiv \ltlF\ltlG p$. Thus, only finitely many possible formulas up to equivalence, and therefore each formula trivially admits a finite characterization.

4.    By applying the equivalences $\ltlF\ltlFS p\equiv \ltlFS\ltlF p\equiv \ltlFS p$ and $\ltlX\ltlF p\equiv \ltlF\ltlX p=\ltlFS p$ we can rewrite every
    formula in this fragment to a formula the form 
    \begin{enumerate}
        \item $\ltlX^n\ltlF p$ or 
        \item $\ltlX^n\ltlF \top$ or
        \item $\ltlX^n p$ or 
        \item $\ltlX^n\top$
    \end{enumerate}
    Let us refer to the number $n$ as the \emph{depth} of the formula in question.
    Let $\phi$ be any formula of one of the above forms, with depth $n$.  
    Let $w=(\ap)^{n+1}$. Then $w\models\phi$ and, for all formulas $\psi$ of the above forms of depth greater than $n$, we have that 
    $w\not\models\psi$. Thus, we can construct a unique characterization for $\phi$
    by taking $w$ as a positive example and including, for each formula $\psi$
    of the above forms of depth at most $n$ that is not equivalent to $\phi$, a distinguishing example.

5. Let $\ltlX^0\phi$ be short for $\neg \ltlX \neg\phi$,
which means ``$\phi$ is true at the next timepoint or there is no next timepoint''. 
Then every formula in this fragment can be rewritten to the form
\[ o_1\ldots o_n \alpha\]
with $o_i\in\{\ltlX,\ltlX^0\}$ ($n\geq 0$)
and $\alpha$ of the form $p$ or $\neg p$. Such a formula is uniquely characterized by the following set of labeled examples:
\begin{itemize}
    \item $\emptyset^{k}$ for $k=0\ldots n-1$ labeled as positive/negative according to whether $o_{k+1}=\ltlX^0$ or $o_{k+1}=\ltlX$;
    \item $\emptyset^{n} \{p\}$ and $\emptyset^{n+1}$ labeled as positive/negative according to whether $\alpha=p$ or $\alpha=\neg p$.
\end{itemize}

6. Let $\hat{\ltlG}$ be short for $\neg\ltlFS\neg$. By rewriting to negation normal form and applying the equivalences $\hat{\ltlG}\ltlFS\phi\equiv\ltlG\bot$ and $\ltlFS\hat\ltlG\phi\equiv \ltlFS\top$  and $\hat{\ltlG}\top\equiv\top$ and $\ltlFS\bot\equiv\bot$, we can rewrite all formulas in this fragment to one of the following forms:
\begin{enumerate}
    \item $\ltlFS^n \alpha$
($n> 0$) with $\alpha$ of the form $p$ or $\neg p$ or $\top$, or
\item  $\hat{\ltlG}^n \beta$ ($n>0$) with $\beta$ of the form $p$ or $\neg p$ or 
$\bot$.
    \item $p$ or $\neg p$
\end{enumerate}
A formula of the first type is uniquely characterized by the examples $\emptyset^{n-1}$ and $\ap^{n-1}$ labeled as a negative example, together with
    the examples $\emptyset^n\{p\}$ and $\emptyset^{n+1}$ labeled according to whether $\alpha$ is of the form $p$ or $\neg p$ or $\top$.
    A formula of the second type is uniquely characterized by the examples $\emptyset^{n-1}$ and $\ap^{n-1}$ as a positive examples, and the examples
    $\emptyset^{n}$ and $\emptyset^{n-1}\{p\}$,
    labeled according to whether $\alpha$ is of the form $p$ or $\neg p$ or $\bot$.
    Finally, a formula of the third type is uniquely characterized by $\{p\}$ and $\emptyset$ labeled according to whether 
    the formula is $p$ or $\neg p$.
\end{proof}

\begin{theorem}[Negative Results]\label{thm:no-finite-char-various-fragments}
    Let $|\ap|\geq 2$.
    \begin{enumerate}
        \item \label{B21} $\LTL_{\ltlF,\land}[\mathit{\ap}]$ does not admit finite characterizations over $\finwords$ \cite[Example 1]{FortinEtAlKR2022UniqueTemporalInstanceQueries}
          
        \item \label{B22} $\LTL_{\ltlX,\lor}[\mathit{\ap}]$
        and $\LTL_{\ltlFS,\lor,}[\mathit{\ap}]$ do not  admit finite characterizations over $\finwords$.
        
        \item \label{B23} $\LTL_{\ltlX,\bot}[\mathit{\ap}]$
        and $\LTL_{\ltlFS,\bot}[\mathit{\ap}]$ do not  admit finite characterizations over $\finwords$.
        
        \item \label{B24} $\LTL_{\ltlX,\land,\neg}[\mathit{\ap}]$, $\LTL_{\ltlF,\land,\neg}[\mathit{\ap}]$
        and $\LTL_{\ltlFS,\land,\neg}[\ap]$
        do not admit finite characterizations over $\finwords$.

        \item \label{B25} $\LTL_{\ltlU}[\ap]$  does not admit finite unique characterizations over $\finwords$.

    \item \label{B26} $\LTL_{\ltlFS,\ltlX,\neg}[\ap]$ does not  admit finite characterizations over $\finwords$.

    \item \label{B27} $\LTL_{\ltlF,\ltlFS,\neg}[\ap]$ does not 
    admit finite characterizations over $\finwords$.
    \end{enumerate}
\end{theorem}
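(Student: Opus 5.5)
The plan is to treat each item separately with one pattern. For a fixed target $\phi$ in the fragment, exhibit an infinite family $(\psi_n)_n$ of formulas in the same fragment such that $\sema{\psi_n}\neq\sema{\phi}$ for every $n$, but $\psi_n$ and $\phi$ agree on every finite word of length below some threshold $t(n)\to\infty$. Then any finite sample $E$ fitting $\phi$ is also fitted by $\psi_n$ once $t(n)$ exceeds the maximal length of a word in $E$, so $E$ cannot uniquely characterize $\phi$. Negative results transfer upward along inclusion of operator sets, so for each item it suffices to treat the smallest fragment listed, or a fragment in which the listed one can define the needed formulas.

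Items 1–4 follow this scheme directly.
\begin{enumerate}
\item Take $\phi=\ltlF(p\land q)$ and let $\psi_n$ be the $n$-fold alternation $\ltlF(p\land\ltlF(q\land\ltlF(p\land\cdots)))$ (Example~\ref{ex:Fpandq}). Since $\ltlF$ is reflexive, a position labeled $\{p,q\}$ witnesses every level, so $\phi$ implies $\psi_n$. Conversely, if no position carries both $p$ and $q$, the witnesses of $\psi_n$ must be strictly increasing, so $\psi_n$ needs at least $n$ positions. Hence $\phi$ and $\psi_n$ agree on all words of length $<n$, while $(\{p\}\{q\})^{n}$ separates them.
\item Take $\phi=p$ with $\psi_n=p\lor\ltlX^n p$, respectively $\psi_n=p\lor\ltlFS^n p$. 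The second disjunct is false on every word of length at most $n$.
\item Take $\phi=\bot$ with $\psi_n=\ltlX^n p$, respectively $\psi_n=\ltlFS^n p$. These are satisfiable only on words of length greater than $n$.
\item Use $p\land\neg p\equiv\bot$ to reduce the $\ltlX$ and $\ltlFS$ cases to item 3. For $\ltlF$, compare $\bot$ with the alternation $\ltlF(p\land\ltlF(\neg p\land\ltlF(p\land\cdots)))$ of length $n$. This formula is satisfiable, but it forces $n$ pairwise distinct positions.
\end{enumerate}

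For items 5–7 there is no $\land$ or $\lor$, so the work is in finding definable formulas that merely count length. In $\LTL_{\ltlFS,\ltlX,\neg}$ and $\LTL_{\ltlF,\ltlFS,\neg}$, let $\ltlGS:=\neg\ltlFS\neg$. The formula $L_1:=\ltlGS\ltlFS p$ says ``the word has length $1$'': every strictly later position needs a strictly later $p$, which fails at the last position. Then $\ltlX^n L_1$ expresses ``length exactly $n+1$''. In $\LTL_{\ltlF,\ltlFS,\neg}$ I would instead use iterations of $\ltlFS$ and $\ltlGS$ around $L_1$ to express ``length at least $n$'' and ``length at most $n$''. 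The plan is to pick a fixed target, for instance a formula expressing ``length $\neq 1$'' or ``length $\le 1$'', and let $\psi_n$ be the same formula with an extra clause that is only triggered at length $\ge n$, realized through unary nesting (for example $\neg\ltlX^n\neg$ or $\ltlGS^n$ applied to $L_1$-type subformulas).

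For item 5 ($\LTL_{\ltlU}$) the first attempt will be $\phi=p\,\ltlU\,q$ against nested alternations $\psi_n=p\,\ltlU\,(q\,\ltlU\,(p\,\ltlU\cdots))$ whose innermost argument is chosen to produce an inequivalence that only shows up once there are about $n$ alternations between $p$-blocks and $q$-blocks. On a short word, the non-strict $\ltlU$ lets these nested witnesses collapse to the same witness as for $\phi$.

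The main obstacle I expect is exactly these unary, negation-only fragments (items 6 and 7) together with the bare-$\ltlU$ case. There, the normal-form collapses used in the positive results (item 6 of Theorem~\ref{thm:finite-char-various-fragments}) threaten to make every candidate $\psi_n$ equivalent to $\phi$. So each proposed family has to be checked carefully both for genuine inequivalence and for agreement on all short words, including the words of length $0$ and $1$.
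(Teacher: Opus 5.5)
Items 1--4 are correct and are essentially the paper's arguments. For the $\ltlX$ and $\ltlFS$ cases of item 4 the paper reduces to item 2 via definable $\lor$, where you reduce to item 3 via $p\land\neg p$; both work. The gap is items 5--7. There you only describe where you would search; no family is exhibited or checked, so nothing is proved.

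For items 6 and 7 the missing step is short, and it is what the paper does.

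Item 7: in $\LTL_{\ltlF,\ltlFS,\neg}$ you have the reflexive $\ltlG=\neg\ltlF\neg$. The formula $\ltlG\ltlFS p$ is false on every nonempty finite word, because $\ltlFS p$ fails at the last position. So it plays the role of $\bot$, and item 7 reduces to item 3 with $\psi_n=\ltlFS^n p$. If you admit the empty word as an example, use $\ltlF\ltlG\ltlFS p$ instead, which is false on every finite word.

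Item 6: your $L_1=\neg\ltlFS\neg\ltlFS p$ is the right target (it says ``length at most $1$'', not ``length $1$''). Compare it with $\psi_n=\neg\ltlX\neg\ltlX^n p$. This holds iff $|w|\le 1$, or $|w|\ge n+2$ and $p\in w[n+1]$. So it agrees with $L_1$ on all words of length at most $n+1$ and differs on $\{p\}^{n+2}$.

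For item 5, your plan cannot succeed over two letters, and the obstruction is not short words. Over $\ap=\{p,q\}$, the set $\{p,\,q,\,p\ltlU q,\,q\ltlU p\}$ is closed under $\ltlU$ up to equivalence:
\begin{itemize}
\item $X\ltlU X\equiv X$ and $a\ltlU(a\ltlU b)\equiv a\ltlU b$;
\item $X\ltlU Y\equiv Y$ whenever $X$ entails $Y$, which covers $p\ltlU(q\ltlU p)$ and $q\ltlU(p\ltlU q)$;
\item a direct check gives $(p\ltlU q)\ltlU q\equiv p\ltlU q$ and $(p\ltlU q)\ltlU p\equiv(p\ltlU q)\ltlU(q\ltlU p)\equiv q\ltlU p$;
\item the remaining cases follow by swapping $p$ and $q$.
\end{itemize}
Hence $\LTL_{\ltlU}[\{p,q\}]$ has only four formulas up to equivalence and trivially admits finite characterizations. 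So item 5 is false for $|\ap|=2$.

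The paper's proof uses a third letter: $\phi=(p\ltlU q)\ltlU r$ against $\psi_0=r$, $\psi_{n+1}=p\ltlU(q\ltlU\psi_n)$, with $n$ larger than the number of $p/q$-alternations in the positive examples. That family does not work either. Each $\psi_n$ entails $\psi_{n+1}$, and $\{p\}\{r\}$ satisfies $\psi_1$ but not $\phi$ (it contains no $q$). So the sample $\{(\{p\}\{r\},-),(\{q\}\{r\},+)\}$ fits $\phi$ but no $\psi_n$. Item 5 must therefore at least be restricted to $|\ap|\ge 3$, and even then it needs a family different from both yours and the paper's.
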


\begin{proof}
    \begin{enumerate}
        \item As observed in \cite{FortinEtAlKR2022UniqueTemporalInstanceQueries}, the formula $\ltlF(p\land q)$ cannot be distinguished from all formulas of the form $\ltlF(p\land\ltlF(q\land\ltlF(p\land\cdots)))$ using only finitely many labeled examples. 
        
       \item The formula $p$ cannot be distinguished from all formulas of the form $p\lor\ltlX^n p$, respectively,
       $p\lor\ltlFS^n p$ ($n>0$) 
       using only finitely many labeled examples. 
        
    \item 
        The argument is similar to the one above: we 
        cannot distinguish $\bot$ from all formulas of the form $\ltlX^n p$, respectively,
        $\ltlFS^n p$, using only finitely many labeled examples.

    \item 
        For $\LTL_{\ltlX,\land,\neg}[\mathit{\ap}]$ and $\LTL_{\ltlFS,\land,\neg}[\ap]$, the result follows from item 2 above (since disjunction is definable from conjunction and negation). 
        For $\LTL_{\ltlF,\land,\neg}[\mathit{\ap}]$, we reason as follows: we can express
        $\bot$ as $p\land\neg p$. We cannot distinguish
        $\bot$ from all formulas of the form $p\land \ltlF(\neg p\land \ltlF(p\land \ltlF(\neg p\ldots)))$ using only finitely many examples. 

    \item 
Let \(\phi := (p \ltlU q)\ltlU r\), and define \(\psi_0 := r\) and \(\psi_{n+1} := p \ltlU (q \ltlU \psi_n)\). Then \(\phi\) allows arbitrarily many alternations of \(p\) and \(q\) before the first occurrence of \(r\), whereas \(\psi_n\) bounds this number by \(n\). In particular, \((\{p\}\{q\})^{n+1}\{r\} \models \phi\) but \(\not\models \psi_n\).

Let \(S\) be any finite sample. For each positive example \(w \in S\) with \(w \models \phi\), fix a finite witness position of \(r\), and let \(m\) be the maximal number of \(p/q\)-alternations before such a witness over all these examples. Choose \(n>m\). Then \(\phi\) and \(\psi_n\) agree on all examples in \(S\), although they are not equivalent. Hence no finite sample uniquely characterizes \(\phi\).
        \item Consider the formula
$\neg\ltlFS\neg\ltlFS p$, it is equivalent to $\neg\ltlX\top$. We cannot distinguish this formula from $\neg\ltlX\neg\ltlX^n p$, for all $n$, using only finitely many finite examples: an example that distinguishes the two formulas needs to be a word of length at least $n$.

        \item Recall that $\ltlG\phi$ is short for
        $\neg\ltlF\neg\phi$. Since 
        $\ltlG\ltlFS\phi$ is equivalent, over finite words, to $\bot$, and we
        already know from item 3 above that
        $\LTL_{\ltlFS,\bot}[\ap]$ does not admit finite characterizations over $\finwords$,
        it follows
        that $\LTL_{\ltlF,\ltlFS,\neg}[\ap]$ does not admit finite characterizations over $\finwords$ either. 
    \end{enumerate}
\end{proof}

\def\labeldist{1.0mm}
\definecolor{darkgreen}{rgb}{0,0.5,0}

\begin{figure}
\scalebox{0.8}{
    \centering
    \label{fig:nfa}
\begin{tikzpicture}[
    >={Stealth},
    node distance=1.4cm,
    start/.style={circle, fill=black, minimum size=4pt, inner sep=0pt, initial, initial where=above, initial text=},
    vertex/.style={rectangle, rounded corners=1.4mm,  draw, minimum size=6mm, inner sep=1pt, font=\scriptsize},
    redvertex/.style={vertex, fill=red!30},
    greenvertex/.style={vertex, fill=green!30},
    label/.style={font=\scriptsize\bfseries}
]

\node[start] (s) {};
\node[left=8mm of s] (start) {};

\node[greenvertex, below=of s] (fs) {$\ltlFS$};
\node[greenvertex, right=of fs] (fsand) {$\ltlFS, \land$};
\node[greenvertex, right=of fsand] (fsxandtop) {$\ltlFS, \ltlX, \land, \top$};
\node[label,below=\labeldist of fsxandtop] (fsxandtopl) {\textcolor{darkgreen}{\ref{thm:finite-char-various-fragments}.\ref{B11}}};
\node[greenvertex, right=of fsxandtop] (xand) {$\ltlX, \land$};
\node[redvertex, right=4.0cm of s] (u) {$\ltlU$};
\node[label,right=\labeldist of u] (ul) {\textcolor{red}{\ref{thm:no-finite-char-various-fragments}.\ref{B25}}};
\node[greenvertex, right=of xand] (x) {$\ltlX$};
\node[greenvertex, below=of fs] (fsnot) {$\ltlFS, \neg$};
\node[label,below=\labeldist of fsnot] (fsnotl) {\textcolor{darkgreen}{\ref{thm:finite-char-various-fragments}.\ref{B16}}};
\node[greenvertex, below=of x] (xnot) {$\ltlX, \neg$};
\node[label,below=\labeldist of xnot] (xnotl) {\textcolor{darkgreen}{\ref{thm:finite-char-various-fragments}.\ref{B15}}};
\node[greenvertex, left=of fs] (fnot) {$\ltlF, \neg$};
\node[greenvertex, left=of fnot] (f) {$\ltlF$};
\node[greenvertex, below=of fnot] (fnottopbot) {$\ltlF, \neg, \top, \bot$};
\node[label,below=\labeldist of fnottopbot] (fnottopbotl) {\textcolor{darkgreen}{\ref{thm:finite-char-various-fragments}.\ref{B13}}};
\node[greenvertex, left=of f] (for) {$\ltlF, \lor$};
\node[greenvertex, below=of for] (fortopbot) {$\ltlF, \lor, \top, \bot$};
\node[label,below=\labeldist of fortopbot] (fortopbotl) {\textcolor{darkgreen}{\ref{thm:finite-char-various-fragments}.\ref{B12}}};
\node[circle, fill=black, minimum size=4pt, inner sep=0pt, below=of f] (belowf) {};
\node[greenvertex, below=of belowf] (ffsxtop) {$\ltlF, \ltlFS,\ltlX, \top$};
\node[label,below=\labeldist of ffsxtop] (ffsxtopl) {\textcolor{darkgreen}{\ref{thm:finite-char-various-fragments}.\ref{B14}}};
\node[greenvertex, right=of ffsxtop] (fx) {$\ltlF, \ltlX$};
\node[greenvertex, left=of ffsxtop] (ffs) {$\ltlF, \ltlFS$};
\node[circle, fill=black, minimum size=4pt, inner sep=0pt, below=of ffs] (belowffs) {};
\node[redvertex, below=of belowffs] (fand) {$\ltlF, \land$};
\node[label,below=\labeldist of fand] (fandl) {\textcolor{red}{\ref{thm:no-finite-char-various-fragments}.\ref{B21}}};
\node[redvertex, right=of fand] (ffsxnot) {$\ltlF, \ltlFS, \neg$};
\node[label,below=\labeldist of ffsxnot] (ffsxnotl) {\textcolor{red}{\ref{thm:no-finite-char-various-fragments}.\ref{B27}}};
\node[redvertex, right=of ffsxnot] (fsxorbot) {$\ltlFS / \ltlX, \lor / \bot$};
\node[label,below=\labeldist of fsxorbot] (fsxorbotl) {\textcolor{red}{\ref{thm:no-finite-char-various-fragments}.\ref{B22}, \ref{thm:no-finite-char-various-fragments}.\ref{B23}}};
\node[circle, fill=black, minimum size=4pt, inner sep=0pt, below=of fx] (belowfx) {};
\node[circle, fill=black, minimum size=4pt, inner sep=0pt, above=of ffsxnot] (aboveffsxnot) {};
\node[greenvertex, right=of fsnot] (fsx) {$\ltlFS, \ltlX$};
\node[circle, fill=white, minimum size=4pt, inner sep=0pt, below=of fsx] (belowfsx) {};
\node[circle, fill=white, minimum size=4pt, inner sep=0pt, below=of belowfsx] (belowbelowfsx) {};
\node[redvertex, below=of belowbelowfsx] (fsxnot) {$\ltlFS, \ltlX, \neg$};
\node[label,below=\labeldist of fsxnot] (fsxnotl) {\textcolor{red}{\ref{thm:no-finite-char-various-fragments}.\ref{B26}}};
\node[redvertex, right=of fsxnot] (fsxorbot2) {$\ltlFS / \ltlX, \lor / \bot$};
\node[label,below=\labeldist of fsxorbot2] (fsxorbot2l) {\textcolor{red}{\ref{thm:no-finite-char-various-fragments}.\ref{B22}, \ref{thm:no-finite-char-various-fragments}.\ref{B23}}};
\node[circle, fill=white, minimum size=4pt, inner sep=0pt, above=of fsxorbot2] (abovefsxorbot2) {};
\node[circle, fill=black, minimum size=4pt, inner sep=0pt, above=of abovefsxorbot2] (aboveabovefsxorbot2) {};
\node[redvertex, above=of f] (fand2) {$\ltlF, \land$};
\draw[->] (s) -- node[midway, above] {$\ltlU$} (u);
\draw[->] (s) -- node[midway, right] {$\ltlFS$} (fs);
\draw[->] (s) -- node[midway, above] {$\ltlF$} (f);
\draw[->] (s) -- node[midway, above] {$\ltlX$} (x);
\draw[->] (fs) -- node[midway, above] {$\land$} (fsand);
\draw[->] (x) -- node[midway, above] {$\land$} (xand);
\draw[->] (fs) -- node[midway, left] {$\neg$} (fsnot);
\draw[->] (x) -- node[midway, right] {$\neg$} (xnot);
\draw[->] (xand) -- node[midway, above] {$\top$} (fsxandtop);
\draw[->] (fsand) -- node[pos=0.5, above, yshift=-2pt] {$\ltlX, \top$} (fsxandtop);
\draw[->] (f) -- node[midway, above] {$\neg$} (fnot);
\draw[->] (f) -- node[midway, above] {$\lor$} (for);
\draw[->] (fnot) -- node[midway, left] {$\top, \bot$} (fnottopbot);
\draw[->] (for) -- node[midway, right] {$\top, \bot$} (fortopbot);
\draw (f) -- (belowf);
\draw[->] (ffs) -- node[pos=0.5, above, yshift=-2pt] {$\ltlX, \top$} (ffsxtop);
\draw[->] (fx) -- node[midway, above] {$\top$} (ffsxtop);
\draw[->] (belowf) -- node[midway, above] {$\ltlFS$} (ffs);
\draw[->] (belowf) -- node[midway, above] {$\ltlX$} (fx);
\draw (ffs) -- node[midway, left] {} (belowffs);
\draw (fx) to node[midway, above] {} (belowffs);
\draw[ bend left=4] (belowf) to node[midway, left] {} (belowffs);
\draw[->, bend right=25] (for) to node[midway, left] {$\neg$} (fand);
\draw (ffs) to node[midway, left] {} (belowfx);
\draw (fx) to node[midway, left] {} (belowfx);
\draw (ffsxtop) to node[midway, left] {} (belowfx);
\draw[->] (belowffs) to node[midway, left] {$\land$} (fand);
\draw[->] (belowfx) to node[midway, right] {$\lor, \bot$} (fsxorbot);
\draw (ffs) to node[midway, right] {} (aboveffsxnot);
\draw (fx) to node[midway, right] {} (aboveffsxnot);
\draw[->] (aboveffsxnot) to node[midway, right] {$\neg$} (ffsxnot);
\draw[->, bend left=25] (fs) to node[midway, above] {$\top$} (fsxandtop);
\draw[->, bend right=25] (x) to node[midway, above] {$\top$} (fsxandtop);
\draw[->] (fs) to node[midway, above] {$\ltlX$} (fsx);
\draw[->] (fsx) to node[midway, right] {$\neg$} (fsxnot);
\draw (fsnot) to node[near end, above] {$\top$} (aboveabovefsxorbot2);
\draw (xnot) to node[near end, above] {$\top$} (aboveabovefsxorbot2);
\draw (fsxandtop) to node[near end, above] {} (aboveabovefsxorbot2);
\draw (fsand) to node[near end, above] {$\neg$} (aboveabovefsxorbot2);
\draw (xand) to node[near end, above] {$\neg$} (aboveabovefsxorbot2);
\draw[->] (aboveabovefsxorbot2) to node[midway, right] {$\lor, \bot$} (fsxorbot2);
\draw[bend left=10] (fs)  to node[near end, above] {} (aboveabovefsxorbot2);
\draw[bend right=10] (x)  to node[near end, above] {} (aboveabovefsxorbot2);
\draw[->] (f) to node[midway, right] {$\land$} (fand2);
\node[label,left=\labeldist of fand2] (fand2l) {\textcolor{red}{\ref{thm:no-finite-char-various-fragments}.\ref{B21}}};

\end{tikzpicture}
}
    \caption{This figure illustrates the result of \autoref{thm:ltlfin}. It can be viewed as an automaton over the alphabet $\{\ltlU,\ltlF,\ltlFS,\ltlX,\land,\lor,\neg,\top,\bot\}$ that accepts exactly those words that satisfy the following conditions:
(1) they include at least one temporal operator;
(2) occurrences of operators respect the order induced by the alphabet; and
(3) the set of operators is not a superset of any of the negative fragments listed in \autoref{thm:no-finite-char-various-fragments}.
The leaves of the figure correspond to negative fragments (shown in red), each annotated with the corresponding item of \autoref{thm:no-finite-char-various-fragments} (or to nodes from which no further operators can be read), whereas inner nodes correspond to positive fragments (shown in green).
It follows that the accepted words correspond to sets of operators that are subsets of the positive fragments listed in \autoref{thm:finite-char-various-fragments}. To reduce clutter, some negative fragments appear more than once.}
    \label{fig:finite-fragments-diagram}
\end{figure}

\section{Proofs for Section~\ref{sec:omega}}

\thmXpos*

\begin{proof}
   Let $\phi$ be a formula in the fragment, and let $n$ be the maximal nesting depth of $\ltlX$-operators in the formula. Clearly, 
   the truth of $\phi$ in an example depends only on the first $n$ positions. We apply
   a brute force construction: 
   \begin{enumerate}
       \item We include, in our set of labeled examples, all words of length at most $n$ (labeled according to $\phi$)
       \item In addition, for each word of length $n$, we include the $\omega$-words
       $w \cdot \emptyset^\omega$ and 
       $w \cdot \ap^\omega$ (labeled according to $\phi$).
   \end{enumerate}
   We claim that the resulting set of labeled examples uniquely characterizes $\phi$ with respect to $\LTL_{\ltlX,\land,\lor,\top,\bot}[\ap]$ over $\allwords$. Indeed, if a formula $\psi\in \LTL_{\ltlX,\land,\lor,\top,\bot}[\ap]$
   fits these examples, then it must agree with $\phi$ on all examples. This is clearly so for words of length at most $n$. For words $w$ longer than $n$, let $w'$ and $w''$ be copies of $w$ in which all positions greater than $n$ are assigned $\emptyset$, respectively, $\ap$. Since
   the truth of $\phi$ depends only on the length-$n$ prefix of the word, there are two
   possibilities: either $w,w',w''$ all satisfy $\phi$,
   or none of them satisfies $\phi$. In the former case, it follows that $w'$ and $w''$ both satisfy $\psi$, while in the latter case, it follows that $w'$ and $w''$ both falsify $\psi$ (since these examples were included in the set). It follows by monotonicity that, in the former case, $w$ satisfies $\psi$, and, in the latter case, $w$ falsifies $\psi$. 

   The proof for $\LTL_{\ltlX,\land,\lor,\top,\bot}[\ap]$ over $\words^{=\omega}$ is similar except that the labeled examples of type 1 are omitted.
\end{proof}

\thmomegacounterexample*

\begin{proof}
Let 
\[\begin{array}{lll}
\phi &=& \ltlF(p\land q\land \ltlF (r\land\ltlF(p\land q))) \\
\phi_k &=&
\ltlF(\underbrace{p\land\ltlF q\land \ltlF(p\land \ltlF(q\land\cdots}_{\text{$k$ alternations of $p$ and $q$}}\land\ltlF (r\land\ltlF(p\land q)))))
\end{array}\]
Clearly, $\sema{\phi}^{\leq \omega}\subseteq\sema{\phi_k}^{\leq \omega}$
and 
$\sema{\phi_k}^{= \omega}\not\subseteq\sema{\phi}^{= \omega}$,
for all $k\in\mathbb{N}$.
Let $E$ be a finite set of labeled examples consisting of words of length at most $\omega$, and suppose that $\phi$ fits $E$. We will 
show that some $\phi_k$ also fits $E$.
Consider any negative example $(w,-)\in E$. 
Since $w\not\models\phi$, clearly, $w$ cannot contain contain the letter $\{p,q,r\}$.
Similarly, $w$ 
cannot contain infinitely many occurrences both of $\{p,q\}$ and of $\{r\}$.
Thus, there must be some $\ell\in\mathbb{N}$ such that
$w[\ell,\ldots]$ either does not contain $\{p,q\}$ or does not contain $\{r\}$. 
It follows
that, for all $k>\ell$, $w\not\models\phi_k$ 
and hence $\phi_k$ fits $(w,-)$. This is because the sequence of $k$ 
alternations of $p$ and $q$ described in $\phi_k$ must be satisfied within length-$\ell$ prefix of $w$ and $\ell<k$.
Thus, 
if we take
$\ell_{\max}$ to be the maximum value of $\ell$ over all negative examples in $E$, then $\phi_{\ell_{\max}+1}$ fits all negative examples in
$E$. Since $\sema{\phi}^{\leq \omega}\subseteq\sema{\phi_k}^{\leq \omega}$, 
 $\phi_{\ell_{\max}+1}$ fits all positive examples in $E$ as well.
\end{proof}

\section{Proofs for Section~\ref{sec:transfinite}}

We start with a basic monotonicity property of
$\LTL_{\ltlF,\land,\lor,\top,\bot}[\ap]$
under the homomorphic preorder $\leq_{\hom}$.
It states that satisfaction is preserved when a word is extended by a homomorphic embedding.
This will be central for the construction of canonical positive examples and maximal negative examples.

\monotone*

\begin{proof}
Fix $\varphi$ and ordinal words $w,w'$ with $w\leq_{\mathrm{hom}} w'$,
witnessed by a homomorphism $h$.
Recall that $h$ is nondecreasing and continuous, satisfies $h(0)=0$,
and preserves labels in the sense that for all positions $i$, $w[i] \subseteq w'[h(i)]$.
(In particular, $i>0$ implies $h(i) > 0$.)

We prove the statement by structural induction on $\varphi$.

\medskip\noindent\emph{Base cases.}

\smallskip\noindent
($\top$) Trivial, since every word satisfies $\top$.

\smallskip\noindent
($\bot$) Impossible premise: $w\models\bot$ never holds, so the implication is vacuously true.

\smallskip\noindent
($p\in\ap$) If $w\models p$, then by semantics $p\in w[0]$.
Since $h(0)=0$ and $w[0]\subseteq w'[h(0)]=w'[0]$, we get $p\in w'[0]$,
hence $w'\models p$.

\medskip\noindent\emph{Inductive steps.}

Assume the claim holds for $\psi_1,\psi_2$ (and $\psi$).

\smallskip\noindent
($\land$) Let $\varphi=\psi_1\land\psi_2$ and suppose $w\models\varphi$.
Then $w\models\psi_1$ and $w\models\psi_2$.
By the induction hypothesis (IH), $w'\models\psi_1$ and $w'\models\psi_2$.
Hence $w'\models\psi_1\land\psi_2$.

\smallskip\noindent
($\lor$) Let $\varphi=\psi_1\lor\psi_2$ and suppose $w\models\varphi$.
Then $w\models\psi_1$ or $w\models\psi_2$.

If $w\models\psi_i$ for some $i\in\{1,2\}$, the IH yields $w'\models\psi_i$,
and thus $w'\models\psi_1\lor\psi_2$.

\smallskip\noindent
($\ltlFS$) Let $\varphi=\ltlFS\psi$ and suppose $w\models\ltlFS\psi$.
By the strict-$\ltlF$ semantics, there is some position
$i>0$ such that the suffix $w[i..]$ satisfies $\psi$, i.e., $w[i..] \models \psi$.

We next show that $w[i..] \leq_{\mathrm{hom}} w'[h(i)..]$.\\
Define a map $h_i(\alpha) := h(i+\alpha) - h(i)$
(on the ordinal domain of $w[i..]$, reindexed to start at $0$).
By the properties of $h$:
\begin{itemize}
  \item $h_i$ is nondecreasing and continuous (as $h$ is).
  \item $h_i(0) = h(i) - h(i) = 0$.
  \item For all $\alpha$, $w[i..][\alpha] = w[i+\alpha] 
    \subseteq w'[h(i+\alpha)] = w'[h(i)..][h(i+a) - h(i)] = w[h(i)..][h_i(\alpha)]$.
\end{itemize}
Thus $h_i$ witnesses $w[i..] \leq_{\mathrm{hom}} w'[h(i)..]$.

By the IH applied to $\psi$, from $w[i..] \models\psi$ and
$w[i..] \leq_{\mathrm{hom}} w'[h(i)..]$ we obtain $w'[h(i)..] \models \psi$.
Since $i>0$ and $h(0)=0$ with $h$ increasing, we have $h(i)>0$,
so there is a position $j>0$ (namely $j=h(i)$) such that
$w'[j..]\models\psi$. By the semantics of $\ltlFS$,
this exactly means $w'\models\ltlFS\psi$.

\medskip
This covers all constructors of $\mathrm{LTL}_{\ltlFS,\land,\lor,\top,\bot}[\ap]$,
so the induction is complete.
\end{proof}

\subsection{Construction of canonical set}
The canonical set is intended to capture the $\leq_{\hom}$-minimal models of a formula. Its construction follows the syntax of the formula, so that each operator combines canonical witnesses of its subformulas into canonical witnesses for the whole formula. Intuitively, every word in the canonical set satisfies the formula, while every satisfying word contains, via $\leq_{\hom}$, some canonical example as a minimal witness. Hence the language of the formula is exactly the upward closure of its canonical set.

Most cases follow directly from the semantics of the operators. The main subtlety is conjunction: a witness must satisfy both conjuncts simultaneously. For this, given witnesses $w_1$ and $w_2$ for the two subformulas, we construct words that realize all possible compatible interleavings of $w_1$ and $w_2$, preserving order and labels. The first positions of the two witnesses are merged, since both conjuncts are evaluated at the same initial point. After that, the remaining obligations of the two witnesses may occur in any relative order, provided that the internal order of each witness is preserved. When obligations from both witnesses are realized at the same position, their labels are combined by union. Thus, the merged interleaving represents the two witnesses progressing in parallel, rather than forcing one witness to occur entirely before the other. This ensures that all minimal joint witnesses are represented.

\begin{definition}[Merged Interleaving]
Let 
  $u = u_0\cdots u_{m-1}$ 
and 
  $v = v_0 \cdots v_{n-1}$
be nonempty finite words over $\mathcal P(\ap)$.
A finite word 
  $e = e_0\cdots e_{\ell-1}$
is a \emph{merged interleaving} of $u$ and $v$, written $e \in u \mergeop v$,
if there exist strictly increasing sequences of tail indices
\[
1 \le a_1 < \cdots < a_{m-1} \le \ell-1,
\qquad
1 \le b_1 < \cdots < b_{n-1} \le \ell-1,
\]
such that:
\begin{enumerate}
  \item $e_0 = u_0 \cup v_0$;

  \item for each $i=1,\ldots,m-1$,
        \[
        e_{a_i} =
        \begin{cases}
           u_i, & a_i \notin \{b_1,\ldots,b_{n-1}\},\\[2pt]
           u_i \cup v_j, & a_i = b_j\ \text{for a unique } j;
        \end{cases}
        \]

  \item for each $j=1,\ldots,n-1$,
        \[
        e_{b_j} =
        \begin{cases}
           v_j, & b_j \notin \{a_1,\ldots,a_{m-1}\},\\[2pt]
           u_i \cup v_j, & b_j = a_i\ \text{for a unique } i;
        \end{cases}
        \]

  \item every $u_i$ ($i\ge 1$) and every $v_j$ ($j\ge 1$)
        appears exactly once in $e$, either alone or as part of a single merged letter.
\end{enumerate}
For two sets of finite words $U, V$ define: $U \mergeop V = \set{u \mergeop v | u\in U, v\in V}$
\end{definition}

\begin{definition}[Canonical Examples]
The set of canonical examples $E_{\mathrm{can}}(\varphi)$ for a formula $\varphi \in \LTL_{\ltlFS,\land,\lor,\top,\bot}$ is defined by structural induction on~$\varphi$ as follows:
\[
\begin{aligned}
E_{\mathrm{can}}(\bot) &:= \set{}, \\
E_{\mathrm{can}}(\top) &:= \{\;\emptyset \;\}, \\
E_{\mathrm{can}}(p) &:= \{\; \{p\} \;\} \qquad (p \in \ap), \\
E_{\mathrm{can}}(\ltlFS \varphi) &:= \set{\emptyset} \cdot E_{\mathrm{can}}(\varphi), \\
E_{\mathrm{can}}(\varphi \lor \psi) &:= \Ecan(\varphi) \cup \Ecan(\psi),\\
E_{\mathrm{can}}(\varphi \land \psi) &:= E_{\mathrm{can}}(\varphi) \mergeop E_{\mathrm{can}}(\psi).
\end{aligned}
\]
\end{definition}

We next show that the canonical examples are indeed sound,
in the sense that every constructed example satisfies the formula.
This follows by a straightforward structural induction.

\begin{lemma}[Monotone Extension]\label{lem:mon-ext} Let $u,w$ be ordinal words over $2^{\ap}$ such that $w$ is obtained from $u$ by: \begin{enumerate} \item[\emph{(i)}] inserting extra positions \emph{after} the first letter (no reordering or deletion), and \item[\emph{(ii)}] possibly enlarging some letters, i.e.\ for every original position $\alpha$ of $u$, the corresponding position of $w$ carries a superset of labels. \end{enumerate} Then $u \leq_{hom} w$ \end{lemma}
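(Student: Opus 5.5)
The plan is to make the informal description of $w$ precise and read the homomorphism off it. Saying that $w$ arises from $u$ by inserting positions after the first letter, without reordering or deletion, means there is an injective map $h$ from the positions $|u|$ into the positions $|w|$ that records where each original position of $u$ ended up in $w$. I would fix this $h$ explicitly as the witness and then check the three conditions in the definition of a homomorphism one at a time.

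Concretely, let $h(\alpha)$ be the position in $w$ occupied by the copy of the original position $\alpha$ of $u$.

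\emph{Condition~(\ref{cond1}):} insertions happen only after the first letter, so the first letter of $u$ stays at position $0$ of $w$, giving $h(0)=0$.

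\emph{Condition~(\ref{cond3}):} there is no reordering, so the original positions keep their relative order among themselves in $w$. Hence $\alpha<\beta$ implies $h(\alpha)<h(\beta)$. Injectivity comes for free, since there is no deletion and distinct original positions stay distinct.

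\emph{Condition~(\ref{cond2}):} hypothesis (ii) states directly that $u[\alpha]\subseteq w[h(\alpha)]$ for every original position $\alpha$.

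Together these show $h$ is a homomorphism, so $u\leq_{\hom} w$.

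The only point needing care is the transfinite case. Here ``inserting positions'' should be read so that $w$ is again a well-ordered word, and the image of $|u|$ under $h$ is a subset of $|w|$ with the induced order. Since $h$ is an order embedding of ordinals, it is well defined with no continuity requirement; the definition of homomorphism does not demand continuity anyway. So the main obstacle is just this formalization: I would state the hypothesis as ``there is an order-preserving injection $h\colon|u|\to|w|$ with $h(0)=0$ and $u[\alpha]\subseteq w[h(\alpha)]$'', after which the lemma is immediate.
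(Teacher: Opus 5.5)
Your proposal is correct and follows essentially the same route as the paper: take $h$ to be the map sending each original position $\alpha$ of $u$ to its position in $w$, and read off $h(0)=0$ and strict monotonicity from (i) and $u[\alpha]\subseteq w[h(\alpha)]$ from (ii). Your remark that the hypothesis is best formalized as the existence of such an order-preserving $h$, with no continuity requirement, is a sensible clarification of what the paper leaves informal.
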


\begin{proof}
Let $\dom(u)=\gamma$ and $\dom(w)=\delta$.

By condition \emph{(i)}, there exists a strictly increasing function
\( h:\gamma \to \delta \) such that \( h(0)=0 \) and for every
\( \alpha<\gamma \), the position \( h(\alpha) \) in \( w \) corresponds to
the original position \( \alpha \) of \( u \).
By condition \emph{(ii)}, for all \( \alpha<\gamma \) we have $u(\alpha)\subseteq w(h(\alpha))$.

Thus \( h \) is a homomorphism from \( u \) to \( w \) according to the
definition of \( \leq_{\mathrm{hom}} \), and therefore
\( u \leq_{\mathrm{hom}} w \).
\end{proof}

\begin{lemma}[Soundness of canonical examples]
\label{lem:can-sound}
For every $\varphi\in \mathrm{LTL}_{\ltlFS,\land,\lor,\top,\bot}[\ap]$
and every $e\in E_{\mathrm{can}}(\varphi)$ we have $e\models\varphi$.
\end{lemma}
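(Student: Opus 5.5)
We argue by structural induction on $\varphi$, following the clauses of the definition of $E_{\mathrm{can}}$, and use Lemma~\ref{lem:monotonicity} (monotonicity under $\leq_{\hom}$) together with Lemma~\ref{lem:mon-ext} to handle the conjunction case.

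The base cases are immediate. $E_{\mathrm{can}}(\bot)$ is empty, so there is nothing to check. The word $\emptyset$ satisfies $\top$. The word $\{p\}$ satisfies $p$. For $\varphi\lor\psi$, an element of $E_{\mathrm{can}}(\varphi)\cup E_{\mathrm{can}}(\psi)$ satisfies one disjunct by the induction hypothesis, and hence satisfies the disjunction.

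For $\ltlFS\varphi$, an element has the form $e=\emptyset\cdot e'$ with $e'\in E_{\mathrm{can}}(\varphi)$. Then $|e|>1$ and $e[1..]=e'\models\varphi$ by the induction hypothesis. Since position $1>0$ witnesses the strict eventuality, $e\models\ltlFS\varphi$.

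The main step is conjunction. Let $e\in u\mergeop v$ with $u\in E_{\mathrm{can}}(\varphi)$ and $v\in E_{\mathrm{can}}(\psi)$. By the induction hypothesis, $u\models\varphi$ and $v\models\psi$. We show $u\leq_{\hom} e$ and $v\leq_{\hom} e$; Lemma~\ref{lem:monotonicity} then gives $e\models\varphi$ and $e\models\psi$, hence $e\models\varphi\land\psi$.

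For $u$, define $h(0)=0$ and $h(i)=a_i$ for $1\le i\le m-1$. This map is strictly increasing, because $0<1\le a_1<\cdots<a_{m-1}$. It is label-monotone: $u_0\subseteq u_0\cup v_0=e_0$, and each $e_{a_i}$ is either $u_i$ or $u_i\cup v_j$, which contains $u_i$ in both cases. So $h$ is a homomorphism, which is exactly the situation described in Lemma~\ref{lem:mon-ext}. The argument for $v$ is symmetric, using the indices $b_j$.

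The only delicate point is reading the definition of merged interleaving correctly: the index sequences start at $1$, and merged positions carry unions. This is what makes $h(0)=0$ and the containment of labels hold. Once these are checked, the induction closes.
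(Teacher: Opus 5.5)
Your proof is correct and follows the same route as the paper: structural induction over the clauses of $E_{\mathrm{can}}$. In the conjunction case you show $u\leq_{\hom} e$ and $v\leq_{\hom} e$ and then invoke Lemma~\ref{lem:monotonicity}. The one difference is that you write the homomorphism $h(0)=0$, $h(i)=a_i$ explicitly and check it against the definition of merged interleaving, which is a cleaner justification than the paper's informal ``insertions and enlargements'' appeal to Lemma~\ref{lem:mon-ext}.
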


\begin{proof}
By structural induction on $\varphi$.

\emph{Base cases.}
If $\varphi=\bot$, then $E_{\mathrm{can}}(\bot)=\emptyset$ and the claim is vacuous.

If $\varphi=\top$, then $E_{\mathrm{can}}(\top)=\{\emptyset\}$ and trivially $\emptyset\models\top$.

If $\varphi=p\in\ap$, then $E_{\mathrm{can}}(p)=\{\{p\}\}$ and clearly $\{p\}\models p$.

\smallskip
\emph{Inductive cases.}

(1) $\varphi=\ltlFS\psi$.

Then $E_{\mathrm{can}}(\ltlFS\psi)=\set{\emptyset}\cdot E_{\mathrm{can}}(\psi)$.

Take $u\in E_{\mathrm{can}}(\psi)$; by IH, $u\models\psi$.
Let $e:=\emptyset\cdot u$,
then $e\models \ltlFS\psi$, witnessed at position $1>0$: $e[1..] = u \models\psi$.

(2) $\varphi=\psi_1\lor\psi_2$.

Here $E_{\mathrm{can}}(\psi_1\lor\psi_2)=
E_{\mathrm{can}}(\psi_1)\cup E_{\mathrm{can}}(\psi_2)$.

Take $e$ in the union; w.l.o.g.\ $e\in E_{\mathrm{can}}(\psi_1)$.
By IH, $e\models\psi_1$, hence $e\models\psi_1\lor\psi_2$.

(3) $\varphi=\psi_1\land\psi_2$.

By definition:
$E_{\mathrm{can}}(\psi_1\land\psi_2)
\;=\;
E_{\mathrm{can}}(\psi_1)\;\bowtie\;
E_{\mathrm{can}}(\psi_2)$.

Choose $u\in E_{\mathrm{can}}(\psi_1)$ and $v\in E_{\mathrm{can}}(\psi_2)$.
By IH, $u\models\psi_1$ and $v\models\psi_2$.

Take any $e\in u\bowtie v$.

By the definition of $\bowtie$, $e$ is obtained from $u$ by:

- keeping $u$'s letters in their order,
- possibly inserting additional positions among them,
- and possibly \emph{merging} exactly one $v_j$ with the next unused $u_i$ by replacing
  $u_i$ with $u_i\cup v_j$,
- and never shrinking labels.

Thus every such $e$ is obtained from $u$ using only the operations:

\begin{itemize}
\item[(i)] insertion of new positions after the first,  
\item[(ii)] enlarging some letters.
\end{itemize}

These are exactly the conditions of Lemma~\ref{lem:mon-ext}.
Hence $u\leq_{hom} e$ and by \cref{lem:monotonicity} $e \models\psi_1$.

Symmetrically, since the definition of $\bowtie$ treats $u$ and $v$ symmetrically,
$e$ is also obtained from $v$ by insertions and enlargements only, and therefore
$e\models\psi_2$.

Thus $e\models\psi_1\land\psi_2$.
\end{proof}

Having established soundness, we now prove the converse direction:
every model of a formula homomorphically extends one of its canonical examples.
This shows that canonical examples are minimal with respect to
$\leq_{\mathrm{hom}}$.

\begin{lemma}[Canonical examples are $\leq_{\mathrm{hom}}$-minimal]
\label{lem:can-hom-min}
For every $\varphi \in \mathrm{LTL}_{\ltlFS,\land,\lor,\top,\bot }[\ap]$ and every ordinal word $w$,
if $w \models \varphi$ then there exists $e \in E_{\mathrm{can}}(\varphi)$ such that $e \leq_{\mathrm{hom}} w$.
\end{lemma}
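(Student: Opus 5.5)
We argue by structural induction on $\varphi$, proving the statement for all ordinal words $w$ at once. This matters because the inductive hypothesis has to be applied to suffixes $w[i..]$, which are again arbitrary ordinal words.

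The base cases are immediate. For $\bot$ there is no model. For $\top$, the one-letter word $\emptyset$ maps to position $0$ via $h(0)=0$, which is a homomorphism because $\emptyset\subseteq w[0]$; this uses only that $|w|\ge 1$, which holds for every word we evaluate. For $p$ we have $p\in w[0]$, so the same map sends $\{p\}$ to $w$. Disjunction is also immediate: if $w\models\psi_1$, the inductive hypothesis gives some $e\in\Ecan(\psi_1)\subseteq\Ecan(\psi_1\lor\psi_2)$ with $e\le_{\hom}w$.

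For $\varphi=\ltlFS\psi$, pick $i>0$ with $w[i..]\models\psi$. The inductive hypothesis gives $u\in\Ecan(\psi)$ and a homomorphism $g$ from $u$ to $w[i..]$. Set $e=\emptyset\cdot u$ and define $h(0)=0$ and $h(1+k)=i+g(k)$. This map is strictly increasing because $i>0$ and $g$ is strictly increasing, and ordinal addition on the left is strictly monotone. Labels are preserved because $u[k]\subseteq w[i..][g(k)]=w[i+g(k)]$.

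The main case is conjunction, $\varphi=\psi_1\land\psi_2$. The inductive hypothesis gives $u\in\Ecan(\psi_1)$ and $v\in\Ecan(\psi_2)$ with homomorphisms $h_u\colon|u|\to|w|$ and $h_v\colon|v|\to|w|$, and $u,v$ are finite. Let $P=\{h_u(i)\}\cup\{h_v(j)\}$. This is a finite set of ordinals containing $0$; enumerate it increasingly as $\pi_0=0<\pi_1<\dots<\pi_{\ell-1}$. Define $e_k$ as the union of the letters $u_i$ with $h_u(i)=\pi_k$ and the letters $v_j$ with $h_v(j)=\pi_k$. Put $a_i$ equal to the index $k$ with $\pi_k=h_u(i)$, and define $b_j$ similarly.

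We then check that $e\in u\mergeop v$. Each $a_i$ and $b_j$ is strictly increasing because $h_u$ and $h_v$ are. Since $h_u(0)=h_v(0)=0$, we get $e_0=u_0\cup v_0$. For $i\ge1$ we have $a_i\ge1$, since $h_u(i)>0$. Each position $\pi_k$ with $k\ge1$ receives at most one $u_i$ and at most one $v_j$, by injectivity, so every letter is either single or a merge of exactly one pair. Finally, $k\mapsto\pi_k$ is a homomorphism from $e$ to $w$: it is strictly increasing, sends $0$ to $0$, and $e_k\subseteq w[\pi_k]$ because each constituent letter is contained in $w[\pi_k]$.

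The obstacle I expect is matching the formal conditions of the merged-interleaving definition exactly. In particular, every position of $e$ must be covered by some $a_i$ or $b_j$, and this holds because $P$ consists precisely of the images. The statement requires only that $e\le_{\hom}w$, and no minimality of $e$ is needed.
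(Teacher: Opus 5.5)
Your proof is correct and follows the paper's argument essentially step for step: it is a structural induction in which the $\ltlFS$ case prefixes $\emptyset$ and shifts the inner homomorphism by $i$, and the conjunction case enumerates $\mathrm{Im}(h_u)\cup\mathrm{Im}(h_v)$ in increasing order and takes unions of the letters mapped to each position. Your remark that the $\top$ case needs $|w|\ge 1$ is correct, and the paper's proof makes the same assumption implicitly.
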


\begin{proof}
By structural induction on $\varphi$.

\emph{Base cases.}
If $\varphi = \bot$, the premise $w\models\bot$ is impossible, so the statement is vacuous.

If $\varphi=\top$, let $e$ be the canonical word in $E_{\mathrm{can}}(\top) = \set{\emptyset}$  . Define $h(0)=0$.
Then $e[0]=\emptyset\subseteq w[0]$ and $h$ is strictly increasing , so
$e \leq_{\mathrm{hom}} w$.

If $\varphi=p\in\ap$, then $w\models p$ implies $p\in w[0]$.
Let $e$ be the canonical word in $E_{\mathrm{can}}(p)$, whose $0$-th letter is $\{p\}$ and all
other letters (if any) are empty. Define $h(0)=0$ and extend $h$ strictly along the domain of $e$
(e.g.\ $h(i)=i$ when the domain of $e$ is an initial segment).
Then $e[0]=\{p\}\subseteq w[0]$ and $h$ is strictly increasing, hence $e\leq_{\mathrm{hom}}w$.

\medskip
\emph{Step $\ltlFS$.}
Let $\varphi = \ltlFS\psi$ and assume $w\models\ltlFS\psi$.
By the semantics of strict $\ltlF$, there exists $i>0$ such that $w[i..] \models \psi$.

By the induction hypothesis applied to $\psi$ and $w[i..]$, there exist
$e' \in E_{\mathrm{can}}(\psi)$ and a strictly increasing homomorphism $h' : \mathrm{dom}(e') \to \mathrm{dom}(w[i..])$
such that $e' \leq_{\mathrm{hom}} w[i..]$ via $h'$.

By definition of canonical examples, $E_{\mathrm{can}}(\ltlFS\psi) = \set{\emptyset} \cdot E_{\mathrm{can}}(\psi)$.

Set $e := \emptyset \cdot e'$, i.e.\ $e[0]=\emptyset$ and for $\alpha>0$ we have
$e[\alpha] = e'[\alpha-1]$.

Define $h : \mathrm{dom}(e)\to\mathrm{dom}(w)$ by $h(0) := 0,
  \qquad
  h(1+\alpha) := i + h'(\alpha)$.

Then:
\begin{itemize}
  \item $h(0)=0$ and $h$ is strictly increasing since $i>0$ and $h'$ is strictly increasing;
  \item $e[0]=\emptyset\subseteq w[0]$;
  \item for $\alpha\ge 0$, $e[1+\alpha] = e'[\alpha]\subseteq w[i + h'(\alpha)] = w[h(1+\alpha)]$.
\end{itemize}

Thus $e \leq_{\mathrm{hom}} w$ and $e\in E_{\mathrm{can}}(\ltlFS\psi)$, completing this case.

\medskip
\emph{Step $\lor$.}
Let $\varphi = \psi_1 \lor \psi_2$ and suppose $w\models \psi_1 \lor \psi_2$.
Then $w\models\psi_i$ for some $i\in\{1,2\}$.
By IH (applied to $\psi_i$), there exists $e\in E_{\mathrm{can}}(\psi_i)$ with $e\leq_{\mathrm{hom}} w$.
Since $E_{\mathrm{can}}(\psi_1\lor\psi_2)
  = E_{\mathrm{can}}(\psi_1) \cup E_{\mathrm{can}}(\psi_2)$,
we have $e\in E_{\mathrm{can}}(\psi_1\lor\psi_2)$ and we are done.

\medskip
\emph{Step $\land$.}
Assume $\varphi = \psi_1 \land \psi_2$ and $w\models\psi_1\land\psi_2$.
By IH choose $e_1\in E_{\mathrm{can}}(\psi_1),
  e_2\in E_{\mathrm{can}}(\psi_2)$.
together with strictly increasing homomorphisms
\[
  h_1:dom(e_1) \to dom(w),
  \qquad
  h_2: dom(e_2)\to dom(w),
\]
witnessing $e_1\le_{\mathrm{hom}} w$ and $e_2\le_{\mathrm{hom}} w$.

Let $\Gamma := \mathrm{Im}(h_1)\cup \mathrm{Im}(h_2)$
and enumerate $\Gamma$ in strictly increasing order: $\gamma_0 < \gamma_1 < \cdots < \gamma_{L-1}$.

For each $k<L$ 
define: \\$i_k := \text{the unique } i \text{ with } h_1(i)=\gamma_k \text{ if such exists}$, $j_k := \text{the unique } j \text{ with } h_2(j)=\gamma_k \text{ if such exists}$,

and otherwise leave $i_k$ or $j_k$ undefined.
Since $h_1,h_2$ are strictly increasing, at most one $i_k$ and at most one $j_k$ exist.

Define a word $e$ of domain $L$ by
\[
  e[k] :=
  \begin{cases}
    e_1[i_k] & \text{if $i_k$ is defined and $j_k$ is not},\\[2pt]
    e_2[j_k] & \text{if $j_k$ is defined and $i_k$ is not},\\[2pt]
    e_1[i_k]\cup e_2[j_k] & \text{if both $i_k$ and $j_k$ are defined},\\[2pt]
    \emptyset & \text{if neither is defined (this never occurs for $k>0$)}.
  \end{cases}
\]

Because $h_1(0)=h_2(0)=0$, we have $i_0=j_0=0$, hence $e[0]=e_1[0]\cup e_2[0]$.

By construction, every tail index of $e_1$ appears exactly once among the $i_k$,
every tail index of $e_2$ appears exactly once among the $j_k$,
and the sequences $(i_k)$ and $(j_k)$ preserve order.
Thus $e \in e_1 \bowtie e_2
  \subseteq E_{\mathrm{can}}(\psi_1\land\psi_2)$.

Finally define $h(k) := \gamma_k$.

Then $h$ is strictly increasing, $h(0)=0$, and for every $k$:
- if $i_k$ is defined, then $e_1[i_k]\subseteq w[h_1(i_k)]=w[\gamma_k]=w[h(k)]$;
- if $j_k$ is defined, then $e_2[j_k]\subseteq w[h_2(j_k)]=w[\gamma_k]=w[h(k)]$.

Hence $e[k]\subseteq w[h(k)]$ for all $k$, so $e\le_{\mathrm{hom}} w$.

\medskip
This completes the conjunction case and the induction.
\end{proof}

Combining soundness, minimality, and monotonicity,
we obtain that the canonical set fully characterizes the formula.

\canonicalset*

\begin{proof}
Let $E^+_\varphi = \Ecan(\varphi)$.
By \cref{lem:can-sound,lem:can-hom-min,lem:monotonicity}, the claim follows.
\end{proof}

\subsection{Construction of maximal negative examples}

The maximal negative examples are obtained by dualising the canonical-set
construction.
Starting from the finite family of canonical positive examples of
$\varphi$, we build a finite set of words that systematically block every
possible homomorphic embedding of a positive witness.

Intuitively, the construction proceeds recursively.
At each stage, it selects letters that exclude the possible initial
matches of all positive witnesses, and then continues on the remaining
suffixes.
In this way, every word that does not satisfy $\varphi$ can be extended,
with respect to $\leq_{\hom}$, to one of the constructed examples.

A key difficulty is that embeddings may skip arbitrarily many positions.
For example, to block an embedding of
$\{p\}\{q\}\{p\}$, it is not enough to forbid a single occurrence of
$\{q\}$ after $\{p\}$: one must exclude all cases where $\{q\}$ appears
after an arbitrary finite prefix, and then ensure that no later position
matches the final $\{p\}$.
Since the skipped prefix can have any finite length, the construction must
range over all such lengths, which naturally leads to words of length up
to $\omega^2$.

By design, none of the constructed words admits an embedding of a
canonical positive example, and therefore none satisfies $\varphi$.
Hence they form a finite family of maximal counterexamples.

To formalize this blocking mechanism, we introduce a family of letters that simultaneously exclude prescribed matches, together with a canonical periodic word generated from them.

\begin{definition}[Exclusion-letter family and exclusion $\omega$-word]
    Let $\sigma_1,\ldots,\sigma_k \in 2^\ap$.
    Define
    \[
        \Excl(\sigma_1,\ldots,\sigma_k)
        =
        \Bigl\{
            \ap \setminus \{p_1,\ldots,p_k\}
            \;\Bigm|\;
            p_i \in \sigma_i\ \text{for all } i
        \Bigr\}.
    \]

    Enumerate $\Excl(\sigma_1,\ldots,\sigma_k)=\{B_1,\ldots,B_t\}$.
    The \emph{exclusion $\omega$-word} associated with $\sigma_1,\ldots,\sigma_k$ is $\ExclWord(\sigma_1,\ldots,\sigma_k)
        := (B_1 B_2 \cdots B_t)^\omega$.
    
\end{definition}

We now define the core recursive construction that generates
maximal negative examples for a given finite set of words.
The construction proceeds by gradually eliminating the possibility
of embedding each word in the set.

\begin{definition}[$\Dual^+$]
Let $A=\{w_1,\ldots,w_k\}$ be a finite set of nonempty finite words over $2^\ap$.
Define $\Dual^+(A)$ by induction on $N=\sum_{i=1}^k |w_i|$.

\smallskip\noindent\textbf{Base case.}
If some $w_i=\varepsilon$, set $\Dual^+(A)=\emptyset$.
If all $w_i$ have length~$1$, writing $w_i=\sigma_i$, set
\[
    \Dual^+(A)\ :=\ \ExclWord(\sigma_1,\ldots,\sigma_k).
\]

\smallskip\noindent\textbf{Inductive step.}
Assume some $w_i$ has length at least $2$, and write $w_i=\sigma_i\cdot w_i'$.
Let
\[
    \Excl(\sigma_1,\ldots,\sigma_k)=\{B_1,\ldots,B_t\},\qquad
    \ExclWord(\sigma_1,\ldots,\sigma_k)=(B_1\cdots B_t)^\omega.
\]
For every $\sigma\subseteq\ap$ such that
\[
   \forall j\in\{1,\ldots,t\}\;:\; \sigma\nsubseteq B_j
\]
define
\[
   \widehat{w_i}(\sigma)
   \;:=\;
   \begin{cases}
      w_i[1..] & \text{if }\sigma_i \subseteq \sigma,\\[3pt]
      w_i      & \text{otherwise}.
   \end{cases}
\]
Then
\[
    \Dual^+(A)
    \;=\;
    \bigcup_{\substack{\forall j:\sigma\nsubseteq B_j}}
        \ExclWord(\sigma_1,\ldots,\sigma_k)\cdot
        \sigma\cdot
        \Dual^+\!\bigl(\{\widehat{w_1}(\sigma),\ldots,\widehat{w_k}(\sigma)\}\bigr).
\]

\end{definition}

We first show that the construction is complete,
in the sense that every word that avoids all embeddings of $A$
can be extended to a word in $\Dual^+(A)$.

\begin{lemma}\label{lem:dual-plus-characterization}
Let $A=\{w_1,\ldots,w_k\}$ be a finite set of finite words over $2^\ap$
and let $u$ be an $\omega$-word.
If $w_i\nleq_{\mathrm{hom}^+}u$ for all $1\le i\le k$, 
then there exists $v\in\Dual^+(A)$ such that $u\leq_{\mathrm{hom}^+}v$.
\end{lemma}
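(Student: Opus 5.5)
The plan is to prove the statement by induction on $N=\sum_i |w_i|$, following the recursive shape of the definition of $\Dual^+$. The basic tool is a \emph{greedy embedding into an exclusion word}. Suppose every letter of some (finite or $\omega$-length) word $x$ is contained in some $B_j\in\Excl(\sigma_1,\ldots,\sigma_k)$. Then $x\leq_{\mathrm{hom}^+}\ExclWord(\sigma_1,\ldots,\sigma_k)=(B_1\cdots B_t)^\omega$. To see this, map position $n$ of $x$ to the first occurrence, strictly after the image of $n-1$, of some $B_j\supseteq x[n]$. Each block $B_1\cdots B_t$ contains every $B_j$, so the images advance by at most $2t$ per step. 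This gives a strictly increasing map of $\omega$ into $\omega$ with $x[n]\subseteq$ the target letter.

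A second observation connects the two sides. For a letter $\sigma\subseteq\ap$, we have $\sigma\subseteq B_j$ for some $j$ iff there are $p_i\in\sigma_i$ with $p_i\notin\sigma$ for every $i$, i.e.\ iff $\sigma_i\nsubseteq\sigma$ for all $i$. Hence $\sigma\nsubseteq B_j$ for all $j$ iff $\sigma_i\subseteq\sigma$ for some $i$, i.e.\ iff $\sigma$ would match the first letter of some $w_i$.

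\emph{Degenerate and base cases.} No $w_i$ is $\varepsilon$ (or has $\sigma_i=\emptyset$ with $|w_i|=1$), since such a word weakly embeds into any nonempty $u$. Now let all $w_i=\sigma_i$ have length $1$. Then $w_i\nleq_{\mathrm{hom}^+}u$ says that no letter of $u$ contains any $\sigma_i$. By the observation, every letter of $u$ lies in some $B_j$, and the greedy embedding gives $u\leq_{\mathrm{hom}^+}\ExclWord(\sigma_1,\ldots,\sigma_k)=\Dual^+(A)$.

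\emph{Inductive step.} Let $n_0$ be the least position with $u[n_0]\nsubseteq B_j$ for all $j$. If no such position exists, $u$ embeds greedily into the prefix $\ExclWord(\sigma_1,\ldots,\sigma_k)$ of any word of $\Dual^+(A)$, and we are done; this needs $\Dual^+(A)\neq\emptyset$, which is checked below. Otherwise put $\sigma:=u[n_0]$.

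\begin{itemize}
\item The finite prefix $u[0..n_0)$ embeds greedily into the $\omega$-length block $\ExclWord(\sigma_1,\ldots,\sigma_k)$.
\item Position $n_0$ is sent to the letter $\sigma$, which is one of the admissible letters in the union defining $\Dual^+(A)$.
\item The suffix $u'=u[n_0+1..]$, again an $\omega$-word, is handled by the induction hypothesis for $A'=\{\widehat{w_1}(\sigma),\ldots,\widehat{w_k}(\sigma)\}$.
\end{itemize}

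For the last bullet we must check that no $\widehat{w_i}(\sigma)$ weakly embeds into $u'$. If $\sigma_i\subseteq\sigma$ and $w_i[1..]\leq_{\mathrm{hom}^+}u'$, prepending $n_0$ gives $w_i\leq_{\mathrm{hom}^+}u$, a contradiction. If $\sigma_i\nsubseteq\sigma$, then $\widehat{w_i}(\sigma)=w_i$, and an embedding into the suffix $u'$ would already be one into $u$.

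We also need $N$ to decrease and $\widehat{w_i}(\sigma)$ to stay nonempty. Some $i$ has $\sigma_i\subseteq\sigma$. Every such $w_i$ has length at least $2$, since otherwise $w_i=\sigma_i$ embeds at $n_0$. So each such $w_i$ shrinks by one letter, none becomes $\varepsilon$, and $N$ strictly decreases.

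Applying the induction hypothesis yields $v'\in\Dual^+(A')$ with $u'\leq_{\mathrm{hom}^+}v'$. Then $v=\ExclWord(\sigma_1,\ldots,\sigma_k)\cdot\sigma\cdot v'\in\Dual^+(A)$. The three maps, shifted by the ordinal offsets $\omega$ and $\omega+1$, combine into a strictly increasing label-monotone map from $u$ to $v$. Here $u$ has order type $\omega=n_0+1+\omega$, so the pieces glue. Non-emptiness of $\Dual^+(A)$ for the "no $n_0$" case follows from the same induction, since any admissible $\sigma$, e.g.\ $\sigma=\ap$, yields a recursive call on a strictly smaller set without $\varepsilon$.

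The main obstacle I expect is this bookkeeping. We must make precise that the induction measure decreases and that no $\varepsilon$ ever appears, since otherwise $\Dual^+$ would be empty. We must also verify that the gluing of weak homomorphisms across the $\omega$-segments respects strict monotonicity; it does not need continuity, because weak homomorphisms only require strict increase.
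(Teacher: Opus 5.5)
Your proof tracks the paper's argument almost exactly: induction on $N$, a case split on the least position of $u$ whose letter contains some $\sigma_i$, greedy embedding of the prefix into $\ExclWord(\sigma_1,\ldots,\sigma_k)$, the two non-embedding checks for the pruned set, and gluing at offsets $\omega$ and $\omega+1$. Those parts are fine. The gap is the non-emptiness of $\Dual^+(A)$ when no $n_0$ exists. Your justification does not work. The letter $\sigma=\ap$ contains every $\sigma_i$, so it strips the first letter of \emph{every} word and turns each length-one $w_i$ into $\varepsilon$; $\Dual^+$ of the pruned set is then empty by definition. Appealing to ``the same induction'' does not help either. The induction hypothesis only produces an element of $\Dual^+(A')$ from some $\omega$-word avoiding $A'$, and you do not exhibit one.

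In fact this step cannot be filled, because the statement as written is false. Take $\ap=\{p,q\}$, $A=\{\{p\},\ \{p\}\{q\}\}$ and $u=\emptyset^\omega$. Neither word weakly embeds into $u$. But $\Excl(\{p\},\{p\})=\{\{q\}\}$, so every admissible transition letter $\sigma$ contains $p$. This gives $\widehat{w_1}(\sigma)=\varepsilon$, hence $\Dual^+(A)=\emptyset$.

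The paper's Case~1 has the same hole. Its claim that some admissible $\sigma$ keeps all pruned words nonempty fails for this $A$. Even when such a $\sigma$ exists, nonempty pruned words do not make $\Dual^+$ of the pruned set nonempty. For example, over $\ap\supseteq\{q,r,s\}$, $A=\{\{q\},\{r\}\{q\}\{s\}\}$ with $\sigma=\{r\}$ leads to $\{\{q\},\{q\}\{s\}\}$, whose $\Dual^+$ is empty. The defect propagates to the main construction: for $\varphi=\ltlFS p\lor\ltlFS(p\land\ltlFS q)$ one gets $\Dual(\Ecan(\varphi))=\emptyset$ although $\varphi$ is falsifiable, so $\top$ fits the resulting sample.

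A clean fix is to let the inductive clause of $\Dual^+(A)$ also contain $\ExclWord(\sigma_1,\ldots,\sigma_k)$ itself, i.e.\ to allow the construction to stop. Soundness is preserved, since no letter of that word contains any $\sigma_i$. Your ``no $n_0$'' case then follows immediately from the greedy embedding, and the rest of your proof goes through unchanged.
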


\begin{proof}
We argue by induction on $N=\sum_{i=1}^k |w_i|$.

\medskip\noindent\textbf{Base case.}
If some $w_i=\varepsilon$, then $w_i\le_{\mathrm{hom}^+}u$ via the empty map,
contradicting the assumption that $w_i\nleq_{\mathrm{hom}^+}u$.
Hence this case cannot occur.

If all $w_i$ have length $1$, write $w_i=\sigma_i$.
Then $\Dual^+(A)=\ExclWord(\sigma_1,\ldots,\sigma_k)$.
Since $\sigma_i\nsubseteq u[j]$ for all $i,j$, every $u[j]$ belongs to
$\Excl(\sigma_1,\ldots,\sigma_k)$.
Let $v=(B_1\cdots B_t)^\omega$ be $\ExclWord(\sigma_1,\ldots,\sigma_k)$.

Define $h(-1)=-1$ and for $j\ge 0$, $h(j)=\min\{t>h(j-1)\mid u[j]\subseteq v[t]\}$. Because each $u[j]$ equals some $B_r$ and $v$ contains infinitely many $B_r$’s,
this defines a strictly increasing $h$ with $u[j]\subseteq v[h(j)]$.
Thus $u\le_{\mathrm{hom}^+}v$.

\medskip\noindent\textbf{Inductive step.}
Assume some $w_i$ has length at least $2$ and write $w_i=\sigma_i\cdot w_i'$.
Let $\Excl(\sigma_1,\ldots,\sigma_k)=\{B_1,\ldots,B_t\}$ and
$\ExclWord(\sigma_1,\ldots,\sigma_k)=(B_1\cdots B_t)^\omega$.

Define $J := \{ j\in\mathbb{N} \mid \exists i:\ \sigma_i \subseteq u[j] \}$.

\smallskip
\emph{Case 1: $J=\emptyset$.}
Write $\Excl(\sigma_1,\ldots,\sigma_k)=\{B_1,\ldots,B_t\}$. Then for every position $j$ there exists $\ell\in\{1,\ldots,t\}$ such that $u[j]\subseteq B_\ell$.
Hence, exactly as in the base case,
there exists a strictly increasing
$h$ such that $u \le_{\mathrm{hom}^+} \ExclWord(\sigma_1,\ldots,\sigma_k)$. We show that $\Dual^+(A)\neq\emptyset$.
Choose $\sigma\subseteq\ap$ such that $\forall \ell\in\{1,\ldots,t\}\quad \sigma\nsubseteq B_\ell$,
and such that every pruned word
\[
\widehat{w_i}(\sigma)=
\begin{cases}
w_i[1..], & \sigma_i\subseteq \sigma,\\
w_i, & \text{otherwise}
\end{cases}
\]
is nonempty.
This is possible: if $\sigma_i\subseteq\sigma$ would delete $w_i$ entirely,
then necessarily $|w_i|=1$, and we avoid this by choosing $\sigma$
so that $\sigma_i\nsubseteq\sigma$ for such indices $i$.
Since at least one word $w_i$ has length at least $2$, such a choice of
$\sigma$ exists.

Therefore, $\Dual^+\bigl(\{\widehat{w_1}(\sigma),\ldots,\widehat{w_k}(\sigma)\}\bigr)\neq\emptyset$. Pick any $v'\in \Dual^+\bigl(\{\widehat{w_1}(\sigma),\ldots,\widehat{w_k}(\sigma)\}\bigr)$, 
and define $v:=\ExclWord(\sigma_1,\ldots,\sigma_k)\cdot \sigma\cdot v'
\;\in\;\Dual^+(A)$.

Since the prefix of $v$ of length $\omega$ is
$\ExclWord(\sigma_1,\ldots,\sigma_k)$,
the same homomorphism $h$ also witnesses $u\le_{\mathrm{hom}^+} v$.

\smallskip\emph{Case 2: $J\neq\emptyset$.}
Let $j_0=\min J$ and $\sigma:=u[j_0]$.
Write $\Excl(\sigma_1,\ldots,\sigma_k)=\{B_1,\ldots,B_t\}.$

Then by the definition of $J$ we have $\forall \ell\in\{1,\ldots,t\}\quad \sigma \nsubseteq B_\ell$. Define $\widehat{w_i}(\sigma)$ as above.

\emph{Claim 1.} None of the $\widehat{w_i}(\sigma)$ is empty.  
Otherwise $w_i$ would be a one-letter word $\sigma_i$ with $\sigma_i\subseteq\sigma$,
hence $w_i\le_{\mathrm{hom}^+}u$ by mapping its unique position to $j_0$,
contrary to the hypothesis.

\emph{Claim 2.} For all $i$, $\widehat{w_i}(\sigma)\nleq_{\mathrm{hom}^+}u[(j_0 +1)..]$.
If $\widehat{w_i}(\sigma)\le_{\mathrm{hom}^+}u[(j_0 + 1)..]$ via $g$, then  
– if $\sigma_i\subseteq\sigma$, the first letter $\sigma_i$ embeds into $u[j_0]$
and composing this with $g$ yields $w_i\le_{\mathrm{hom}^+}u$;  
– if $\sigma_i\nsubseteq\sigma$, then $\widehat{w_i}(\sigma)=w_i$ and
$w_i\le_{\mathrm{hom}^+}u[(j_0 + 1)..] \implies w_i\le_{\mathrm{hom}^+}u$.
Both contradict $w_i\nleq_{\mathrm{hom}^+}u$.

\smallskip
Since at least one $\sigma_i\subseteq\sigma$, we have $\sum_i |\widehat{w_i}(\sigma)| < \sum_i |w_i| = N$. 
By Claim~2 the induction hypothesis applies to
$\widehat{A}(\sigma)=\{\widehat{w_1}(\sigma),\ldots,\widehat{w_k}(\sigma)\}
   \quad\text{and}\quad
   u[(j_0 + 1)..]$.
Hence there exists $v' \in \Dual^+(\widehat{A}(\sigma))
   \quad\text{with}\quad
   u[(j_0 + 1)..]\le_{\mathrm{hom}^+}v'$.

Let $v := \ExclWord(\sigma_1,\ldots,\sigma_k)\cdot\sigma\cdot v' \in \Dual^+(A)$.

Construct $h: dom(u) \to dom(v)$ by
\[
   h(j)=
   \begin{cases}
      h^{<}(j), & j<j_0,\\[2pt]
      \omega,   & j=j_0,\\[2pt]
      \omega+1+h^{>}(j-j_0-1), & j>j_0,
   \end{cases}
\]
where $h^{<}$ is the strictly increasing embedding of $u[0..j_0{-}1]$
into $\ExclWord(\sigma_1,\ldots,\sigma_k)$ constructed as in the base case,
and $h^{>}$ witnesses $u[(j_0 + 1)..]\le_{\mathrm{hom}^+}v'$.

The map $h$ is strictly increasing, and one checks directly that
$u[j]\subseteq v[h(j)]$ for all $j$.  
Thus $u\le_{\mathrm{hom}^+}v$.

\medskip
This completes the induction.
\end{proof}

The construction above handles strict embeddings.
To obtain the full dual characterization, we extend it
to arbitrary homomorphisms by exposing the first letter.

\begin{definition}[Dual]\label{def:dual}
Let $A=\{w_1,\ldots,w_k\}$ be a finite set of finite words.

For $\sigma\subseteq\ap$ and a word 
$w_i$,
define the pruned word
\[
   \widehat{w_i}(\sigma)
   :=
   \begin{cases}
      w_i[1..] & \text{if } w_i[0]\subseteq \sigma,\\[3pt]
      w_i      & \text{otherwise}.
   \end{cases}
\]
Thus $\widehat{w_i}(\sigma)$ is the suffix of $w_i$ obtained by removing 
its first letter when that letter embeds into $\sigma$; it may be empty.

The \emph{dual} of $A$ is the set
\[
   \Dual(A)
   :=
   \bigcup_{\sigma\subseteq\ap}
         \sigma \;\cdot\;
         \Dual^+(
            \{\widehat{w_1}(\sigma),\ldots,\widehat{w_k}(\sigma)\}
         \bigr),
\]
\end{definition}

We now show that the full dual construction captures all words
that avoid embeddings of $A$.

\begin{lemma}\label{lem:dual-characterization}
Let $A=\{w_1,\ldots,w_k\}$ be a finite set of finite words over $2^\ap$
and let $u$ be an $\omega$-word.
If $w_i\nleq_{\mathrm{hom}}u$ for all $1\le i\le k$, 
then there exists $v\in\Dual(A)$ such that $u\leq_{\mathrm{hom}}v$.
\end{lemma}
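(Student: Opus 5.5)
The plan is to reduce the statement to Lemma~\ref{lem:dual-plus-characterization} by peeling off position $0$, which is exactly how Definition~\ref{def:dual} is built. The only difference between $\leq_{\mathrm{hom}}$ and $\leq_{\mathrm{hom}^+}$ is the requirement $h(0)=0$. So a homomorphic embedding of $w_i$ into $u$ splits into two parts: the first letter $w_i[0]$ is matched with $u[0]$, and the tail $w_i[1..]$ is weakly embedded into the suffix $u[1..]$.

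\textbf{Step 1 (choosing $\sigma$).} Let $\sigma:=u[0]$ and form the pruned words $\widehat{w_i}(\sigma)$ as in Definition~\ref{def:dual}.

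\textbf{Step 2 (the pruned words still cannot be embedded).} We claim $\widehat{w_i}(\sigma)\nleq_{\mathrm{hom}^+}u[1..]$ for every $i$, and we check the two cases of the pruning.
\begin{itemize}
\item If $w_i[0]\subseteq\sigma$ and $g$ weakly embeds $w_i[1..]$ into $u[1..]$, define $h(0)=0$ and $h(1+\alpha)=1+g(\alpha)$. This $h$ is strictly increasing, maps $0$ to $0$ and respects labels, so $w_i\leq_{\mathrm{hom}}u$, a contradiction. This case also covers $\widehat{w_i}(\sigma)=\varepsilon$: the empty word always weakly embeds, so $|w_i|=1$ with $w_i[0]\subseteq u[0]$, which again gives $w_i\leq_{\mathrm{hom}} u$. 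Hence no pruned word is empty.
\item If $w_i[0]\nsubseteq\sigma$, then $\widehat{w_i}(\sigma)=w_i$. Suppose $g$ weakly embeds $w_i$ into $u[1..]$. The shifted map $\alpha\mapsto 1+g(\alpha)$ embeds $w_i$ into $u$ but does not send $0$ to $0$, so we cannot conclude $w_i\leq_{\mathrm{hom}}u$ directly.
\end{itemize}

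The second case is the main obstacle. A weak embedding of the whole of $w_i$ into $u[1..]$ does not contradict $w_i\nleq_{\mathrm{hom}}u$, because nothing forces its first letter to land at $0$. The claim as stated therefore fails in this case, and the argument has to change. The point is that only strict homomorphic minimality matters here, which suggests the following fix.

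Since $w_i[0]$ can only be matched at position $0$ under $\leq_{\mathrm{hom}}$, and it is not matched there, $w_i$ is already blocked for $u$. So $w_i$ should be dropped from the set rather than kept. Concretely, I would apply Lemma~\ref{lem:dual-plus-characterization} only to the subfamily $\{\widehat{w_i}(\sigma) : w_i[0]\subseteq\sigma\}$. If Definition~\ref{def:dual} genuinely keeps the unpruned $w_i$, I would instead observe that keeping extra words in $A$ only shrinks the set of words avoiding them. In that case the lemma requires checking whether $\Dual^+$ is monotone in the right direction, and I would first try to prove that.

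\textbf{Step 3 (applying the lemma and assembling).} With the appropriate family, Lemma~\ref{lem:dual-plus-characterization}, applied to the $\omega$-word $u[1..]$, gives $v'\in\Dual^+(\dots)$ with $u[1..]\leq_{\mathrm{hom}^+}v'$ via some $g$. Set $v:=\sigma\cdot v'\in\Dual(A)$ and define $h(0)=0$, $h(1+\alpha)=1+g(\alpha)$. Then $h$ is strictly increasing, $u[0]=\sigma=v[0]$, and the remaining positions respect labels through $g$. Hence $u\leq_{\mathrm{hom}}v$.
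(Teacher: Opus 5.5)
Your diagnosis of the second case is correct, and the problem cannot be fixed by a better argument: with Definition~\ref{def:dual} as written, the lemma is false. Take $\ap=\{p\}$, let $A=\{w_1\}$ with $w_1$ the one-letter word $\{p\}$ (this is $\Ecan(p)$), and let $u=\emptyset\,\{p\}\,\emptyset^\omega$. Then $w_1\nleq_{\mathrm{hom}}u$, since $u[0]=\emptyset$.
\begin{itemize}
\item For $\sigma=\{p\}$ the pruned word is $\varepsilon$, so this branch contributes nothing.
\item For $\sigma=\emptyset$ the word $w_1$ is kept, and $\Dual^+(\{w_1\})=\ExclWord(\{p\})=\emptyset^\omega$ because $\Excl(\{p\})=\{\emptyset\}$.
\end{itemize}
Hence $\Dual(A)=\{\emptyset\cdot\emptyset^\omega\}$. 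The word $u$ cannot embed into it, because position $1$ of $u$ carries $p$. The damage reaches the main theorem: $p\lor\ltlFS p$ fits the sample $(\{p\},+)$, $(\emptyset^\omega,-)$ that the proof of Theorem~\ref{thm:main} builds for $p$, yet it differs from $p$ on $\emptyset\{p\}$. So the construction itself is wrong, not only this proof.

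The paper's proof handles your second case with ``$w_i\leq_{\mathrm{hom}}u[1..]\leq_{\mathrm{hom}}u$, a contradiction''. This step is invalid. The embedding $u[1..]\leq_{\mathrm{hom}}u$ fails in general, and shifting only gives $w_i\leq_{\mathrm{hom}^+}u$, which contradicts nothing. The paper's Claim~2 is also stated for $\leq_{\mathrm{hom}}$, whereas Lemma~\ref{lem:dual-plus-characterization} needs $\nleq_{\mathrm{hom}^+}$. Your first bullet, phrased with a weak embedding $g$, is the form that is actually required.

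Your fallback, proving a monotonicity property of $\Dual^+$, cannot work. By Lemma~\ref{lem:dualplus-no-hom}, every $v'\in\Dual^+(\widehat{A}(\sigma))$ satisfies $w_i\nleq_{\mathrm{hom}^+}v'$ for each kept unpruned $w_i$. Weak embeddings compose, so $u[1..]\leq_{\mathrm{hom}^+}v'$ is impossible whenever such a $w_i$ weakly embeds into $u[1..]$.

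Your first option is the correct repair, but it changes the definition and hence the statement: set $\Dual(A):=\bigcup_{\sigma\subseteq\ap}\sigma\cdot\Dual^+(\{w_i[1..]\mid w_i[0]\subseteq\sigma\})$. With this definition:
\begin{itemize}
\item Your Steps~1--3 give a complete proof relative to Lemma~\ref{lem:dual-plus-characterization}. An empty family yields $\ap^\omega$, reading the base case with $k=0$.
\item Soundness (Lemma~\ref{lem:dual-no-hom}) still holds, because $w_i\leq_{\mathrm{hom}}\sigma\cdot v'$ forces $w_i[0]\subseteq\sigma$ and then $w_i[1..]\leq_{\mathrm{hom}^+}v'$.
\end{itemize}

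The lemma you use as a black box in Step~3 has its own gap in Case~1. Over $\ap=\{p,q\}$, take the two words $\{p\}$ and $\{p\}\{q\}$. Every admissible $\sigma$ contains $p$ and so prunes the first word to $\varepsilon$, making $\Dual^+$ empty, even though $\emptyset^\omega$ avoids both words. To close this, $\Dual^+(A)$ should also contain the bare exclusion word itself.
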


\begin{proof}
Let $\sigma := u[0]$ and let $u[1..]$ denote the suffix $u[1]u[2]\cdots$.

For each $i$, define
\[
   \widehat{w_i}(\sigma)
   :=
   \begin{cases}
      w_i[1..] & \text{if } w_i[0]\subseteq \sigma,\\[3pt]
      w_i      & \text{otherwise},
   \end{cases}
\]
and set $\widehat{A}(\sigma)
   :=\{\widehat{w_1}(\sigma),\ldots,\widehat{w_k}(\sigma)\}$.

\medskip\noindent
\emph{Claim 1: $\widehat{w_i}(\sigma)\neq\varepsilon$ for all $i$.}

If $\widehat{w_i}(\sigma)=\varepsilon$, then $w_i$ must be a one-letter word
with $w_i[0]\subseteq\sigma=u[0]$, and $w_i\leq_{\mathrm{hom}}u$ via the map
sending $0$ to $0$, contradicting the assumption $w_i\nleq_{\mathrm{hom}}u$.

\medskip\noindent
\emph{Claim 2: $\widehat{w_i}(\sigma)\nleq_{\mathrm{hom}}u[1..]$ for all $i$.}

Otherwise, suppose $\widehat{w_i}(\sigma)\leq_{\mathrm{hom}}u[1..]$ via some
strictly increasing $h$.
If $w_i[0]\subseteq\sigma$, then $w_i = w_i[0]\cdot w_i[1..]$ and we define
$\tilde{h}$ on $w_i$ by $\tilde{h}(0)=0,
   \qquad
   \tilde{h}(j+1)=1+h(j) \quad (j\ge 0)$.

Then $\tilde{h}$ is a homomorphism witnessing $w_i\leq_{\mathrm{hom}}u$,
again contradicting the hypothesis.
If $w_i[0]\not\subseteq\sigma$, then $\widehat{w_i}(\sigma)=w_i$ and
$w_i\leq_{\mathrm{hom}}u[1..]\leq_{\mathrm{hom}}u$, a contradiction as well.
Hence Claim~2 holds.

\medskip
By Claims~1–2, all words in $\widehat{A}(\sigma)$ are nonempty and satisfy
$\widehat{w_i}(\sigma)\nleq_{\mathrm{hom}}u[1..]$.
Thus we can apply Lemma~\ref{lem:dual-plus-characterization} to 
$\widehat{A}(\sigma)$ and $u[1..]$: there exists $v' \in \Dual^{+}(\widehat{A}(\sigma))
   \text{such that}
u[1..]\leq_{\mathrm{hom}}v'$.

By the definition of $\Dual(A)$ we then have $v := \sigma\cdot v' \in \Dual(A)$.

Let $h$ witness $u[1..]\leq_{\mathrm{hom}}v'$.
Define $\hat{h}$ by $\hat{h}(0) = 0,\quad
   \hat{h}(j) = 1 + h(j-1)\quad\text{for } j\ge 1$.
Clearly $\hat{h}$ is strictly increasing.
For $j\ge 1$ we have $u[j] = u[1..][j-1] \subseteq v'[h(j-1)] = v[\hat{h}(j)]$,
and for $j=0$ we have $u[0]=\sigma=v[0]=v[\hat{h}(0)]$.
Hence $\hat{h}$ witnesses $u\leq_{\mathrm{hom}}v$.

Since $v\in\Dual(A)$, the lemma follows.
\end{proof}

We next establish the soundness of the dual construction,
showing that no word in $\Dual^+(A)$ admits an embedding of any word in $A$.

\begin{lemma}\label{lem:dualplus-no-hom}
Let $A=\{w_1,\ldots,w_k\}$ be a finite set of nonempty finite words.
Then $\forall v\in \Dual^+(A),\ \forall i,\quad w_i \nleq_{\mathrm{hom}^+} v$.
\end{lemma}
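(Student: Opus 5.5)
We prove the statement by induction on $N=\sum_{i=1}^k |w_i|$, mirroring the recursive definition of $\Dual^+$. Two preliminary observations are used throughout.

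\textbf{Observation (a).} For every $B\in\Excl(\sigma_1,\ldots,\sigma_k)$ and every $i$ we have $\sigma_i\nsubseteq B$. Indeed, $B=\ap\setminus\{p_1,\ldots,p_k\}$ with $p_i\in\sigma_i$, so $p_i\in\sigma_i\setminus B$. Consequently no position of $\ExclWord(\sigma_1,\ldots,\sigma_k)$ can serve as the image of a letter $\sigma_i$.

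\textbf{Observation (b).} In the inductive step, any admissible $\sigma$ (i.e.\ $\sigma\nsubseteq B_j$ for all $j$) satisfies $\sigma_i\subseteq\sigma$ for at least one $i$. Otherwise we could pick $p_i\in\sigma_i\setminus\sigma$ for every $i$, and then $\ap\setminus\{p_1,\ldots,p_k\}$ is some $B_j$ containing $\sigma$, a contradiction. If some $\sigma_i=\emptyset$, then $\sigma_i\subseteq\sigma$ holds trivially. Hence $\sum_i|\widehat{w_i}(\sigma)|<N$, which makes both the definition and the induction well-founded.

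\textbf{Base case.} All $w_i=\sigma_i$ have length one, and $v=\ExclWord(\sigma_1,\ldots,\sigma_k)$. A weak homomorphism $w_i\to v$ would map the single position of $w_i$ to a position carrying some $B_j\supseteq\sigma_i$, which (a) forbids. If $\Excl$ is empty, $v$ is the empty word and no such map exists at all. (If some $w_i$ is empty, $\Dual^+(A)=\emptyset$ and there is nothing to prove.)

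\textbf{Inductive step.} Let $v=\ExclWord(\sigma_1,\ldots,\sigma_k)\cdot\sigma\cdot v'$ with $v'\in\Dual^+(\widehat A(\sigma))$. If some $\widehat{w_i}(\sigma)$ is empty, then $\Dual^+(\widehat A(\sigma))=\emptyset$ and no such $v$ arises; so assume all of them are nonempty. The induction hypothesis then gives $\widehat{w_i}(\sigma)\nleq_{\mathrm{hom}^+}v'$ for all $i$.

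Suppose, for contradiction, that $g$ witnesses $w_i\leq_{\mathrm{hom}^+}v$. The positions of $v$ split into three parts: those below $\omega$ (carrying the $B_j$'s), the position $\omega$ (carrying $\sigma$), and the positions $\omega+1+\delta$ (carrying $v'[\delta]$). By (a), $g(0)\ge\omega$. We distinguish two cases.

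\emph{Case $g(0)=\omega$.} Then $\sigma_i\subseteq\sigma$, so $\widehat{w_i}(\sigma)=w_i[1..]$. By strict monotonicity, every later position $m\ge1$ satisfies $g(m)>\omega$, so $g(m)=\omega+1+\delta_m$ for a unique $\delta_m$. The map $m-1\mapsto\delta_m$ is then a strictly increasing, label-respecting map witnessing $\widehat{w_i}(\sigma)\leq_{\mathrm{hom}^+}v'$.

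\emph{Case $g(0)>\omega$.} Then all of $w_i$ maps into the $v'$ part, and the same reindexing gives $w_i\leq_{\mathrm{hom}^+}v'$. Since $\widehat{w_i}(\sigma)$ is either $w_i$ or its suffix $w_i[1..]$, restricting this map (and shifting the indices) gives $\widehat{w_i}(\sigma)\leq_{\mathrm{hom}^+}v'$.

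Both cases contradict the induction hypothesis, which completes the step.

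The delicate points are the ordinal reindexing, which relies on every position $\gamma>\omega$ being uniquely of the form $\omega+1+\delta$, and the degenerate situations (empty $\Excl$, empty pruned words). The essential combinatorial content lies in Observation (b), which guarantees that the measure $N$ strictly decreases and hence that the induction hypothesis is applicable.
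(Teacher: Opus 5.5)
Your proof is correct and follows the same route as the paper's: induction on $N$, using the exclusion letters to force the image of the first letter to position $\omega$ or later, then transferring the embedding into $v'$ to contradict the induction hypothesis for $\widehat{A}(\sigma)$. Your split into $g(0)=\omega$ versus $g(0)>\omega$, together with Observation (b), makes rigorous two points the paper glosses over: that $\widehat{w_i}(\sigma)$ may be $w_i$ itself rather than $w_i[1..]$ when the first letter is not mapped to position $\omega$, and that the total length strictly decreases, so the induction hypothesis applies.
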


\begin{proof}
We prove the statement by induction on 
\(
   N \ :=\ \sum_{i=1}^k |w_i|.
\)

\emph{Base 1:}  
If some $w_i=\epsilon$, then by definition $\Dual^+(A)=\emptyset$ and the claim is vacuous.

\emph{Base 2:}  
If all $w_i$ have length~$1$, say $w_i=\sigma_i$, then by the definition of 
\(\ExclWord(\sigma_1,\ldots,\sigma_k)\) every letter $B$ occurring in 
\(\ExclWord(\sigma_1,\ldots,\sigma_k)\) satisfies  
\(\sigma_i\not\subseteq B\).
Thus no homomorphism (even in the relaxed $\mathrm{hom}^+$ sense) can map 
$w_i$ into $\ExclWord(\sigma_1,\ldots,\sigma_k)$, proving the base.

\medskip
\emph{Inductive step.}
Assume some $w_i$ has length at least $2$.
Write each word as
\(
   w_i = \sigma_i\, w_i' ,
\)
where $\sigma_i$ is the first letter and $w_i'$ its suffix.

Let $v\in \Dual^+(A)$.
By the inductive definition of $\Dual^+(A)$, $v$ has the canonical form
\begin{equation}\label{eq:dualplus-decomp}
   v = \ExclWord(\sigma_1,\ldots,\sigma_k)\;\cdot\;\sigma\;\cdot\;v',
\end{equation}
where:
\begin{itemize}
    \item $\sigma \subseteq \ap$.
    \item $v' \in \Dual^+\!\left(\{\widehat{w_1}(\sigma),\ldots,\widehat{w_k}(\sigma)\}\right)$.
\end{itemize}
This decomposition follows directly from the inductive clause of $\Dual^+$:
the first block \(\ExclWord(\sigma_1,\ldots,\sigma_k)\) handles all first letters;
choosing a letter $\sigma$ determines which suffixes survive (via $\widehat{w_i}(\sigma)$),
and the remainder must lie in the recursive dual of these suffixes.

Now fix $i$.
Suppose towards contradiction that $w_i \le_{\mathrm{hom}^+} v$.
Let $h$ be such a homomorphism.

Since $h$ is strictly increasing and $w_i=\sigma_i w_i'$, we must have $\sigma_i \subseteq v[h(0)]$.
But by \eqref{eq:dualplus-decomp}, all letters in 
\(\ExclWord(\sigma_1,\ldots,\sigma_k)\) exclude every $\sigma_i$.
Hence $h(0)$ cannot lie in the $\ExclWord$-prefix, so necessarily $h(0) \geq |\ExclWord(\sigma_1,\ldots,\sigma_k)|$.
Thus the first letter of $w_i'$ must embed into $v[h(1)] \in v'$.
Consequently the suffix $w_i'$ embeds (via the shifted map $h'$)
into $v'$, i.e. $w_i' \le_{\mathrm{hom}^+} v'$.
But by construction, $v' \in \Dual^+\bigl(\{\widehat{w_1}(\sigma),\ldots,\widehat{w_k}(\sigma)\}\bigr)$ and $\widehat{w_i}(\sigma)=w_i'$.

The set of these suffixes has strictly smaller total length: $\sum_{j=1}^k |\widehat{w_j}(\sigma)| \ <\ N$.
Hence by the induction hypothesis, $w_i' \nleq_{\mathrm{hom}^+} v'$,
contradiction.

Thus $w_i \nleq_{\mathrm{hom}^+} v$ for every $i$, completing the induction.
\end{proof}

This property extends to the full dual construction,
yielding maximal negative examples.

\begin{lemma}\label{lem:dual-no-hom}
Let $A=\{w_1,\ldots,w_k\}$ be a finite set of finite words.
Then for every $v\in\Dual(A)$ and every $i$ we have $w_i \nleq_{\mathrm{hom}} v$.
\end{lemma}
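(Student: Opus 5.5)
Fix $v\in\Dual(A)$ and an index $i$. I will argue by contradiction: assume $w_i\leq_{\mathrm{hom}} v$ via some $h$, and reduce to Lemma~\ref{lem:dualplus-no-hom}. By Definition~\ref{def:dual}, $v$ decomposes as $v=\sigma\cdot v'$, where $\sigma\subseteq\ap$ and $v'\in\Dual^+(\widehat{A}(\sigma))$ with $\widehat{A}(\sigma)=\{\widehat{w_1}(\sigma),\ldots,\widehat{w_k}(\sigma)\}$. Since $\Dual^+$ of a set containing $\varepsilon$ is empty, the existence of $v'$ forces every $\widehat{w_j}(\sigma)$ to be nonempty. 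So Lemma~\ref{lem:dualplus-no-hom} applies to $\widehat{A}(\sigma)$ and gives $\widehat{w_j}(\sigma)\nleq_{\mathrm{hom}^+}v'$ for all $j$.

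The case split is on whether $w_i[0]\subseteq\sigma$. Because $h(0)=0$, we have $w_i[0]\subseteq v[0]=\sigma$, so only the first case actually occurs and $\widehat{w_i}(\sigma)=w_i[1..]$. This suffix is nonempty by the previous paragraph. Now define $h'(\alpha):=h(1+\alpha)-1$ on the domain of $w_i[1..]$. Since $h$ is strictly increasing with $h(0)=0$, we get $h(1+\alpha)\geq 1$, so $h'$ is well defined. It is strictly increasing, and $w_i[1+\alpha]\subseteq v[h(1+\alpha)]=v'[h'(\alpha)]$. Hence $h'$ witnesses $w_i[1..]\leq_{\mathrm{hom}^+}v'$, which contradicts Lemma~\ref{lem:dualplus-no-hom}. (If $w_i[0]\not\subseteq\sigma$ held, then $h$ could not exist at all, since $h(0)=0$ forces $w_i[0]\subseteq\sigma$. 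So that case is immediate.)

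The step needing most care is the edge case where some $w_i$ is a single letter contained in $\sigma$. Then $\widehat{w_i}(\sigma)=\varepsilon$, and the empty-set convention of $\Dual^+$ makes that summand of $\Dual(A)$ vacuous. I will state this explicitly so that the nonemptiness hypothesis of Lemma~\ref{lem:dualplus-no-hom} is verified before it is invoked. The ordinal shift $h(1+\alpha)-1$ is unproblematic, because $h(1+\alpha)\geq 1$ and $v'=v[1..]$ by the definition of concatenation.
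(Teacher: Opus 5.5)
Your proof is correct and follows the paper's argument essentially step for step. You decompose $v=\sigma\cdot v'$ with $v'\in\Dual^+(\widehat{A}(\sigma))$, use $h(0)=0$ to force $w_i[0]\subseteq\sigma$ so that $\widehat{w_i}(\sigma)=w_i[1..]$, shift the embedding into $v'$, and contradict Lemma~\ref{lem:dualplus-no-hom}; your version is slightly more careful than the paper's, since you record the shifted map as a weak homomorphism ($\leq_{\mathrm{hom}^+}$, exactly what Lemma~\ref{lem:dualplus-no-hom} excludes, where the paper writes $\leq_{\mathrm{hom}}$) and you check nonemptiness of the pruned words explicitly.
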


\begin{proof}
Fix $v\in\Dual(A)$.

By \cref{def:dual}, there exists a letter $\sigma\subseteq\ap$ such that $v = \sigma \cdot v',
    \quad
    v' \in \Dual^{+}(A_\sigma)$,
where $A_\sigma := \{\widehat{w_1}(\sigma),\ldots,\widehat{w_k}(\sigma)\}$.

Assume for contradiction that $w_i \le_{\mathrm{hom}} v$ for some $i$.
Let $h$ witness this embedding.
Since $h$ is strictly increasing and $v[0]=\sigma$, we have $w_i[0]\subseteq\sigma$.
By the definition of pruning in \cref{def:dual}, this implies $\widehat{w_i}(\sigma) = w_i[1..]$.

The remainder of the embedding must map the suffix of $w_i$ into $v'$.

Hence $\widehat{w_i}(\sigma) = w_i[1..] \;\le_{\mathrm{hom}}\; v'$. But $\widehat{w_i}(\sigma)\in A_\sigma$ and $v'\in\Dual^{+}(A_\sigma)$.
By the lemma for $\Dual^{+}$ (see \cref{lem:dualplus-no-hom}), $\forall u\in A_\sigma,\ \forall z\in \Dual^{+}(A_\sigma),\quad u\nleq_{\mathrm{hom}} z$.
This contradicts $\widehat{w_i}(\sigma)\le_{\mathrm{hom}} v'$.

Therefore $w_i\nleq_{\mathrm{hom}} v$ for all $i$, completing the proof.
\end{proof}

We can now combine the completeness and soundness properties
of the dual construction to obtain a maximal set of negative examples.

\maxnegexamples*

\begin{proof}
Let \(E^-_\varphi = \Dual(\Ecan(\varphi))\).
By \cref{lem:dual-no-hom,lem:canonical-set},
every \(f \in E^-_\varphi\) satisfies \(f \not\models \varphi\).
Let \(w\) be any finite or $\omega$-length word with \(w \not\models \varphi\).
Since \(\Ecan(\varphi)\) is canonical, for all \(e \in \Ecan(\varphi)\) we have
\(e \not\leq_{\hom} w\).
By \cref{lem:dual-characterization}, there exists \(f \in E^-_\varphi\) such that
\(w \leq_{\hom} f\).
Thus \(F\) is maximal among negative example sets for \(\varphi\).
\end{proof}

Combining the canonical positive examples with the maximal negative
examples yields a finite labeled sample that completely determines the
semantics of the formula. We now use these ingredients to prove the following theorem.

\thmMain*

\begin{proof}
Let $\varphi \in \LTL_{\ltlFS,\land,\lor,\top,\bot}[\ap]$ and define the labeled sample
\[
S \;=\;
\{(e,+) \mid e \in E^+_\varphi
\;\cup\;
\{(f,-) \mid f \in E^-_\varphi \}.
\]
    
We show that $S$ uniquely characterizes $\varphi$.

Let $\psi \in \LTL_{\ltlFS,\land,\lor,\top,\bot}[\ap]$ be any formula that fits $S$.
We prove $\llbracket\varphi\rrbracket = \llbracket\psi\rrbracket$.

\smallskip
\noindent
($\subseteq$)
Let $w \models \varphi$.  
By \cref{lem:canonical-set}, $E^+_\varphi$ is a canonical set, hence there exists
$e \in E^+_\varphi$ such that $e \leq_{\mathrm{hom}} w$.
Since $\psi$ fits $S$, we have $e \models \psi$.
By monotonicity it follows that
$w \models \psi$.

\smallskip
\noindent
($\supseteq$)
Assume towards a contradiction that there exists a word
$w \in \llbracket\psi\rrbracket \setminus \llbracket\varphi\rrbracket$.
By \cref{thm:finite-infinite-equivalence}, we may assume that $w$ has finite length.
Since $w \not\models \varphi$, 
by \cref{lem:max-negative-examples}, there exists
$f \in E^-_\varphi$ such that $w \leq_{\mathrm{hom}} f$.
As $w \models \psi$, monotonicity implies $f \models \psi$.
However, $(f,-) \in S$ and $\psi$ fits $S$, a contradiction.

\smallskip
\noindent
Therefore, $\llbracket\varphi\rrbracket = \llbracket\psi\rrbracket$, and $\varphi$ is uniquely characterized by $S$.
\end{proof}

Having established finite characterizations for the fragment
$\LTL_{\ltlFS,\land,\lor,\top,\bot}$,
we conclude by showing that such characterizations do not extend
to richer fragments.

\thmXFFSneg*
\begin{proof}
$\LTL   _{\ltlX,\land,\lor,\neg}$:
Consider the formula $p$. Suppose for contradiction that $p$ has a finite characterization $(E^+,E^-)$ with $E^-=\{w_1,\ldots,w_k\}$. Let
\[\varphi:=\underset{p\not\in w_i(i+1)}{\underset{1\leq i\leq k}{\bigwedge}}\ltlX^{i+1}p\wedge\underset{p\in w_i(i+1)}{\underset{1\leq i\leq k}{\bigwedge}}\ltlX^{i+1}\neg p\]
and take $\psi:=p\vee \varphi$. We claim that $\psi$ fits $(E^+,E^-)$. First of all, clearly $\psi$ fits $E^+$, because $E^+$ consists of examples that are positive for $p$ and $p$ entails $\psi$. Next, we show that for all $1\leq i\leq k$, $w_i\not\models\psi$. As $w_i\in E^-$ is a negative example of $p$, 
clearly $w_i\not\models p$. 
Furthermore, by construction, $\varphi$ is false in every negative example. Therefore, 
$\psi$ is also false in every negative example.
Finally, we must show that $\psi$ is not equivalent to $p$. Indeed, let $w$ be
the finite word of length  $k+1$  with $w[0]=\emptyset$ and 
$w[i+1]=\ap\setminus w_i[i+1]$.
Then $w$ makes $\psi$ true but not $p$.

$\LTL_{\ltlF,\land,\lor,\neg}$:
Consider the formula $p$. Suppose for contradiction that $p$ has a finite characterization $(E^+,E^-)$ with $E^-=\{w_1,\ldots,w_n\}$. For each
$i\leq n$, let $k_i\in\mathbb{N}\cup\{\infty\}$ be the number of 
times the truth value of $p$ changes in $w_i$. Now, 
let $k$ be any natural number different from $k_1, \ldots, k_n$. For $m\in\mathbb{N}$, let 
\[\begin{array}{lll}
\phi_{\geq 0} &=& \top \\
\phi_{\geq m+1} &=& (p \land \ltlF(\neg p\land \phi_{\geq m}))\lor (\neg p\land \ltlF(p\land \phi_{\geq m})) \\
\phi_{=m} &=& \phi_{\geq m}\land\neg\phi_{\geq m+1}
\end{array}\]
Observe that $\phi_{=m}$ is true in a word 
precisely if the truth value of $p$ changes exactly $m$ times. Now, 
take $\psi:=p\vee \phi_{=k}$. We claim that $\psi$ fits $(E^+,E^-)$. First of all, clearly $\psi$ fits $E^+$, because $E^+$ consists of examples that are positive for $p$ and $p$ entails $\psi$. Furthermore, for all $1\leq i\leq n$, $w_i\not\models\psi$: as $w_i\in E^-$ is a negative example of $p$, clearly $w\not\models p$. And, by construction, $\phi$ is false in every negative example. Therefore, 
$\psi$ is also false in every negative example.
Finally, we must show that $\psi$ is not equivalent to $p$. Indeed, let $w$ be
any word in which $p$ is initially false and changes truth value exactly $m$ times. Then $w$ makes $\psi$ true but not $p$.

$\LTL_{\ltlFS,\land,\lor,\neg}$:
follows immediately from the above since
$\ltlF\phi$ is definable as 
$\ltlFS\phi\lor\phi$.
\end{proof}

\thmFXandOr*
\begin{proof}
Let $\phi := \ltlF(p \land q)$ and, for each $n \in \mathbb{N}$, define
\[
\psi_n := \phi \;\lor\; \bigl(p \land \ltlX p \land \ltlX^2 p \land \cdots \land \ltlX^n p \land \ltlX^{n+1} q\bigr).
\]
Observe that $\sema{\phi} \subseteq \sema{\psi_n}$ for all $n$.

Assume, towards a contradiction, that there exists a finite set $S$ of (transfinite) examples that distinguishes $\phi$ from every $\psi_n$. Since $S$ is finite, by the pigeonhole principle there exists an example $(w,b) \in S$ that distinguishes $\phi$ from infinitely many $\psi_n$. Hence, for infinitely many $n$, we have $w \models \psi_n
\quad\text{and}\quad
w \not\models \phi$.

For each such $n$, since $w \not\models \phi$, it must satisfy the second disjunct of $\psi_n$, i.e.,
\[
w \models p \land \ltlX p \land \cdots \land \ltlX^n p \land \ltlX^{n+1} q.
\]
Thus, for infinitely many $n$, the word $w$ contains a position $n+1$ where $q$ holds, preceded by a prefix of $p$'s. It follows that there exist at least two distinct indices $m \neq n$ such that $w \models \psi_m$ and $w \models \psi_n$. This implies that $w$ contains two positions where $q$ holds, each preceded by a (possibly different) finite prefix of $p$'s, and hence $w \models \ltlF(p \land q) = \phi$, a contradiction.

Therefore, no finite set of examples can distinguish $\phi$ from all $\psi_n$, and thus $\LTL_{\ltlF,\ltlX,\wedge,\vee}[\ap]$ does not admit finite characterizations over $\allwords$.
\end{proof}

\corDual*
\begin{proof}
    Let $\phi\in\LTL_{\ltlGS,\land,\lor,\top,\bot}[\ap]$ be given. Let $\phi^\neg$ be obtained from 
    $\phi$ by replacing every atomic proposition $p$ by $\neg p$.    
    Observe that $\neg\phi^\neg$ can be equivalently written as an $\LTL_{\ltlFS,\land,\lor,\top,\bot}[\ap]$-formula $\chi$. Then $\chi$ is uniquely characterized, relative to $\LTL_{\ltlFS,\land,\lor,\top,\bot}[\ap]$, by a finite set of labeled examples $E_\chi^-\cup E_\chi^+$.
    For each $\ap$-word $w$, let $\widehat{w}$ be the 
    corresponding $\ap$-word, where 
    each letter is replaced by its complement (relative to $\ap$).
    Let $E^+_\phi=\{\widehat{w}\mid w\in E^-_\chi\}$ and $E^-_\phi = \{\widehat{w}\mid w\in E^+_\chi\}$.

    Claim 1: $\phi$ fits $(E^+_\phi,E^-_\phi)$.  Indeed,
    suppose $\widehat{w}\not\models\phi$ for some $\widehat{w}\in E^+_\phi$.  Then 
    $\widehat{w}\models\neg\phi$ and hence $w\models\chi$ but $w\in E^-_\chi$, a contradiction. Similarly for the negative examples. 

    Claim 2: $(E^+_\phi,E^-_\phi)$ uniquely characterizes $\phi$.  
    Indeed, suppose $\psi$ fits. 
    Again $\neg\psi^\neg$ can be written equivalently as an $\LTL_{\ltlFS,\land,\lor,\top,\bot}[\ap]$-formula $\theta$. By the same reasoning as before, we can show that
    $\theta$ fits $(E^+_\chi,E^-_\chi)$. Hence, $\theta$ is 
    equivalent to $\chi$, from which it follows that $\psi$ is equivalent to $\phi$.
\end{proof}

\subsection{Proofs of the complexity bounds}

This section provides the proofs of the bounds stated in Remark~\ref{rem:bound}.
The analysis proceeds in two stages: first, by controlling the length
and number of canonical examples, and second, by bounding the growth
of the dual construction derived from them.

\paragraph*{Structural Measures}

We introduce two structural parameters of formulas.

\paragraph*{$\ltlFS$-count.}
Let $\numOfFS(\varphi)$ denote the total number of $\ltlFS$ operators
in the syntax tree of $\varphi$:
\[
\begin{aligned}
\numOfFS(\bot) &= 0, \\
\numOfFS(\top) &= 0, \\
\numOfFS(p) &= 0, \\
\numOfFS(\ltlFS\varphi) &= \numOfFS(\varphi) + 1, \\
\numOfFS(\varphi \circ \psi)
&=
\numOfFS(\varphi)+\numOfFS(\psi)
\quad (\circ \in \{\land,\lor\}).
\end{aligned}
\]

\paragraph*{Leaf count.}
Let $\numOfleaf(\varphi)$ denote the total number of leaves in the syntax tree of $\varphi$: 
\[
\begin{aligned}
\numOfleaf(\bot) &= 1, \\
\numOfleaf(\top) &= 1, \\
\numOfleaf(p) &= 1, \\
\numOfleaf(\ltlFS \varphi) &= \numOfleaf(\varphi), \\
\numOfleaf(\varphi \circ \psi)
&=
\numOfleaf(\varphi)+\numOfleaf(\psi)
\quad (\circ \in \{\land,\lor\}).
\end{aligned}
\]

For a finite set of words $E$, define $\ell_{\max}(E)
:=
\max\{\, |u| \mid u \in E \,\},
\quad
n(E)
:=
|E|$.

Let $m := |\varphi|$. These structural measures provide a convenient way to relate the syntax
of a formula to the size of its associated example sets.
In particular, they allow the length and number of canonical examples
to be bounded in terms of the size of the formula.

\begin{proposition}[Length bound]\label{prop:length-bound}
For every formula $\varphi$,
\[
\ell_{\max}(\Ecan(\varphi))
\le
\numOfFS(\varphi)+\numOfleaf(\varphi).
\]
In particular, if $m:=|\varphi|$, then $\ell_{\max}(\Ecan(\varphi))\le m$.
\end{proposition}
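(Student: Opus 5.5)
We argue by structural induction on $\varphi$, following the inductive definition of $\Ecan$, and prove the stronger but natural bound $\ell_{\max}(\Ecan(\varphi))\le \numOfFS(\varphi)+\numOfleaf(\varphi)$. We adopt the convention $\ell_{\max}(\emptyset)=0$, which is needed because $\Ecan(\bot)=\emptyset$. For the base cases $\bot,\top,p$, the canonical set is empty or consists of a single one-letter word. Hence $\ell_{\max}\le 1 = 0+1$.

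For $\ltlFS\psi$, every word in $\Ecan(\ltlFS\psi)=\set{\emptyset}\cdot\Ecan(\psi)$ is exactly one letter longer than a word in $\Ecan(\psi)$. This gives $\ell_{\max}\le \ell_{\max}(\Ecan(\psi))+1$, and the bound follows because $\numOfFS$ increases by one while $\numOfleaf$ is unchanged. For $\psi_1\lor\psi_2$, the canonical set is a union, so $\ell_{\max}$ is the maximum of the two values. By the induction hypothesis this is bounded by either sum, and hence by the sum of both measures, since all measures are nonnegative.

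The only step needing care is conjunction. There we must bound the length of a merged interleaving $e\in u\mergeop v$. By the definition of $\mergeop$, position $0$ of $e$ is shared by $u_0$ and $v_0$. Every other position of $e$ is some $a_i$ or some $b_j$, since condition 4 says each tail letter appears exactly once and no position is empty (we check that the definition leaves no unused positions). Hence $|e|\le 1+(|u|-1)+(|v|-1)=|u|+|v|-1\le |u|+|v|$. Applying the induction hypothesis to $u\in\Ecan(\psi_1)$ and $v\in\Ecan(\psi_2)$ and using additivity of both measures over $\land$ gives the claim. If one side is empty, then $\Ecan$ of the conjunction is empty and the bound is trivial.

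Finally, each $\ltlFS$ occurrence and each leaf is a distinct node of the syntax tree. Therefore $\numOfFS(\varphi)+\numOfleaf(\varphi)\le|\varphi|=m$, which yields the second sentence of the statement. We do not expect any step to be a real obstacle. The only subtle point is confirming that merged interleavings contain no padding positions, and condition 4 of the definition, together with the index ranges $1\le a_i,b_j\le\ell-1$, should force this. If that reading fails, we could instead restrict to interleavings in which every position is covered, since $e\in\Ecan$ arises only from such constructions.
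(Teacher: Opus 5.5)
Your proof is correct and follows the paper's argument step for step. It uses structural induction with the same base, $\ltlFS$, disjunction and conjunction cases, the key estimate $|e|\le|u|+|v|-1$ for merged interleavings, and the final count of $\ltlFS$-nodes plus leaves against $|\varphi|$. The paper likewise assumes the no-padding reading of $\mergeop$ (consistent with its Delannoy-number count of $|u\mergeop v|$), and your inequality $\ell_{\max}(\Ecan(\ltlFS\psi))\le\ell_{\max}(\Ecan(\psi))+1$ is slightly more careful than the paper's stated equality, which fails when $\Ecan(\psi)=\emptyset$.
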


\begin{proof}
We prove the bound by structural induction on $\varphi$.

\textbf{Base cases.}
If $\varphi\in\{\top,p\}$ then $\Ecan(\varphi)$ contains a single word of
length $1$, hence $\ell_{\max}(\Ecan(\varphi))=1
=
\numOfleaf(\varphi)+\numOfFS(\varphi)$.

If $\varphi=\bot$, then $\Ecan(\varphi)=\emptyset$ and
$\ell_{\max}(\Ecan(\varphi))=0$, so the bound holds.

\textbf{$\ltlFS$ case.}
Suppose $\varphi=\ltlFS\psi$. Since $\Ecan(\varphi)=\set{\emptyset}\cdot\Ecan(\psi)$, each canonical word gains one prefix letter and therefore $\ell_{\max}(\Ecan(\varphi)) = \ell_{\max}(\Ecan(\psi))+1$.
By the induction hypothesis, $\ell_{\max}(\Ecan(\psi))
\le
\numOfleaf(\psi)+\numOfFS(\psi)$, hence $\ell_{\max}(\Ecan(\varphi))
\le
\numOfleaf(\varphi)+\numOfFS(\varphi)$.

\textbf{$\lor$ case.}
Suppose $\varphi$ is a disjunction with subformulas $\psi_1,\psi_2$.

Then, $\Ecan(\varphi)=\Ecan(\psi_1)\cup\Ecan(\psi_2)$, so $\ell_{\max}(\Ecan(\varphi)) = \max\{\ell_{\max}(\Ecan(\psi_1)),\ell_{\max}(\Ecan(\psi_2))\}$. Applying the induction hypothesis gives $\ell_{\max}(\Ecan(\varphi))
\le
\numOfleaf(\varphi)+\numOfFS(\varphi)$.

\textbf{$\land$ case.}
Suppose $\varphi$ is a conjunction with subformulas $\psi_1,\psi_2$.
For $u\in\Ecan(\psi_1)$ and $v\in\Ecan(\psi_2)$, every
$e\in u\mergeop v$ satisfies $|e|\le |u|+|v|-1$.

Taking maxima yields $\ell_{\max}(\Ecan(\varphi))
\le
\ell_{\max}(\Ecan(\psi_1))
+
\ell_{\max}(\Ecan(\psi_2))$. Using the induction hypothesis, $\ell_{\max}(\Ecan(\varphi))
\le
\numOfleaf(\varphi)+\numOfFS(\varphi)$.

Thus the bound holds for all formulas. Since the size
$m=|\varphi|$ counts both leaves and $\ltlFS$ occurrences, $\numOfleaf(\varphi)+\numOfFS(\varphi)\le m$, and therefore $\ell_{\max}(\Ecan(\varphi))\le m$.
\end{proof}

\begin{proposition}\label{prop:merge-bound}
Let $\varphi_1,\varphi_2 \in \LTL_{\ltlFS,\land,\lor,\top,\bot}[\ap]$ and define 
\[
n_i := |\Ecan(\varphi_i)|, 
\quad
\ell_i := \ell_{\max}(\Ecan(\varphi_i))
\quad (i=1,2)
\].

Then,
$|\Ecan(\varphi_1 \land \varphi_2)|
\le
n_1 n_2 \, D(\ell_1-1,\ell_2-1)
\le
n_1 n_2 \, 3^{\ell_1+\ell_2}$.
\end{proposition}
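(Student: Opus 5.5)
The plan is to bound the conjunction case directly from the definition $\Ecan(\varphi_1\land\varphi_2)=\Ecan(\varphi_1)\mergeop\Ecan(\varphi_2)=\bigcup_{u,v} u\mergeop v$. By the union bound, $|\Ecan(\varphi_1\land\varphi_2)|\le\sum_{u\in\Ecan(\varphi_1),\,v\in\Ecan(\varphi_2)}|u\mergeop v|$. There are $n_1n_2$ pairs, so it suffices to show $|u\mergeop v|\le D(|u|-1,|v|-1)\le D(\ell_1-1,\ell_2-1)$. Here $D(a,b)$ denotes the Delannoy number: the number of lattice paths from $(0,0)$ to $(a,b)$ using steps $(1,0)$, $(0,1)$ and $(1,1)$. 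The second inequality uses the monotonicity of $D$ in both arguments, which follows since every path to $(a,b)$ extends to one to $(a+1,b)$ by appending a step.

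For the key step, fix $u$ of length $m$ and $v$ of length $n$. A merged interleaving $e$ is completely determined by its index sequences $(a_i)$ and $(b_j)$. The first letter is forced to be $u_0\cup v_0$, and condition~4 together with the range $1\le a_i,b_j\le \ell-1$ implies that every tail position of $e$ is hit by some $a_i$ or $b_j$. Reading the tail positions $1,\dots,\ell-1$ from left to right, each position is one of three kinds: it carries only the next unused $u_i$ (step $(1,0)$), only the next unused $v_j$ (step $(0,1)$), or both (step $(1,1)$). Since each sequence is strictly increasing, this gives an injective map from $u\mergeop v$ into the Delannoy paths from $(0,0)$ to $(m-1,n-1)$; the letters of $e$ are then read off from the path. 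Hence $|u\mergeop v|\le D(m-1,n-1)$. One detail must be checked: a tail position of $e$ that lies in neither index set would be forced to be $\emptyset$ and not counted. Condition~4 and the indexing exclude such positions, and even if they were allowed, the letter would still be determined, so the injection remains valid.

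For the last inequality I would show $D(a,b)\le 3^{a+b}$ by induction using the recurrence $D(a,b)=D(a-1,b)+D(a,b-1)+D(a-1,b-1)$. Each term on the right is at most $3^{a+b-1}$, so $D(a,b)\le 3\cdot 3^{a+b-1}=3^{a+b}$, with $D(0,b)=D(a,0)=1$ as base cases. Alternatively, a path has at most $a+b$ steps, each chosen from three types. Since $D(\ell_1-1,\ell_2-1)\le 3^{\ell_1+\ell_2-2}\le 3^{\ell_1+\ell_2}$, the bound follows.

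The main obstacle is making the encoding by Delannoy paths rigorous. In particular, I must argue that $e$ has no padding positions and that the "unique $j$" clauses in the definition mean a merged position consumes exactly one letter from each word. Once that is in place, everything else is counting. The degenerate cases $\ell_i=1$, where the merge is a single word, and empty canonical sets, as for $\bot$, which give $0$, are trivially consistent with the bound.
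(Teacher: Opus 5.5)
Your proposal is correct and follows the paper's proof essentially step for step (union bound over the $n_1n_2$ pairs, encoding merged interleavings by Delannoy paths, monotonicity of $D$, and $D(a,b)\le 3^{a+b}$). It is in fact slightly more accurate in claiming only $|u\mergeop v|\le D(|u|-1,|v|-1)$: the paper asserts equality, which fails when distinct paths give the same word, e.g.\ for $u=v=\emptyset\{p\}$ the paths $(1,0)(0,1)$ and $(0,1)(1,0)$ both yield $\emptyset\{p\}\{p\}$. You should drop your fallback remark that padding positions outside $\{a_i\}\cup\{b_j\}$ would not hurt injectivity---if such positions were allowed, the path map would not be injective and $u\mergeop v$ would be infinite---so your argument genuinely rests on reading the definition as saying that the positions of $e$ are exactly $\{0\}\cup\{a_i\}\cup\{b_j\}$, which is also what the paper tacitly assumes (cf.\ $|e|\le|u|+|v|-1$ in Proposition~\ref{prop:length-bound}).
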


\begin{proof}
By definition, $\Ecan(\varphi_1 \land \varphi_2)
=
\Ecan(\varphi_1) \mergeop \Ecan(\varphi_2)
=
\bigcup_{u\in\Ecan(\varphi_1)}
\bigcup_{v\in\Ecan(\varphi_2)}
(u \mergeop v)$.

Hence, $|\Ecan(\varphi_1 \land \varphi_2)|
\le
\sum_{u\in\Ecan(\varphi_1)}
\sum_{v\in\Ecan(\varphi_2)}
|u \mergeop v|$.

For fixed words $u,v$, merged interleavings correspond to lattice paths from
$(0,0)$ to $(|u|-1,|v|-1)$ with steps $(1,0)$, $(0,1)$, and $(1,1)$.
Thus, $|u \mergeop v| = D(|u|-1,|v|-1)$.
Since $|u|\le \ell_1$ and $|v|\le \ell_2$, monotonicity of Delannoy numbers gives $|u \mergeop v|
\le
D(\ell_1-1,\ell_2-1)$. Therefore $|\Ecan(\varphi_1 \land \varphi_2)|
\le
n_1 n_2\, D(\ell_1-1,\ell_2-1)$. Using the standard bound $D(a,b)\le 3^{a+b}$ yields $D(\ell_1-1,\ell_2-1)
\le
3^{(\ell_1-1)+(\ell_2-1)}
\le
3^{\ell_1+\ell_2}$, which proves the claim.
\end{proof}

\begin{lemma}\label{lem:ecan-size-exp}
Let $\varphi \in \LTL_{\ltlFS,\land,\lor,\top,\bot}[\ap]$ and let $m := |\varphi|$ be the size of its syntax tree. Then there exists a constant $c>0$ such that $|\Ecan(\varphi)| \le 3^{cm}$.
In particular, $|\Ecan(\varphi)| = O(2^m)$.
\end{lemma}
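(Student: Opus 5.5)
We prove the bound by structural induction on $\varphi$, tracking two quantities at once: $n(\varphi):=|\Ecan(\varphi)|$ and $\ell(\varphi):=\ell_{\max}(\Ecan(\varphi))$. Proposition~\ref{prop:length-bound} already gives $\ell(\varphi)\le \numOfFS(\varphi)+\numOfleaf(\varphi)\le |\varphi|$. So it remains to bound $n(\varphi)$. The precise induction hypothesis we will maintain is
\[
n(\varphi)\le 3^{c\,|\varphi|-c'\,\ell(\varphi)}\cdot 3^{c''\ell(\varphi)},
\]
or more simply $n(\varphi)\le 3^{2|\varphi|}$. We check the simpler form first, since it should already close the induction.

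\emph{Base cases.} For $\top,p,\bot$ we have $n\le 1\le 3^{2}$.

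\emph{Case $\ltlFS$.} Prefixing a letter does not change the number of words, so $n(\ltlFS\psi)=n(\psi)\le 3^{2|\psi|}\le 3^{2|\ltlFS\psi|}$.

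\emph{Case $\lor$.} We have $n(\psi_1\lor\psi_2)\le n(\psi_1)+n(\psi_2)\le 2\cdot 3^{2\max(|\psi_1|,|\psi_2|)}$. Since $|\psi_1\lor\psi_2|=|\psi_1|+|\psi_2|+1$ and each $|\psi_i|\ge 1$, this is at most $3^{2|\psi_1\lor\psi_2|}$.

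\emph{Case $\land$.} This is the key case. Proposition~\ref{prop:merge-bound} gives
\[
n(\psi_1\land\psi_2)\le n(\psi_1)\,n(\psi_2)\,3^{\ell(\psi_1)+\ell(\psi_2)}.
\]
By the length bound, $\ell(\psi_i)\le|\psi_i|$. Substituting the induction hypothesis therefore yields
\[
n\le 3^{2|\psi_1|+2|\psi_2|+|\psi_1|+|\psi_2|}=3^{3(|\psi_1|+|\psi_2|)}.
\]
This exceeds $3^{2|\varphi|}$, so the naive exponent $2$ does not propagate: the constant grows multiplicatively with the conjunction depth. This is the main obstacle.

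To fix it, we strengthen the hypothesis so that it separates the contribution of lengths from the contribution of the formula size. We claim
\[
n(\varphi)\le 3^{\,|\varphi|\cdot\ell(\varphi)}\quad\text{or, better,}\quad n(\varphi)\le 3^{\sum \ell\text{ over conjunction nodes}}.
\]
Unfolding the recursion in the $\land$ case shows that $\log_3 n(\varphi)$ is bounded by the sum, over all $\land$-nodes $\psi_1\land\psi_2$ of the syntax tree, of $\ell(\psi_1)+\ell(\psi_2)$. Each such term is at most $|\varphi|$. The $\lor$-nodes contribute at most $\log_3 2$ each, and $\ltlFS$-nodes contribute $0$.

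A naive count then gives $\log_3 n\le m^2$, which is not linear. To obtain a linear bound, we would instead bound the sum by leaf occurrences. The quantity $\ell(\psi_1)+\ell(\psi_2)\le \numOfFS+\numOfleaf$ of the subtree, so each leaf or $\ltlFS$ node is charged once per $\land$-ancestor. This is linear only when the $\land$-nesting depth is bounded.

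Hence the plan: attempt the charging argument first. If it fails to give $O(m)$, we fall back to proving $|\Ecan(\varphi)|\le 3^{c\,m}$ via a direct global count. Every word in $\Ecan(\varphi)$ is determined by the following data:
\begin{enumerate}
\item a choice of disjunct at each $\lor$-node, giving at most $2^m$ choices;
\item a merge pattern that identifies which leaf/$\ltlFS$ occurrences share a position and their order, i.e.\ an ordered set partition of at most $m$ items.
\end{enumerate}
The ordered set partitions are constrained to be consistent with the tree. We would then try to encode such a pattern by a word of length $O(m)$ over a constant alphabet, mirroring the Delannoy step encoding with $3$ step types. Each item contributes one step per $\land$-merge it actually participates in; if this amortizes, the total count is $3^{O(m)}$. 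Proving this amortization is exactly the step we expect to be hardest.
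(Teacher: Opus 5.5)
You never close the $\land$ case. The charging argument only gives $\log_3 n = O(m^2)$. The fallback global count ends in an amortization that you leave unproved, and that amortization is false.

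Merge patterns are not $2^{O(m)}$ in number. Take $\bigwedge_{i=1}^k \ltlFS p$, which has $m=3k-1$. Every ordered set partition of the $k$ conjuncts is consistent with the tree, so there are at least $k!$ patterns, and tree-consistency gives you no savings. Any count of disjunct choices times merge patterns is therefore superexponential. With distinct propositions $p_1,\dots,p_k$, the $k!$ merge-free orderings even give $k!$ pairwise distinct words in $\Ecan(\bigwedge_{i=1}^k \ltlFS p_i)$. Hence no constant $c$ independent of $\ap$ can work.

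Your diagnosis of the obstacle is accurate. It is exactly where the paper's own induction breaks: its $\land$ step turns the hypothesis $3^{cm_i}$ into $3^{(c+1)m}$ and then ``increases $c$ if necessary''. That does not close the induction, because $c$ would have to grow with the conjunction depth.

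The missing idea is to count distinct words rather than derivations, using that $\ap$ is fixed and $\Ecan(\varphi)$ is a set. By Proposition~\ref{prop:length-bound}, every word of $\Ecan(\varphi)$ has length between $1$ and $m$ over the alphabet $2^{\ap}$. So with $h:=|\ap|$,
\[
|\Ecan(\varphi)| \;\le\; \sum_{\ell=1}^{m} 2^{h\ell} \;<\; 2^{hm+1} \;\le\; 3^{(h+1)m}.
\]
This is the lemma with $c=h+1$: distinct merge patterns simply collapse onto the same word, and no induction over $\land$ is needed. By the example above, the dependence of $c$ on $|\ap|$ is unavoidable.

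You should also not try to prove the ``in particular'' clause $|\Ecan(\varphi)|=O(2^m)$, because it fails for $h\ge 9$. Take $j$ copies of each $\ltlFS p_i$, giving $N=hj$ conjuncts and $m=3N-1$. The merge-free arrangements then yield at least $(hj)!/(j!)^h \ge h^N/(N+1)^{h-1}$ distinct words. Since $h^{1/3}>2$, this count is not $O(2^m)=O(8^{N})$.
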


\begin{proof}
We prove by structural induction on $\varphi$ that $|\Ecan(\varphi)| \le 3^{cm}$
for some constant $c>0$.

\textbf{Base cases.}
If $\varphi \in \{p,\top,\bot\}$, then $m=1$ and $|\Ecan(\varphi)| \le 1 \le 3^c$.

\textbf{$\ltlFS$ case.}
Let $\varphi=\ltlFS\psi$ and let $m_1:=|\psi|$. Then $m=m_1+1$. Since $\Ecan(\varphi)=\set{\emptyset}\cdot\Ecan(\psi)$, we have $|\Ecan(\varphi)|=|\Ecan(\psi)|$. By the induction hypothesis, $|\Ecan(\varphi)|
=
|\Ecan(\psi)|
\le
3^{cm_1}
\le
3^{cm}$.

\textbf{$\lor$ case.}
Let $\varphi=\psi_1\lor\psi_2$ and let $m_i:=|\psi_i|$ for $i\in\{1,2\}$. Then
$m=m_1+m_2+1$. Since $\Ecan(\varphi)=\Ecan(\psi_1)\cup\Ecan(\psi_2)$,
we obtain $|\Ecan(\varphi)|
\le
|\Ecan(\psi_1)|+|\Ecan(\psi_2)|$.
By the induction hypothesis, $|\Ecan(\varphi)|
\le
3^{cm_1}+3^{cm_2}
\le
2\cdot 3^{c(m_1+m_2)}
\le
3^{cm}$, for sufficiently large $c$.

\textbf{$\land$ case.}
Let $\varphi=\psi_1\land\psi_2$ and let $m_i:=|\psi_i|$ for $i\in\{1,2\}$. Then
$m=m_1+m_2+1$. By Proposition~\ref{prop:merge-bound} and Proposition~\ref{prop:length-bound}, $|\Ecan(\varphi)|
\le
|\Ecan(\psi_1)|\,|\Ecan(\psi_2)|\,3^{m_1+m_2}$.
Using the induction hypothesis, $|\Ecan(\varphi)|
\le
3^{cm_1}\cdot 3^{cm_2}\cdot 3^{m_1+m_2}
=
3^{(c+1)(m_1+m_2)}
\le
3^{(c+1)m}$. Increasing $c$ if necessary completes the induction.
\end{proof}

\medskip

Having bounded the size of the canonical set, the focus shifts to the
dual construction.
Since the dual is defined recursively over the canonical examples,
its size depends both on the number and the length of these words,
as well as on the size of the alphabet.

\begin{proposition}
Let \(A\) be a finite set of nonempty finite words over \(2^{\ap}\). Write
$h:=|\ap|$,
$ k:=\sum_{w\in A}|w|$.
Then
\(
|\Dual^+(A)| \le 2^{hk}
\)
and
\(
|\Dual(A)| \le 2^{h(k+1)}.
\)
\end{proposition}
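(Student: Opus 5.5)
We argue by strong induction on $k=\sum_{w\in A}|w|$, following the recursive definition of $\Dual^+$, and then derive the bound for $\Dual$ from it in one further unfolding step.

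\textbf{Base cases.} If some word of $A$ is empty, then $\Dual^+(A)=\emptyset$ and there is nothing to prove. If all words of $A$ have length $1$, then $\Dual^+(A)$ consists of the single word $\ExclWord(\sigma_1,\ldots,\sigma_k)$, so $|\Dual^+(A)|=1\le 2^{hk}$.

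\textbf{Inductive step for $\Dual^+$.} Suppose some $w_i\in A$ has length at least $2$. By definition, $\Dual^+(A)$ is a union over letters $\sigma\subseteq\ap$ with $\sigma\nsubseteq B_j$ for all $j$, so there are at most $2^h$ admissible letters. For each such $\sigma$, the union contributes at most $|\Dual^+(\widehat A(\sigma))|$ words, where $\widehat A(\sigma)=\{\widehat{w_1}(\sigma),\ldots,\widehat{w_k}(\sigma)\}$.

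The key point is that the total length strictly decreases. The admissibility condition says $\sigma\nsubseteq B_j$ for every $B_j\in\Excl(\sigma_1,\ldots,\sigma_k)$. This forces $\sigma_i\subseteq\sigma$ for at least one $i$: otherwise we could choose $p_i\in\sigma_i\setminus\sigma$ for each $i$, and then $\sigma\subseteq\ap\setminus\{p_1,\ldots,p_k\}$, which is one of the $B_j$. (If some $\sigma_i=\emptyset$, then $\Excl$ is empty and every $\sigma$ trivially has $\sigma_i\subseteq\sigma$.) Hence at least one word is pruned, and $k'=\sum|\widehat{w_j}(\sigma)|\le k-1$.

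By the induction hypothesis, and since the bound is monotone in $k'$,
\[
|\Dual^+(A)|\le 2^h\cdot 2^{h(k-1)}=2^{hk}.
\]

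\textbf{Bound for $\Dual$.} Definition~\ref{def:dual} gives $\Dual(A)=\bigcup_{\sigma\subseteq\ap}\sigma\cdot\Dual^+(\widehat A(\sigma))$. Pruning never increases total length, so $\sum|\widehat{w_j}(\sigma)|\le k$. The multiset $\widehat A(\sigma)$ may contain the empty word, but then $\Dual^+$ is empty and the bound holds trivially. Therefore
\[
|\Dual(A)|\le 2^h\cdot 2^{hk}=2^{h(k+1)}.
\]

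The only delicate point is the strict-decrease argument, which relies on the precise definition of $\Excl$ (including the degenerate case of an empty $\sigma_i$). The remainder of the proof is direct counting.
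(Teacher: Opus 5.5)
Your proof is correct and follows essentially the same route as the paper: strong induction on the total length $k$ for $\Dual^+$, with at most $2^h$ admissible letters each leading to a residual set of total length at most $k-1$, followed by one unfolding step with total length at most $k$ for $\Dual$. Your derivation that admissibility ($\sigma\nsubseteq B_j$ for all $j$) forces $\sigma_i\subseteq\sigma$ for some $i$ is more precise than the paper's, which cites only $\sigma\notin\Excl(\sigma_1,\ldots,\sigma_n)$; that weaker condition does not by itself yield the strict decrease.
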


\begin{proof}
We first prove by induction on
\(
k=\sum_{w\in A}|w|
\)
that
$
|\Dual^+(A)| \le 2^{hk}.
$

\textbf{Base case.}
Assume \(k=0\). Then some word in \(A\) is empty, and by definition
$
\Dual^+(A)=\emptyset.
$
Hence
$
|\Dual^+(A)|=0\le 1=2^{h\cdot 0}.
$

\textbf{Inductive step.}
Assume the claim holds for all finite sets \(A'\) of finite words such that
$
\sum_{w\in A'}|w|<k,
$
and let \(A=\{w_1,\dots,w_n\}\) satisfy
$
\sum_{i=1}^n |w_i|=k.
$
If some \(w_i=\varepsilon\), then again \(\Dual^+(A)=\emptyset\), so the claim is immediate.

Otherwise, if all words have length \(1\), then by definition \(\Dual^+(A)\) is a single \(\omega\)-word, and therefore
$
|\Dual^+(A)|=1\le 2^{hk}.$

It remains to consider the case where no word is empty and at least one word
has length at least \(2\).
Write
$w_i=\sigma_i\cdot w_i'$ for 
$i=1,\dots,k$,
where \(\sigma_i\in 2^{\ap}\).
Let
$\Excl(\sigma_1,\ldots,\sigma_k)=\{B_1,\ldots,B_t\}.$
Then
\[
\Dual^+(A)
=
\bigcup_{\forall j\in\{1,\ldots,t\}:\ \sigma\nsubseteq B_j}
\ExclWord(\sigma_1,\ldots,\sigma_k)\cdot
\sigma\cdot
\Dual^+\!\bigl(\{\widehat{w_1}(\sigma),\ldots,\widehat{w_k}(\sigma)\}\bigr).
\]
Hence, $|\Dual^+(A)|
\le
\sum_{\sigma\subseteq\ap}
\left|
\Dual^+\!\bigl(\{\widehat{w_1}(\sigma),\dots,\widehat{w_n}(\sigma)\}\bigr)
\right|$.

There are at most \(2^h\) choices of \(\sigma\subseteq\ap\). Now for each such \(\sigma\), since \(\sigma\notin\Excl(\sigma_1,\dots,\sigma_n)\), there exists some \(i\) such that \(\sigma_i\subseteq \sigma\). For this \(i\), the word \(\widehat{w_i}(\sigma)=w_i[1..]\) is shorter than \(w_i\). Thus, $\sum_{j=1}^n |\widehat{w_j}(\sigma)| \le k-1$. 

Therefore, by the induction hypothesis, $\left|
\Dual^+\!\bigl(\{\widehat{w_1}(\sigma),\dots,\widehat{w_n}(\sigma)\}\bigr)
\right|
\le
2^{h(k-1)}$. It follows that $|\Dual^+(A)|
\le
2^h\cdot 2^{h(k-1)}
=
2^{hk}$. This completes the induction for \(\Dual^+(A)\).

\medskip

We now bound \(\Dual(A)\). By definition,
\[
\Dual(A)
=
\bigcup_{\sigma\subseteq\ap}
\sigma\cdot
\Dual^+\!\bigl(\{\widehat{w_1}(\sigma),\dots,\widehat{w_n}(\sigma)\}\bigr).
\]
Thus, $|\Dual(A)|
\le
\sum_{\sigma\subseteq\ap}
\left|
\Dual^+\!\bigl(\{\widehat{w_1}(\sigma),\dots,\widehat{w_n}(\sigma)\}\bigr)
\right|$. 

For each \(\sigma\subseteq\ap\), $\sum_{i=1}^n |\widehat{w_i}(\sigma)| \le k$, so by the bound already proved for \(\Dual^+\), $\left|
\Dual^+\!\bigl(\{\widehat{w_1}(\sigma),\dots,\widehat{w_n}(\sigma)\}\bigr)
\right|
\le
2^{hk}$.

Since there are \(2^h\) choices of \(\sigma\subseteq\ap\), we obtain $|\Dual(A)|
\le
2^h\cdot 2^{hk}
=
2^{h(k+1)}$.
\end{proof}

The bounds established so far can now be combined to obtain an overall
estimate on the number of the negative examples derived from a formula.
This yields the desired complexity bound for the full characterization.

\begin{lemma}\label{lem:dual-ecan-bound}
Fix a finite set $\ap$ and
let \(\varphi \in \LTL_{\ltlFS,\land,\lor,\top,\bot}[\ap]\).
Let
$m:=|\varphi|$,
$h:=|\ap|$.
Then
\(
\bigl|\Dual(\Ecan(\varphi))\bigr|
\le
2^{\,O(m2^m)}.
\)
More precisely, there exists a constant \(c>0\) such that
\(
\bigl|\Dual(\Ecan(\varphi))\bigr|
\le
2^{\,h(cm2^m+1)}.
\)
Moreover, each word in $\Dual(\Ecan(\varphi))$ is given by a word expression of size at most $O(mh2^{m+h})$. 
\end{lemma}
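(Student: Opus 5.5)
We combine the previously established bounds. Set $A:=\Ecan(\varphi)$ and $k:=\sum_{w\in A}|w|$. By the preceding proposition, $|\Dual(A)|\le 2^{h(k+1)}$, so it suffices to show $k\le c\,m2^m$ for some constant $c$. We have $k\le |A|\cdot\ell_{\max}(A)$. Proposition~\ref{prop:length-bound} gives $\ell_{\max}(A)\le m$, and Lemma~\ref{lem:ecan-size-exp} gives $|A|=O(2^m)$. Hence $k\le c\,m2^m$, and therefore $|\Dual(\Ecan(\varphi))|\le 2^{h(cm2^m+1)}=2^{O(m2^m)}$ for fixed $h$.

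A caveat: Lemma~\ref{lem:ecan-size-exp} is proved as $3^{cm}$ and only then summarised as $O(2^m)$. If the literal $2^m$ is not justified, I would carry $2^{O(m)}$ through. This yields $|\Dual|\le 2^{h\cdot m\cdot 2^{O(m)}}$, which matches the stated form once the constant is absorbed into the exponent in the same sense as the paper's Remark~\ref{rem:bound}.

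For the size of word expressions, unfold the recursive definition of $\Dual^+$. Each recursive step removes at least one letter from some word, because the chosen $\sigma$ lies outside $\Excl$ and so contains some $\sigma_i$. Hence the recursion depth is at most $k$, and so is the number of concatenated blocks $\ExclWord(\sigma_1,\ldots)\cdot\sigma$. One $\Dual$ step adds a single initial letter. Each $\ExclWord$ block is $(B_1\cdots B_t)^\omega$ with $t\le|\Excl|\le 2^h$ distinct subsets of $\ap$, each written with $O(h)$ symbols, so a block has size $O(h2^h)$. This gives a total size of $O(k\cdot h2^h)=O(m2^m\cdot h2^h)=O(mh2^{m+h})$.

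The main obstacle is this bound on the recursion depth and on the number of blocks per word. The intended argument is the strict decrease of $\sum_i|\widehat{w_i}(\sigma)|$ established in the cardinality proof. I would also check that the $\Excl$ letters are deduplicated, so that $t\le 2^h$ rather than the product of the $|\sigma_i|$; this is automatic because $\Excl$ is a set of subsets of $\ap$.
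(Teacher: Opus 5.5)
Your route is the paper's own proof essentially verbatim: $|\Dual(A)|\le 2^{h(k+1)}$, then $k\le |\Ecan(\varphi)|\cdot\ell_{\max}$ with $\ell_{\max}\le m$, and recursion depth at most $k$. The weak step is exactly the one you flagged, but your patch does not close it. The input $|\Ecan(\varphi)|=O(2^m)$ is not just unproved; it is false.

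Take $\psi_N:=\ltlFS p_1\land\cdots\land\ltlFS p_N$, of size $3N-1$. Then $\Ecan(\psi_N)$ consists precisely of the words $\emptyset S_1\cdots S_L$ with $(S_1,\ldots,S_L)$ an ordered partition of $\{p_1,\ldots,p_N\}$. So it has at least $N!$ elements and exceeds $3^{cm}$ for every fixed $c$. Indeed, the inductive proof of Lemma~\ref{lem:ecan-size-exp} only closes the $\land$ case by silently replacing $c$ with $c+1$.

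Fixing $\ap$ does not help. Let $\ap=\{p_1,\ldots,p_{20},r\}$, $\theta_1:=\psi_{20}$ and $\theta_{j+1}:=\psi_{20}\land\ltlFS(r\land\theta_j)$. The words $\emptyset S_1\cdots S_L\{r\}\,e[1..]$ with $e\in\Ecan(\theta_j)$ lie in $\Ecan(\theta_{j+1})$ and are pairwise distinct, since the first $r$ marks the split. A $20$-element set has about $2.7\cdot 10^{21}>2^{71}$ ordered partitions, so $|\Ecan(\theta_j)|\ge 2^{71j}$ while $|\theta_j|\le 63j$. Hence $|\Ecan|/2^m$ is unbounded and $k=O(m2^m)$ fails.

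Your fallback does not recover the statement either. $m\cdot 2^{O(m)}$ cannot be absorbed into $O(m2^m)$ once the hidden constant exceeds $1$, so $2^{h m 2^{O(m)}}$ is strictly weaker than $2^{h(cm2^m+1)}$.

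What the argument honestly yields for fixed $\ap$ comes from $\ell_{\max}\le m$ alone:
\begin{itemize}
\item $|\Ecan(\varphi)|\le 2^{hm+1}$;
\item hence $|\Dual(\Ecan(\varphi))|\le 2^{h(m2^{hm+1}+1)}$;
\item and word expressions of size $O(mh2^{h(m+1)})$.
\end{itemize}
This is doubly exponential, consistent with the headline of Remark~\ref{rem:bound}, but it is not the lemma's precise bound.

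The paper's proof leans on the same citation, so establishing the stated bounds needs an ingredient that neither you nor the paper supplies. For example, one would need a bound on the recursion depth of $\Dual^+$ that does not go through $k=\sum_{w\in\Ecan(\varphi)}|w|$. Such a bound would control both the count (at most $2^h$ choices of $\sigma$ per level) and the expression size.
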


\begin{proof}
Set $E:=\Ecan(\varphi), \quad k:=\sum_{w\in \Ecan(\varphi)}|w|$

By the bound for \(\Dual\), $|\Dual(E)| \le 2^{h(k+1)}$. It therefore suffices to bound \(k\). 

Since
$k \le n(E)\cdot \ell_{\max}(E)$, by Proposition~\ref{prop:length-bound} we have $\ell_{\max}(E)=\ell_{\max}(\Ecan(\varphi)) \le m$, and by Lemma~\ref{lem:ecan-size-exp} there exists \(c>0\) such that $n(E)=|\Ecan(\varphi)| \le c\,2^m$. Hence $k \le c\,m\,2^m$. Substituting into the bound for \(\Dual(E)\), $\bigl|\Dual(\Ecan(\varphi))\bigr|
\le
2^{\,h(cm2^m+1)}$.

To see why each word in $\Dual(E)$ is given by a word expression of size at most $O(mh2^{m+h})$, note that $\ExclWord(B_1,\ldots,B_l)$ is of size at most $O(h2^h)$ and also at each step of the construction of $\Dual^+(A)$ at least one of the input word's length decreases by 1 and therefore the number of steps is bounded by $k$, and because $k = O(m2^m)$ the bound holds. 
\end{proof}

This completes the analysis of the cardinality of the characterization
and establishes the claimed complexity bounds.

We are now ready to establish the bounds stated in Remark~\ref{rem:bound}. The claimed bounds follow directly from the estimates established in \cref{lem:dual-ecan-bound,lem:ecan-size-exp}.

\section{Proofs for Section~\ref{sec:schematic-examples}}

The remainder of this section lifts the finite-characterization framework to a more compact
representation based on regular expressions.
The goal is to replace infinite $\omega$-behavior by finite schematic
descriptions that preserve satisfaction for the fragment under consideration.

To achieve this, every regular transfinite word is associated with
a finite regular expression that abstracts away infinite repetition
while retaining its essential combinatorial structure.

\begin{definition}[Translation of a finitely presented transfinite word into a schematic example]\label{def:schematic-example}
Let $\ap$ be finite, and let $w$ be a transfinite word over
$\Sigma=2^{\ap}$. Assume that $w$ is represented by an expression $e$ built from letters in $\Sigma$ using concatenation and
$\omega$-power.

The \emph{schematic example} associated with $w$, denoted $r(w)$, is the
union-free regular expression over the alphabet $\mathbb{B}(\ap)$ obtained
from $e$ by recursively replacing each letter $\sigma \in 2^{\ap}$ by a Boolean
expression describing exactly $\sigma$, and replacing every occurrence of
$\omega$ by Kleene star:
\[
\begin{aligned}
r(\sigma) &:= \Bigl(\bigwedge_{p\in \sigma} p\Bigr)\wedge
        \Bigl(\bigwedge_{q\notin \sigma}\neg q\Bigr),\\
r(e_1e_2) &:= r(e_1)\,r(e_2),\\
r(e^\omega) &:= (r(e))^* .
\end{aligned}
\]
Thus each letter $\sigma \subseteq\ap$ is viewed as the unique valuation satisfying
$r(\sigma)$.
\end{definition}
A key property of the fragment
$\LTL_{\ltlFS,\land,\lor,\top,\bot}[\ap]$
is that replacing $\omega$-powers by finite repetition
does not affect satisfiability.
This is formalised by the following equivalence.

\begin{lemma}\label{lem:omega-to-star}
Let $w$ be a transfinite word over $2^{\ap}$, and let
$\varphi \in\LTL_{\ltlFS,\land,\lor,\top,\bot}[\ap]$.
Then
\[
w \models \varphi
\quad\iff\quad
\exists u \in L(r(w)) \text{ such that } u \models \varphi .
\]
\end{lemma}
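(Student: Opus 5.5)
Both directions go through the finite-witness characterization of Lemma~\ref{lem:finite-subword-characterization} together with monotonicity (Lemma~\ref{lem:monotonicity}). The link between $w$ and the finite words in $L(r(w))$ is a pair of homomorphism facts, each proved by induction on the word-expression $e$ denoting $w$:
\begin{enumerate}
\item[(a)] every $u\in L(r(w))$ satisfies $u\leq_{\hom} w$;
\item[(b)] every finite $u'$ with $u'\leq_{\hom} w$ satisfies $u'\leq_{\hom} u$ for some $u\in L(r(w))$.
\end{enumerate}
In both we use that $r(\sigma)$ is satisfied only by the letter $\sigma$, so $L(r(w))$ consists exactly of the words obtained from $e$ by replacing each $\omega$-power by some finite power.

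\emph{Proof of (a).} We prove the stronger statement that there is a strictly increasing label-preserving map $h$ with $h(0)=0$ from $u$ into $w$. For a letter this is the identity. For $e_1e_2$, with $u=u_1u_2$, we combine $h_1$ with $|e_1|+h_2$. For $e^\omega$, with $u=u_1\cdots u_n$ and $u_i\in L(r(e))$, we map $u_i$ into the $i$-th copy of $e$ via $\alpha\cdot(i-1)+h_i$, where $\alpha=|e|$.

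\emph{Proof of (b).} Let $g$ witness $u'\leq_{\hom} w$; its image is a finite set $P$ of positions containing $0$. We prove by induction on the expression the stronger claim: for every finite set $P$ of positions of $w_e$, there is $u\in L(r(e))$ and a strictly increasing label-preserving map from $P$ into $u$ sending $\min P$ to $0$ whenever $0\in P$. Labels are preserved exactly, since the letters are copied.
\begin{itemize}
\item For a letter the claim is trivial.
\item For a concatenation, split $P$ into the part below $|e_1|$ and the shifted rest, then concatenate the two resulting words.
\item For $e^\omega$, only finitely many copies $i_1<\dots<i_n$ of $e$ meet $P$, the largest being $N=i_n$. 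Take $u$ to be $N+1$ unrollings $u_{(0)}\cdots u_{(N)}$. Each $u_{(j)}$ is chosen by the induction hypothesis for the trace of $P$ in copy $j$; for copies not meeting $P$, choose any word of $L(r(e))$, which is nonempty.
\end{itemize}
Composing with $g$ gives $u'\leq_{\hom}u$. Throughout, the condition $h(0)=0$ is maintained because the first position of every unrolling starts at the first position of $e$.

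\emph{($\Rightarrow$).} Suppose $w\models\varphi$. Lemma~\ref{lem:finite-subword-characterization} gives a finite $u'\models\varphi$ with $u'\leq_{\hom}w$. By (b) there is $u\in L(r(w))$ with $u'\leq_{\hom}u$, and monotonicity yields $u\models\varphi$.

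\emph{($\Leftarrow$).} Suppose $u\in L(r(w))$ and $u\models\varphi$. By (a), $u\leq_{\hom}w$, so monotonicity yields $w\models\varphi$.

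The main obstacle is (b) in the $\omega$-power case: we must choose the finite unrolling simultaneously for all nested stars while keeping every embedding strictly increasing. The plan is to do the bookkeeping through the strengthened induction hypothesis quantified over finite position sets $P$. One subtlety is empty-star choices: copies not touched by $P$ are either kept or dropped freely, and dropping is harmless because $P$ is only embedded, not required to be surjective.
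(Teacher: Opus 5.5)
\textbf{The gap is in claim (a).} It cannot be closed as stated, because direction ($\Leftarrow$) of the lemma is itself false for some presentations. You assert a strictly increasing label-preserving $h$ with $h(0)=0$. In the concatenation case, however, $u_1$ can be $\varepsilon$: a leading $\omega$-power unrolled zero times, which Kleene star allows. Then $h(0)=|e_1|+h_2(0)\neq 0$, and the same happens in the $\omega$-power case when $u_1=\varepsilon$.

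Concretely, take $\ap=\{p,q\}$ and $w$ presented by $\{p\}^\omega\cdot\{q\}$. Then $L(r(w))=\{\{p\}^n\{q\}\mid n\ge 0\}$ contains the one-letter word $\{q\}$, which satisfies $q$. But $w[0]=\{p\}$, so $w\not\models q$. The paper's proof makes the identical unjustified step (``Hence $u$ homomorphically embeds into $w$''). What your induction actually yields is only $u\leq_{\mathrm{hom}^+}w$. That is not enough here, since atoms are read at position $0$ and Lemma~\ref{lem:monotonicity} needs $h(0)=0$.

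\textbf{Repair.} You must change the hypothesis. One option is to require the presentation to have the form $\sigma\cdot e'$ with a plain first letter. Then $u=\sigma u'$, and the weak embedding of $u'$ into $w_{e'}$ that your argument does give suffices. The other option is to translate $e^\omega$ as $r(e)\,r(e)^*$, so that every word of every $L(r(e))$ is nonempty. Then your induction for (a) goes through verbatim, and (b) is essentially unchanged. The first option costs nothing for Theorem~\ref{thm:finite-char-simple-schematic}: all negative examples in $\Dual(\Ecan(\varphi))$ have the form $\sigma\cdot\Dual^+(\cdots)$, and the positive ones are finite.

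\textbf{Direction ($\Rightarrow$).} This part of your proof is correct and takes a different route from the paper. The paper inducts on $\varphi$: it joins unfoldings for $\wedge$ and prefixes an unfolding of some presentation of $w[\beta..]$ for $\ltlFS$. You instead reduce to a single finite homomorphic witness and prove the word-level fact (b) by induction on the expression. This handles nested $\omega$-powers cleanly. It also avoids the paper's tacit assumption that the presentations of $w$ and of its suffixes are compatible. For the finite witness, Lemma~\ref{lem:canonical-set} is the safer citation, since the paper's proof of Lemma~\ref{lem:finite-subword-characterization} assumes witnesses at finite positions.
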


\begin{proof}
We prove the two directions separately.

For the implication from right to left, let $u\in L(r(w))$ and assume
$u\models \varphi$.
By construction, $u$ is obtained from a regular presentation of $w$ by
replacing each occurrence of $\omega$-power by a finite number of repetitions.
Hence $u$ homomorphically embeds into $w$.
Since formulas in
$\LTL_{\ltlFS,\land,\lor,\top,\bot}[\ap]$
are monotone with respect to homomorphic embedding, it follows that
$w\models \varphi$.

For the converse implication, we prove by induction on the structure of
$\varphi$.

\medskip
The cases $\top$ and $\bot$ are immediate.
For an atomic proposition $p$, if $w\models p$, then $p$ holds at the first
position of $w$. Choosing any finite unfolding of $w$ that preserves this first
position yields some $u\in L(r(w))$ with $u\models p$.

If $\varphi=\psi\vee\chi$ and $w\models\psi\vee\chi$, then either
$w\models\psi$ or $w\models\chi$.
The induction hypothesis applied to the satisfied disjunct yields
$u\in L(r(v))$ with $u\models\psi$ or $u\models\chi$, and therefore
$u\models\psi\vee\chi$.

If $\varphi=\psi\wedge\chi$ and $w\models\psi\wedge\chi$, then
$w\models\psi$ and $w\models\chi$.
By induction hypothesis there exist
$u_\psi,u_\chi\in L(r(v))$
such that
$u_\psi\models\psi$ and $u_\chi\models\chi$.
Both $u_\psi$ and $u_\chi$ are obtained by choosing finite numbers of
repetitions for the $\omega$-powers occurring in the presentation of $v$.
Choose a single finite unfolding $u\in L(r(v))$ in which each such
$\omega$-power is repeated at least as many times as in both
$u_\psi$ and $u_\chi$.
Then
$u_\psi \le_{\hom} u$
and
$u_\chi \le_{\hom} u$.
By monotonicity,
$u\models\psi$ and $u\models\chi$,
hence
$u\models\psi\wedge\chi$.

Finally, let $\varphi=\ltlFS\psi$ and assume $v\models\ltlFS\psi$.
Then there exists a position $\beta>0$ such that the suffix $w[\beta..]$
satisfies $\psi$.
Because $w$ is a  transfinite word, the suffix
$w[\beta..]$ is again regular and finitely presented.
By induction hypothesis, there exists
$u' \in L(r(w[\beta..]))$
such that
$u'\models\psi$.
Now $\beta$ is reached after finitely many traversals of the $\omega$-powers
appearing in the presentation of $w$, so there exists a finite prefix $x$
such that $x\cdot u' \in L(r(w))$.
Since $u'$ begins strictly after the first position, we obtain $x\cdot u' \models \ltlFS\psi$.
Thus there exists $u\in L(r(w))$ such that $u\models\ltlFS\psi$.

This completes the proof.
\end{proof}
The lemma establishes that schematic examples faithfully capture
the behavior of transfinite words with respect to the fragment.

The next result shows that replacing words in a sample by their
schematic representations preserves the ability of the sample
to characterize a formula.

\begin{lemma}\label{lem:schematic-correctness}
Let $\varphi \in \LTL_{\ltlFS,\land,\lor,\top,\bot}[\ap]$.
Let $S \subseteq (2^\ap)^{\mathrm{all}} \times \{-,+\}$ be a sample such that
every positively labeled word is \emph{finite}.
Define
\(
Schematic(S)
\;:=\;
\{(r(w),b) \mid (w,b) \in S\}.
\)
Then
\[
S \text{ fits } \varphi
\quad\iff\quad
Schematic(S) \text{ fits } \varphi.
\]
\end{lemma}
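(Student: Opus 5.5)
We prove the equivalence one labeled example at a time. $S$ fits $\varphi$ exactly when every $(w,b)\in S$ is fitted, and the same holds for $Schematic(S)$. So it suffices to show, for each $(w,b)\in S$, that $\varphi$ fits $(w,b)$ iff $\varphi$ fits $(r(w),b)$. We split into the cases $b=+$ (where $w$ is finite) and $b=-$.

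\emph{Negative examples.} Here $\varphi$ fits $(w,-)$ iff $w\not\models\varphi$. It fits $(r(w),-)$ iff $L(r(w))\cap\sema{\varphi}=\emptyset$, i.e.\ iff no $u\in L(r(w))$ satisfies $\varphi$. This is precisely the contrapositive of Lemma~\ref{lem:omega-to-star}, applied to the word $w$ with its fixed presentation. A finite word is a degenerate regular presentation with no $\omega$-powers, so the lemma covers finite negative words too. Concretely, the right-to-left direction of Lemma~\ref{lem:omega-to-star} holds because every unfolding $u\in L(r(w))$ satisfies $u\leq_{\hom} w$, and Lemma~\ref{lem:monotonicity} then applies.

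\emph{Positive examples.} By hypothesis $w$ is finite, so its presentation $e$ contains no $\omega$-power and is just a concatenation of letters. Then $r(w)$ is a concatenation of Boolean expressions $r(\sigma)$, each satisfied by exactly the single valuation $\sigma$. Hence $L(r(w))=\{w\}$. The condition $L(r(w))\subseteq\sema{\varphi}$ therefore reduces to $w\models\varphi$, which is exactly what it means for $\varphi$ to fit $(w,+)$.

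The main obstacle is the positive case, which explains the finiteness hypothesis. For a transfinite positive example, fitting $(r(w),+)$ would demand that \emph{all} unfoldings satisfy $\varphi$. Lemma~\ref{lem:omega-to-star} gives only the existential direction, so the equivalence could fail. The other delicate point is that $r(w)$ depends on the chosen presentation $e$ rather than on $w$ alone. I would fix one presentation per example, and note that Lemma~\ref{lem:omega-to-star} holds for any presentation: its forward direction uses only that each finite unfolding embeds into $w$ and that positions of $w$ are reached after finitely many unfoldings. With these observations the lemma follows directly from the example-wise equivalence.
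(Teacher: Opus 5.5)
Your decomposition matches the paper's proof: example by example, $L(r(w))=\{w\}$ for the finite positive examples, and Lemma~\ref{lem:omega-to-star} for the negative ones. However, the step you treat as unproblematic is false, and the paper's proof has the same defect because it simply cites Lemma~\ref{lem:omega-to-star}.

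\textbf{The gap.} You claim that every $u\in L(r(w))$ satisfies $u\leq_{\mathrm{hom}} w$ for \emph{any} presentation of $w$. Kleene star admits zero iterations, as the paper itself relies on in Example~\ref{ex:schematic-examples}. So if position $0$ of $w$ lies under an $\omega$-power of the chosen presentation, some unfolding deletes it. The natural order-preserving map from $u$ into $w$ is then only a weak homomorphism ($h(0)\neq 0$), and Lemma~\ref{lem:monotonicity} does not apply.

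\textbf{Counterexample.} Let $\ap=\{p\}$, $\varphi=p$, and let $w=\emptyset^\omega\cdot\{p\}$ be presented as $(\emptyset)^\omega\cdot\{p\}$. Then $w\not\models p$, so $\varphi$ fits $(w,-)$. But $r(w)=(\neg p)^*\cdot p$, and the zero-iteration unfolding $\{p\}$ lies in $L(r(w))\cap\sema{p}$, so $\varphi$ does not fit $(r(w),-)$. Hence, for $S=\{(w,-)\}$, the left-to-right implication of the lemma fails. The right-to-left direction of Lemma~\ref{lem:omega-to-star} fails for this presentation as well. Your remark that Lemma~\ref{lem:omega-to-star} ``holds for any presentation'' is exactly where the argument breaks.

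The other direction of Lemma~\ref{lem:omega-to-star} is fine for every presentation: if $w\models\varphi$, some $u\in L(r(w))$ satisfies $\varphi$, because you can always unfold at least once. So only the ``$\Rightarrow$'' half of the statement, on negative examples, is affected.

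\textbf{Repair.} Restrict to presentations whose first letter lies under no $\omega$-power. Every finitely presentable word has one. Since $\alpha+\alpha\cdot\omega=\alpha\cdot\omega$, the expressions $(A)^\omega$ and $A\cdot(A)^\omega$ denote the same word. So repeatedly rewrite the topmost $\omega$-power on the leftmost branch in this way. Each rewrite removes one $\omega$-node from that branch, so the process terminates with an expression that starts with a letter outside all $\omega$-powers.

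For such presentations, every $u\in L(r(w))$ keeps that first letter at position $0$. The natural map is then a genuine homomorphism with $h(0)=0$, and your argument goes through verbatim. This restriction costs nothing for Theorem~\ref{thm:finite-char-simple-schematic}: every word of $\Dual(\Ecan(\varphi))$ has the form $\sigma\cdot v'$, with a leading single letter outside all $\omega$-powers.
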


\begin{proof}
($\Rightarrow$)
Assume that \(S\) fits \(\varphi\), and let \((w,b)\in S\).

If \(b=+\), then \(w \models \varphi\). Since \(w\) is finite, we have
\(L(r(w))=\{w\}\). Hence there exists a word in \(L(r(w))\) satisfying
\(\varphi\), namely \(w\) itself, so \((r(w),+)\) fits \(\varphi\).

If \(b=-\), then \(w \not\models \varphi\). Suppose toward a contradiction that
there exists \(u\in L(r(w))\) such that \(u \models \varphi\). By
Lemma~\ref{lem:omega-to-star}, this would imply \(w \models \varphi\), a contradiction.
Hence no word in \(L(r(w))\) satisfies \(\varphi\), and therefore \((r(w),-)\)
fits \(\varphi\).

\smallskip
($\Leftarrow$)
Assume that \(Schematic(S)\) fits \(\varphi\), and let \((w,b)\in S\).

If \(b=+\), then \((r(w),+)\) fits \(\varphi\), so there exists \(u\in L(r(w))\)
such that \(u \models \varphi\). Since \(w\) is finite, \(L(r(w))=\{w\}\), and
thus \(w \models \varphi\).

If \(b=-\), then \((r(w),-)\) fits \(\varphi\), so no word in \(L(r(w))\)
satisfies \(\varphi\). By Lemma~\ref{lem:omega-to-star}, this implies
\(w \not\models \varphi\).

Thus \(S\) fits \(\varphi\).
\end{proof}

Combining the finite characterization result for
$\LTL_{\ltlFS,\land,\lor,\top,\bot}$ with the schematic translation
yields a compact representation based on schematic examples.

\thmFinitecharacterizationSimpleSchematic*

\begin{proof}
It is known that every formula
$\varphi \in \LTLfin_{\ltlFS,\land,\lor,\top,\bot}[\ap]$
admits a finite characterization by a sample $S$ consisting of
positively labeled finite words and negatively labeled $\omega$-block words.

Let $S' := Schematic(S)$ be the corresponding schematic sample.
By \cref{lem:schematic-correctness}, $S'$ fits $\varphi$,
and for every formula $\psi$,
$S'$ fits $\psi$ if and only if $S$ fits $\psi$.
Hence $S'$ is a finite characterization of $\varphi$.

Since each schematic example in $S'$ is obtained by replacing
$\omega$-blocks by Kleene-star blocks, all regular expressions
in $S'$ have star-height~$1$.
\end{proof}

\paragraph*{Note.}
\Cref{lem:schematic-correctness} does \emph{not} extend to full LTL.
Consider the formula
\[
\varphi \;:=\; \neg\bigl(\ltlG(p \rightarrow \ltlX p)\bigr).
\]
The $\omega$-word $\{p\}^{\omega}$ is a negative example for $\varphi$, since
$\{p\}^{\omega} \models \ltlG(p \rightarrow \ltlX p)$.
However, for every $i \in \mathbb{N}$, the finite word $\{p\}^i$ satisfies $\varphi$,
as the formula $\ltlG(p \rightarrow \ltlX p)$ is violated at the final position.
Consequently,
\[
\{p\}^{\omega} \not\models \varphi
\quad\text{but}\quad
\forall i \in \mathbb{N},\ \{p\}^i \models \varphi.
\]
Thus, replacing $\omega$-block words by their schematic star-unfoldings
does not preserve satisfaction in full LTL, showing that
\cref{lem:schematic-correctness} crucially relies on the fragment
$\LTL_{\ltlFS,\land,\lor,\top,\bot}[\ap]$.

\section{Proofs for Section~\ref{sec:characteristic-samples}}

\thmUniqueImpliesCharacteristic*

\begin{proof}
    Given a formula $\psi$, the teacher outputs the set of examples uniquely characterizing $\psi$. Given a set $E$ the learner looks for the smallest formula $\psi$ that fits $E$ (assuming some arbitrary enumeration of all formulas) and outputs it.
\end{proof}

\thmLTLcharsamples*
\begin{proof}
The teacher and learner agree on some enumeration of all LTL formulas (say in shortlex order).\footnote{The shortlex order compares strings first by length and then lexicographically: shorter strings precede longer ones, and among strings of equal length, the lexicographic order is used.}
The teacher, given a formula $\psi_j$, constructs the characteristic sample for it  by comparing it to all formulas $\psi_i$ for $i<j$ and adding to the sample an example $x\in\sema{\psi_i}\oplus \sema{\psi_j}$ with label $+$ iff $x\in\sema{\psi_j}$. The learner, given  a sample $E$ looks for the smallest $i$ for which $\psi_i$ fits $E$ and returns it. Note that if $E_j$ is the characteristic sample for $\psi_j$ and $E'\supseteq E_j$ extends $E_j$ with additional examples that are consistent with $\psi_j$, the learner on input $E'$ will still output $\psi_j$ since no formula $\psi_i$ for $i<j$ fits $E'$ and all examples in $E'$ are consistent with $\psi_j$.
\end{proof}

\propDFAlowerbound*
\begin{proof}[Proof sketch]
    Consider the family of languages where 
    \[L_n = \{ x\sharp w \sharp y \$ w \sharp ^\omega \mid x,y\in\{0,1,\sharp\}^*, w\in\{0,1\}^n \}\]
    This language without the $\sharp^\omega$ suffix was shown to require $2^{2^{n}}$ states~\cite{ChandraS76}. Intuitively since the DFA needs to remember all words $w\in\{0,1\}^n$ that appear before the $\$$ sign. Since the periodic part plays no interesting role in this language the same is true for the DFA for $(L_n)_{\$}$. %

    On the other hand it can be stated
    in LTL by a conjunction of two formulas, one checking there is only one $\$$ and a tail of $\sharp$'s and the other one checking that there exists an index where $\sharp$ holds and the $i$-th letter afterwards is $0$  (resp. $1$) if and if the $i$-th letter after $\$$ is  $0$  (resp. $1$) for $1\leq i \leq n$.
\end{proof}

\end{document}